\newtheoremstyle{note}
{3pt}
{1pt}
{}
{\parindent}
{\itshape}
{:}
{.5em}
{\thmname{#1}\thmnumber{ #2}\thmnote{\thmnote{ (#3)}}}
\theoremstyle{note}
\newtheorem{theorem}{Theorem}
\newtheorem{lemma}{Lemma}
\newtheorem{remark}{Remark}
\newtheorem{definition}{Definition}
\theoremstyle{definition}
\newtheoremstyle{dotless}{}{}{\itshape}{}{\bfseries}{}{ }{}
\theoremstyle{dotless}
\newcommand{\R}{\mathbb{R}}
\newcommand {\aplt} {\ {\raise-.5ex\hbox{$\buildrel<\over{\mbox{\scriptsize $\sim$}}$}}\ }
\providecommand{\abs}[1]{\ensuremath{\left\lvert #1 \right\rvert}}
\DeclareMathOperator{\SNR}{\mathsf{SNR}}
\begin{document}

	\title{Construction of Capacity-Achieving Lattice Codes: Polar Lattices}
	\author{Ling Liu, Yanfei~Yan,
	        Cong~Ling,~\IEEEmembership{Member,~IEEE}
	        and Xiaofu Wu,~\IEEEmembership{Member,~IEEE}
    \thanks{This work was presented in part at ITW 2012 and in part at ISIT 2013.}
	\thanks{Ling Liu, Yanfei Yan and Cong Ling are with the Department of Electrical and Electronic Engineering,
	 Imperial College London, London, UK (e-mails: liuling\_88@pku.edu.cn, yanyanfei@gmail.com, cling@ieee.org).}
	\thanks{Xiaofu Wu is with the Nanjing University of Posts and Telecommunications, Nanjing 210003, China (e-mail: xfuwu@ieee.org).}
}

\maketitle

\begin{abstract}
In this paper, we propose a new class of lattices constructed from polar codes, namely polar lattices, to achieve the capacity $\frac{1}{2}\log(1+\SNR)$ of the additive white Gaussian-noise (AWGN) channel. Our construction follows the multilevel approach of Forney \textit{et al.}, where we construct a capacity-achieving polar code on each level. The component polar codes are shown to be naturally nested, thereby fulfilling the requirement of the multilevel lattice construction. We prove that polar lattices are \emph{AWGN-good}. Furthermore, using the technique of source polarization, we propose discrete Gaussian shaping over the polar lattice to satisfy the power constraint. Both the construction and shaping are explicit, and the overall complexity of encoding and decoding is $O(N\log N)$ for any fixed target error probability.

\end{abstract}

\begin{IEEEkeywords}
AWGN-good lattices, discrete Gaussian shaping, lattice codes, multilevel construction, polar codes.
\end{IEEEkeywords}

%
\IEEEpeerreviewmaketitle
%
\section{Introduction}
%
%
%
%

\color{black}A fast-decodable, structured code achieving the capacity of the power-constrained additive white Gaussian-noise (AWGN) channel is a major goal of communication theory. \color{black} Polar codes, proposed by Ar{\i}kan in \cite{polarcodes}, can provably achieve the capacity of binary memoryless symmetric (BMS) channels. An attempt to construct polar codes for the AWGN channel was given in \cite{Abbe}, based on nonbinary polar codes or on the technique for the multi-access channel. Although coded modulation using polar codes has been investigated in literature \cite{multilevelpolar,Mahdavifar16}, the AWGN channel capacity has not been achieved, to the best of our knowledge.

\color{black}Lattice codes are counterparts of linear codes in the Euclidean space. The existence of lattice codes achieving the Gaussian channel capacity has been well established using the random coding argument \cite{zamir,LingBel13}. In the classical point-to-point channel, lattice codes offer a low-complexity solution compared to Gaussian random codes. More recently, thanks to their rich structures, lattice codes have emerged as a novel framework of coding for multiuser communications, such as compute-and-forward \cite{nazer,6582523} and index coding \cite{Natarajan15}. In many problems of Gaussian multiuser networks, lattice codes demonstrate a clear advantage and outperform best known solutions based on random codes. This is because lattice codes enjoy the benefit of coordination despite the distributed nature of coding in a network. Readers are referred to \cite[Chap. 12]{BK:Zamir} for an extensive overview of the applications of lattice codes to Gaussian networks and their advantages over classical random coding approaches.

\color{black}It is well known that the design of a lattice code consists of two essentially separate problems: AWGN coding and shaping. AWGN coding is addressed by the notion of AWGN-good lattices \cite{Poltyrev,zamir}. Recently, several new lattice constructions with good performance have been introduced \cite{ldpclattice,PietroZB18,ldlc}. On the other hand, shaping takes care of the finite power constraint of the Gaussian channel. Capacity-achieving shaping techniques include Voronoi shaping \cite{zamir} and lattice Gaussian shaping \cite{LingBel13,Forney_Wei_89,Kschischang_Pasupathy}. Despite these significant progresses, an explicit construction of lattice codes achieving the capacity of the Gaussian channel is still open (since this work was completed, we have become aware of the work \cite{PietroZB18} which shows low density Construction-A (LDA) lattices achieve capacity when the signal-to-noise ratio (SNR) $>1$ in magnitude).

In this paper, we settle this open problem by employing the powerful tool of polarization in lattice construction. The novel technical contribution of this work is two-fold:
\begin{itemize}
  \item The construction of polar lattices and the proof of their AWGN-goodness. We follow the multilevel construction of Forney, Trott and Chung \cite{forney6}, where for each level we build a polar code to achieve its capacity. We prove that the subchannels arising from some lattice partition chains are successively degraded, which guarantees that the component polar codes are naturally nested, as required by the multilevel construction. This compares favorably with existing multilevel constructions \cite{ldpclattice}, where extra efforts are needed to nest the component codes.
  \item The Gaussian shaping technique for polar lattices in the power-constrained AWGN channel. This is based on source polarization and may be viewed as inverse source coding. Finally, our scheme is able to achieve the capacity $\frac{1}{2}\log(1+\SNR)$ with low-complexity multistage successive cancellation (SC) decoding for any given SNR. It is worth mentioning that our proposed shaping scheme is not only a practical implementation of lattice Gaussian shaping, but also an improvement in the sense that we successfully remove the restriction $\SNR>e$ in \cite[Theorem 3]{LingBel13}.
\end{itemize}

Overall, both source and channel polarization are employed in the construction, resulting in an integrated approach in the sense that error correction and shaping are performed by one single polar code on each level. Moreover, capacity is achieved with minimum mean-square error (MMSE) lattice decoding. The construction of polar lattices with Gaussian shaping is as explicit as that of polar codes themselves, and the complexity is quasilinear: $O(N\log^2 N)$ for a sub-exponentially vanishing error probability and $O(N\log N)$ for a fixed error probability, respectively.

Further, it is worth pointing out that each aspect may also be of independent interest. AWGN-good lattices have many applications in coding and network information theory, while lattice Gaussian shaping, i.e., generating a Gaussian distribution over a lattice, is useful in lattice-based cryptography as well \cite{gaussianlattice1}. Both theoretical and practical aspects of polar lattices are addressed in this work. We not only prove the theoretical goodness of polar lattices, but also give practical rules for designing these lattices.

\subsection{Relation to Prior Works}

This paper is built on the basis of our prior attempt to build lattices from polar codes \cite{Yan,yan2}, and significantly extends it by employing Gaussian shaping. We are aware of the contemporary and independent work on polar-coded modulation \cite{multilevelpolar}, which follows the multilevel coding approach of \cite{multilevel}. It is known that Forney \textit{et al.}'s multilevel construction is closely related to multilevel coding \cite{multilevel,forney6}. The main conceptual difference between lattice coding and coded modulation is that lattices are infinite and linear in the Euclidean space. The linear structure of lattices is much desired in many emerging applications, e.g., in network information theory for the purpose of coordination \cite{zamir1,nazer}.

This paper may be viewed as an explicit construction of the lattice Gaussian coding scheme proposed in \cite{LingBel13}, where it was shown that Gaussian shaping over an AWGN-good lattice is capacity-achieving. Our approach is different from the standard Voronoi shaping which involves a quantization-good lattice \cite{zamir}. The proposed Gaussian shaping does not require such a quantization-good lattice any more.

The sparse superposition code \cite{sparsecode1,sparsecode2} also achieves the Gaussian channel capacity with polynomial complexity. However, its decoding complexity is considerably higher than that of the polar lattice; moreover, it requires a random dictionary shared by the encoder and decoder, which incurs substantial storage complexity. In comparison, the construction of polar lattices is as explicit as that of polar codes themselves, and the complexity is quasilinear: $O(N\log^2 N)$ for a sub-exponentially vanishing error probability and $O(N\log N)$ for a fixed error probability, respectively.

\subsection{Organization and Notation}

The rest of this paper is organized as follows.
Section II presents the background of lattice codes. In Section III, we construct polar latices based on Forney \textit{et al.}'s approach and prove their AWGN-goodness. In Section IV, we propose Gaussian shaping over the polar lattice to achieve the capacity. Section V gives design examples and simulation results. Section VI concludes the paper.


All random variables (RVs) will be denoted by capital letters. For a set $\mathcal{I}$, $\mathcal{I}^c$ denotes its complement, and $|\mathcal{I}|$ represents its cardinality. Following the notation of \cite{polarcodes}, we denote $N$ independent uses of channel $W$ by $W^N$. By channel combining and splitting, we get the combined channel $W_N$ and the $i$-th subchannel $W_N^{(i)}$. Throughout this paper, we use the binary logarithm, denoted by log, and information is measured in bits. We follow the standard asymptotic notation $%
f\left( x\right) =O\left( g\left( x\right) \right) $ if $\lim\sup_{x\rightarrow
\infty}|f(x)/g(x)| < \infty$.

\section{Background on Lattice Coding}

A lattice is a discrete subgroup of $\mathbb{R}^{n}$ which can be represented by
\begin{eqnarray}
\Lambda=\{ \lambda={B}{x}:{x}\in\mathbb{Z}^{n}\}, \notag\
\end{eqnarray}
where the generator matrix ${B}$ is assumed to be of full rank in this paper. The theta series of $\Lambda$ is defined as
\begin{eqnarray}
\Theta_{\Lambda}(\tau)=\sum_{\lambda\in\Lambda}e^{-\pi\tau\parallel\lambda\parallel^{2}}, \quad \tau>0.
\notag\
\end{eqnarray}
Readers are referred to the text \cite{BK:Zamir} for basic definitions of lattices.

In this work, we are mostly concerned with the block error probability $P_{e}(\Lambda,\sigma^2)$ of lattice decoding. It is the probability $\mathbb{P}\{{x}\notin \mathcal{V}(\Lambda)\}$ that an $n$-dimensional independent and identically distributed (i.i.d.) Gaussian noise vector ${x}$ with zero mean and variance $\sigma^{2}$ per dimension falls outside the Voronoi region $\mathcal{V}(\Lambda)$. For an $n$-dimensional lattice $\Lambda$, the volume of a fundamental region is given by $V(\Lambda)=|\text{det}({B})|$. Define the VNR by
\begin{eqnarray}
\gamma_{\Lambda}(\sigma)\triangleq\frac{V(\Lambda)^\frac{2}{n}}{\sigma^2}. \notag\
\end{eqnarray}
A sequence of lattices $\Lambda^{(N)}$ of increasing dimension $N$ is AWGN-good if, for any fixed VNR greater than $2\pi e$, \[
\lim_{N\rightarrow\infty} P_e(\Lambda^{(N)},\sigma^2) = 0.
\]
It is worth mentioning here that we do not insist on exponentially vanishing error probabilities, unlike Poltyrev's original treatment of good lattices for coding over the AWGN channel \cite{Poltyrev}. This is because a sub-exponential or polynomial decay of the error probability is often good enough.

For $\sigma>0$ and ${c}\in\mathbb{R}^{n}$, the Gaussian distribution of mean ${c}$ and variance $\sigma^{2}$ is defined as
\begin{eqnarray}
f_{\sigma,{c}}({x})=\frac{1}{(\sqrt{2\pi}\sigma)^{n}}e^{-\frac{\parallel {x}-{c}\parallel^{2}}{2\sigma^{2}}}, \notag\
\end{eqnarray}
for all ${x}\in\mathbb{R}^{n}$. For convenience, let $f_{\sigma}({x})=f_{\sigma,{0}}({x})$.

Given a lattice $\Lambda$, we define the $\Lambda$-periodic function as
\begin{eqnarray}
f_{\sigma,\Lambda}({x})=\sum\limits_{\lambda\in\Lambda}f_{\sigma,\lambda}({x})=\frac{1}{(\sqrt{2\pi}\sigma)^{n}}\sum\limits_{\lambda\in\Lambda}e^{-\frac{\parallel {x}-\lambda\parallel^{2}}{2\sigma^{2}}}, \notag\
\end{eqnarray}
for ${x}\in\mathbb{R}^n$.
Note that $f_{\sigma,\Lambda}({x})$ is a probability density if ${x}$ is restricted to a fundamental region $\mathcal{R}(\Lambda)$. It is actually the probability density function (PDF) of the $\Lambda$-aliased Gaussian noise, i.e., the Gaussian noise after the mod-$\mathcal{R}(\Lambda)$ operation \cite{forney6}. When $\sigma$ is small, the effect of aliasing becomes insignificant and the $\Lambda$-aliased Gaussian density $f_{\sigma,\Lambda}({x})$ approaches a Gaussian distribution. When $\sigma$ is large, $f_{\sigma,\Lambda}({x})$ approaches a uniform distribution.
This phenomenon is characterized by the flatness factor, which is defined for $\Lambda$ as \cite{cong2}
\begin{eqnarray}
\epsilon_{\Lambda}(\sigma)\triangleq\max\limits_{{x}\in\mathcal{R}(\Lambda)}\abs{V(\Lambda)f_{\sigma,\Lambda}({x})-1}. \notag\
\end{eqnarray}
It can be interpreted as the maximum variation of $f_{\sigma,\Lambda}({x})$ from the uniform distribution over $\mathcal{R}(\Lambda)$.

We define the \emph{discrete Gaussian distribution} over $\Lambda$
centered at ${c} \in \R^n$ as the following discrete
distribution taking values in ${ \lambda} \in \Lambda$:
\[
D_{\Lambda,\sigma,{c}}({ \lambda})=\frac{f_{\sigma,{c}}({{ \lambda}})}{f_{\sigma,{c}}(\Lambda)}, \quad \forall { \lambda} \in \Lambda,
\]
where $f_{\sigma,{c}}(\Lambda) \triangleq \sum_{{ \lambda} \in
\Lambda} f_{\sigma,{c}}({{ \lambda}})=f_{\sigma,\Lambda}({c})$. Again for convenience, we write $D_{\Lambda,\sigma}=D_{\Lambda,\sigma,{0}}$. Figure \ref{fig:Discrete_Gaussian} illustrates the discrete Gaussian distribution over $\mathbb{Z}^2$. As can be seen, it resembles a continuous Gaussian distribution, but is only defined over a lattice. In fact, discrete and continuous Gaussian distributions share similar properties, if the flatness factor is small.
The discrete Gaussian distribution can also be sampled from a shifted lattice $\Lambda-{c}$:
\[
D_{\Lambda-{c},\sigma}({ \lambda}-{c})=\frac{f_{\sigma}({{ \lambda}}-{c})}{f_{\sigma, {\bf c}}(\Lambda)}, \quad \forall { \lambda} \in \Lambda.
\]
Note the relation $D_{\Lambda-{c},\sigma}({ \lambda}-{c}) = D_{\Lambda,\sigma,{c}}({ \lambda})$, namely, they are a shifted version of each other.

\begin{figure}[t]

\centering\centerline{\epsfig{figure=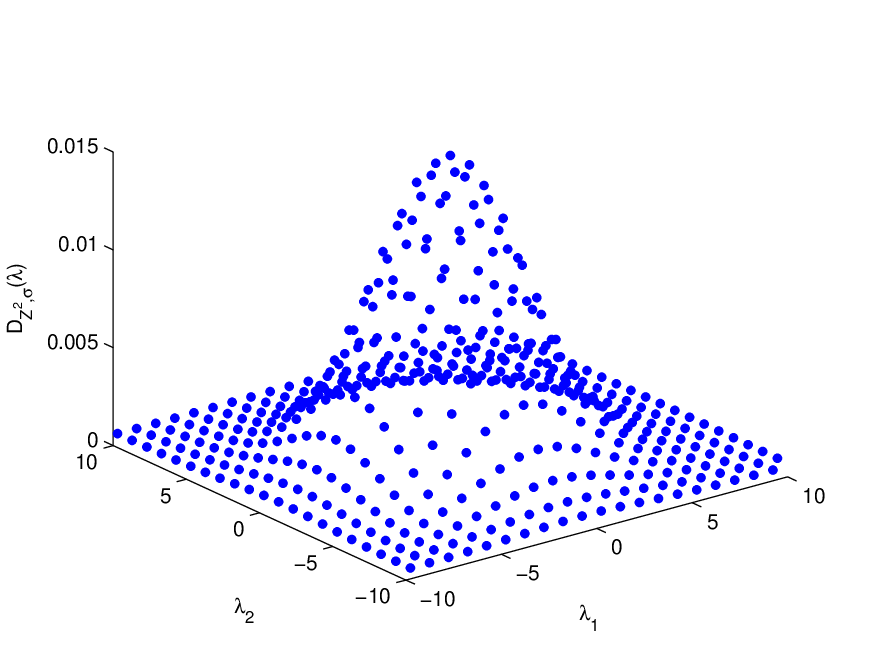,width=8cm}}

\caption{Discrete Gaussian distribution over $\mathbb{Z}^2$. A two-dimensional lattice point is denoted by $\lambda=(\lambda_1,\lambda_2)$.}

\vspace{-0.5cm}

\label{fig:Discrete_Gaussian}
\end{figure}

If the flatness factor is negligible, the discrete
Gaussian distribution over a lattice preserves the capacity of the AWGN channel \cite[Theorem 2]{LingBel13}.

\begin{theorem}[Mutual information of discrete Gaussian distribution \cite{LingBel13}]\label{theorem:capacity}
Consider an AWGN channel ${Y}={X}+{E}$ where the input constellation $X$ has
a discrete Gaussian distribution $D_{\Lambda-{c},\sigma_s}$
for arbitrary ${c} \in \mathbb{R}^n$, and where the variance
of the noise $E$ is $\sigma^2$. Let the average signal power be $P$ so that $\SNR=P/\sigma^2$, and let $\tilde{\sigma}\triangleq \frac{\sigma_s\sigma}{\sqrt{\sigma_s^2+\sigma^2}}$. Then, if
$\varepsilon = \epsilon_{\Lambda}\left(\tilde{\sigma}\right) < \frac{1}{2}$ and $\frac{\pi\varepsilon_t}{1-\epsilon_t}\leq \varepsilon$ where
\[
\varepsilon_t \triangleq
\left\{
  \begin{array}{ll}
    \epsilon_{\Lambda}\left(\sigma_s/\sqrt{\frac{\pi}{\pi-t}}\right), & \hbox{$t \geq 1/e$} \\
    (t^{-4}+1)\epsilon_{\Lambda}\left(\sigma_s/\sqrt{\frac{\pi}{\pi-t}}\right), & \hbox{$0< t < 1/e$}
  \end{array}
\right.
\]
the discrete Gaussian constellation results in mutual information
\begin{equation}\label{eq:lattice-capacity}
I_D \geq \frac{1}{2}\log {(1+\SNR)} - \frac{6\varepsilon}{n}
\end{equation}
per channel use.
\end{theorem}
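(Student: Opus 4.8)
The plan is to use the identity $I(X;Y)=h(Y)-h(Y\mid X)$. Since the channel is $Y=X+E$ with $E\sim f_\sigma$ independent of $X$, we have $h(Y\mid X)=h(E)=\frac n2\log(2\pi e\sigma^2)$ exactly, so the entire statement reduces to the single inequality $h(Y)\ge \frac n2\log\!\bigl(2\pi e(P+\sigma^2)\bigr)-5\varepsilon$. Dividing by $n$ and subtracting $\frac12\log(2\pi e\sigma^2)$ then yields the claimed $I_D\ge\frac12\log(1+\SNR)-\frac{5\varepsilon}{n}$.

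The crucial step is to rewrite the output density exactly as a genuine Gaussian times a near-constant factor. Starting from $p_Y(y)=\sum_{\lambda\in\Lambda}D_{\Lambda-c,\sigma_s}(\lambda-c)\,f_\sigma\!\bigl(y-(\lambda-c)\bigr)$ and applying the elementary two-Gaussian identity $f_{\sigma_s}(v)f_\sigma(y-v)=f_{\sqrt{\sigma_s^2+\sigma^2}}(y)\,f_{\tilde\sigma}\!\bigl(v-\tfrac{\sigma_s^2}{\sigma_s^2+\sigma^2}y\bigr)$ to each term (with $v=\lambda-c$), the lattice sum collapses into a periodized Gaussian and gives
\[
p_Y(y)=f_{\sqrt{\sigma_s^2+\sigma^2}}(y)\cdot\rho(y),\qquad \rho(y)=\frac{f_{\tilde\sigma,\Lambda}\!\bigl(c+\tfrac{\sigma_s^2}{\sigma_s^2+\sigma^2}y\bigr)}{f_{\sigma_s,\Lambda}(c)}.
\]
Now the flatness factor does the work: $V(\Lambda)f_{\tilde\sigma,\Lambda}(\cdot)\in[1-\varepsilon,1+\varepsilon]$ by definition of $\varepsilon=\epsilon_\Lambda(\tilde\sigma)$, and $V(\Lambda)f_{\sigma_s,\Lambda}(c)\in[1-\epsilon_\Lambda(\sigma_s),1+\epsilon_\Lambda(\sigma_s)]$; since $\tilde\sigma<\sigma_s$ and the theta-series formula for $\epsilon_\Lambda$ makes it nonincreasing in $\sigma$, we get $\epsilon_\Lambda(\sigma_s)\le\varepsilon$, hence $\rho(y)\in[\tfrac{1-\varepsilon}{1+\varepsilon},\tfrac{1+\varepsilon}{1-\varepsilon}]$ for every $y$.

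Next I would compute $h(Y)=-\int p_Y\log p_Y=-\int p_Y\log f_{\sqrt{\sigma_s^2+\sigma^2}}-\int p_Y\log\rho$. The first integral equals $\frac n2\log\!\bigl(2\pi(\sigma_s^2+\sigma^2)\bigr)+\frac{\mathbb{E}\|Y\|^2}{2(\sigma_s^2+\sigma^2)}$, and since $\mathbb{E}[E]=0$ and $X$ is independent of $E$ we have $\mathbb{E}\|Y\|^2=nP+n\sigma^2$. The second integral is exactly the relative entropy $D\bigl(p_Y\,\|\,f_{\sqrt{\sigma_s^2+\sigma^2}}\bigr)\ge0$, and it is at most $\log\frac{1+\varepsilon}{1-\varepsilon}\le 5\varepsilon$ (valid for $\varepsilon<\tfrac12$) by the sandwich on $\rho$. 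Assembling everything, with $r:=\frac{P+\sigma^2}{\sigma_s^2+\sigma^2}$ one finds
\[
I_D-\tfrac12\log(1+\SNR)=\tfrac12\bigl(r-1-\log r\bigr)-\tfrac1n D\bigl(p_Y\,\|\,f_{\sqrt{\sigma_s^2+\sigma^2}}\bigr),
\]
and the elementary convexity bound $r-1-\log r\ge0$ together with the divergence bound completes the argument.

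The routine-looking parts conceal two points that need care. One is making the term-by-term Gaussian splitting and the ensuing interchange of summation and integration rigorous, via a dominated-convergence argument resting on the convergence of the theta series. The more substantive one, which I expect to be the main obstacle, is controlling the second moment $\mathbb{E}\|X\|^2$ of the discrete-Gaussian input: this is needed both for the finiteness used above and, more importantly, to pin the realized power $P$ to the design width $\sigma_s^2$ whenever the SNR is prescribed in advance. That estimate is obtained by differentiating the flatness/theta identity — the derivative pulls down a factor $\|\lambda\|^2$ and forces the flatness factor to be re-evaluated at the rescaled width $\sigma_s/\sqrt{\pi/(\pi-t)}$ — which is precisely where the auxiliary quantity $\varepsilon_t$ and the hypothesis $\frac{\pi\varepsilon_t}{1-\varepsilon_t}\le\varepsilon$ enter, the two regimes $t\ge 1/e$ and $t<1/e$ reflecting a coarse-versus-fine correction in that bound.
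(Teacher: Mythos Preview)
The paper does not prove Theorem~\ref{theorem:capacity}; it is quoted verbatim from \cite{LingBel13} as background, so there is no in-paper proof to compare against. Your outline is in fact the argument used in \cite{LingBel13}: write the output density as a true Gaussian of variance $\sigma_s^2+\sigma^2$ times a multiplicative perturbation governed by the flatness factor, then bound the resulting relative entropy. The decomposition $p_Y(y)=f_{\sqrt{\sigma_s^2+\sigma^2}}(y)\,\rho(y)$ you wrote down, the sandwich on $\rho$ via monotonicity of $\epsilon_\Lambda(\cdot)$ (which holds by the dual-lattice form of the flatness factor), and the identity
\[
I_D-\tfrac12\log(1+\SNR)=\tfrac{\log e}{2}\bigl(r-1-\ln r\bigr)-\tfrac1n\,D\!\left(p_Y\,\|\,f_{\sqrt{\sigma_s^2+\sigma^2}}\right)
\]
are all correct (mind the $\log e$ factor, which you dropped). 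Your bound $D\le\log\frac{1+\varepsilon}{1-\varepsilon}\le 5\varepsilon$ for $\varepsilon<\tfrac12$ then finishes the inequality.

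One interesting point: your displayed identity already yields $I_D\ge\tfrac12\log(1+\SNR)-\tfrac{5\varepsilon}{n}$ \emph{without} invoking the $\varepsilon_t$ hypothesis at all, because $r-1-\ln r\ge 0$ unconditionally and your KL bound uses only $\varepsilon=\epsilon_\Lambda(\tilde\sigma)$. You correctly diagnose that the $\varepsilon_t$ condition is there to control the realized power, i.e., to guarantee $P\le\sigma_s^2$ (this is Lemma~1 of \cite{LingBel13}, echoed later in the paper as a remark). That control is operationally essential---one must know in advance that choosing width $\sigma_s$ actually delivers the target $\SNR$---but, as you observed, it is not needed for the inequality once $\SNR$ is defined via the realized power $P$.
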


The statement of Theorem \ref{theorem:capacity} is non-asymptotical, i.e., it can hold even if $n=1$. A lattice $\Lambda$ or its coset $\Lambda-{c}$ with a discrete Gaussian distribution is referred to as a \emph{good constellation} for the AWGN channel if ${\epsilon_{\Lambda}(\tilde{\sigma})}$ is negligible \cite{LingBel13}.

It is further proved in \cite{LingBel13} that the channel capacity is achieved with Gaussian shaping over an AWGN-good lattice and MMSE lattice decoding. To this aim, we use a codebook $L-{c}$, where $L$ is an AWGN-good lattice and ${c}$ is a proper shift. The encoder maps the information bits to points in $L-{c}$, which obey the lattice Gaussian distribution $D_{L-{c},\sigma_s}$. Since the lattice points are not equally probable a priori in the lattice Gaussian coding, we apply maximum-a-posteriori (MAP) decoding. It is proved in \cite{LingBel13} that MAP decoding is equivalent to MMSE lattice decoding
\begin{equation}\label{MAPdecoder}
\hat{{x}}=Q_{L-{c}}\left(\alpha{{y}}\right)
\end{equation}
where $\alpha=\frac{\sigma_s^2}{\sigma_s^2+\sigma^2}$ is asymptotically equal to the MMSE coefficient $\frac{P}{P+\sigma^2}$ and $Q_{L-{c}}$ denotes the minimum Euclidean-distance decoder for shifted lattice $L-{c}$.

\section{Construction of Polar Lattices}
\label{sec:PL}

We now follow Forney {\it et al.}'s multilevel approach \cite{forney6} to construct polar lattices.
Bear in mind that, in order to achieve the capacity of the AWGN channel with the noise variance $\sigma^2$, the concerned noise variance for the AWGN-good lattice is in fact $\tilde{\sigma}^2$ (recall $\tilde{\sigma}\triangleq \frac{\sigma_s\sigma}{\sqrt{\sigma_s^2+\sigma^2}}$), which is the variance of the equivalent noise after MMSE rescaling \cite{LingBel13}.

\subsection{Forney \textit{et al.}'s Construction}

Given a sublattice $\Lambda' \subset \Lambda$, the quotient group $\Lambda/\Lambda'$ induces a partition of $\Lambda$ into equivalence classes modulo $\Lambda'$. We call $\Lambda/\Lambda'$ a lattice partition \cite{forney6}. The order of the partition is denoted by $|\Lambda/\Lambda'|$, which is equal to the number of cosets. If $|\Lambda/\Lambda'|=2$, we call this a binary partition. Similarly, if $\Lambda' \subseteq \Lambda_{r-1} \subseteq \cdots \subseteq\Lambda_{1} \subseteq \Lambda$ for $r \geq 1$ is a chain of lattices with quotients $\Lambda/\Lambda_{1}/\cdots/\Lambda_{r-1}/\Lambda'$, then $\Lambda/\Lambda_{1}/\cdots/\Lambda_{r-1}/\Lambda'$ is called an $n$-dimensional lattice partition chain. For each
partition $\Lambda_{\ell-1}/\Lambda_{\ell}$ ($1\leq \ell \leq r$ with convention $\Lambda_0=\Lambda$ and $\Lambda_r=\Lambda'$), a code $\mathcal{C}_{\ell}$ over $\Lambda_{\ell-1}/\Lambda_{\ell}$
selects a sequence of representatives $a_{\ell}$ for the cosets of $\Lambda_{\ell}$. Consequently, if each partition is a binary partition, the codes $\mathcal{C}_{\ell}$ are binary codes.

Construction D requires a set of nested linear binary codes $\mathcal{C}_{1}\subseteq\mathcal{C}_{2}\cdot\cdot\cdot\subseteq\mathcal{C}_{r}$ \cite{forney6}.
Suppose $\mathcal{C}_{\ell}$ has block length $N$ and the number of information bits $k_{\ell}$  for $1\leq\ell\leq r$. Choose a basis $\mathbf{g}_{1}, \mathbf{g}_{2},\cdots, \mathbf{g}_{N}$  such that $\mathbf{g}_{1},\cdots \mathbf{g}_{k_{\ell}}$ span $\mathcal{C}_{\ell}$. In this work, we focus on the one-dimensional partition chain $\mathbb{Z}/2\mathbb{Z}/\cdot\cdot\cdot/2^{r}\mathbb{Z}$ for the simplicity of presentation. Accordingly, the lattice $L$ admits the form \cite{forney6}
\begin{eqnarray}
L = \left\{\sum_{\ell=1}^{r}2^{\ell-1}\sum_{i=1}^{k_{\ell}}u_{\ell}^{i}\mathbf{g}_{i}+2^{r}\mathbb{Z}^N\mid u_{\ell}^{i}\in\{0,1\}\right\}
\label{constructionD}
\end{eqnarray}
where the addition is carried out in $\mathbb{R}^N$.
The fundamental volume of a lattice obtained from this construction is given by
\begin{eqnarray}
V(L)=2^{-NR_{\mathcal{C}}}V(\Lambda')^{N}, \notag\
\end{eqnarray}
where $R_{\mathcal{C}}=\sum_{\ell=1}^{r} R_{\ell}=\frac{1}{N}\sum_{\ell=1}^{r}k_{\ell}$ denotes the sum rate of component codes.

The following is an example of Construction D: Barnes-Wall lattices constructed from Reed-Muller codes \cite{forney1}. We give the example of Barnes-Wall lattices as a benchmark particularly because of the connection between Reed-Muller codes and polar codes \cite{polarcodes}. The advantage of polar codes over Reed-Muller codes will translate into the advantage of polar lattices over Barnes-Wall lattices.
Reed-Muller codes RM$(N, k, d)$ are a class of linear block codes
over GF$(2)$, where $N$ is the length of the codeword, $k$ is the
length of the information block and $d$ is the minimum Hamming
distance. \color{black}Conventionally, Reed-Muller codes are
denoted by RM$(r', m)$ $(0\leq r'\leq m)$ with following relation
among $N$, $k$ and $d$:
\begin{eqnarray}
N=2^m, k=1+\binom{m}{1}+\cdots+\binom{m}{r'}, d=2^{m-r'}. \notag\
\label{eqn:outputandinput}
\end{eqnarray}

The $m$-th member of the family of Barnes-Wall lattices is an $N=2^m$ dimensional complex lattice or $2N$ dimensional real lattice. \color{black}For example, the code formula of the $1024$-dimensional Barnes-Wall lattice is:
\begin{eqnarray}
BW_{1024}=\text{RM}(1,10)+2\text{RM}(3,10)+\cdot\cdot\cdot+2^{5}\mathbb{Z}^{1024}.
\label{eqn:BW1024}
\end{eqnarray}

A mod-$\Lambda$ Gaussian channel is a Gaussian channel with an input in $\mathcal{V}(\Lambda)$ and with a mod-$\mathcal{V}(\Lambda)$ operator at the receiver front end \cite{forney6}. The capacity of the mod-$\Lambda$ channel for noise variance $\sigma^2$ is
\begin{eqnarray}\label{mod-capacity}
C(\Lambda, \sigma^{2})=\log  V(\Lambda)-h(\Lambda, \sigma^{2}),
\label{eqn:modchannelcapacity}
\end{eqnarray}
where $h(\Lambda, \sigma^{2})$ is the differential entropy of the $\Lambda$-aliased noise over $\mathcal{V}(\Lambda)$:
\begin{eqnarray}
h(\Lambda,\sigma^{2})&=&-\int_{\mathcal{V}(\Lambda)}f_{\sigma,\Lambda}({x})\text{ log } f_{\sigma,\Lambda}({x})d{x}. \notag\
\end{eqnarray}

Given lattice partition $\Lambda/\Lambda'$, the $\Lambda/\Lambda'$ channel is a mod-$\Lambda'$ channel whose input is restricted to discrete lattice points in $(\Lambda + a) \cap \mathcal{R}(\Lambda')$ for some translate $a$.
The capacity of the $\Lambda/\Lambda'$ channel is given by \cite{forney6}
\begin{equation}\label{mod12-capacity}
\begin{split}
  C(\Lambda/\Lambda', \sigma^2) &= C(\Lambda', \sigma^2) - C(\Lambda, \sigma^2) \\
  &= h(\Lambda, \sigma^2) - h(\Lambda', \sigma^2) + \log (V(\Lambda')/V(\Lambda)).
\end{split}
\end{equation}
Further, if $\Lambda/\Lambda_{1}/\cdots/\Lambda_{r-1}/\Lambda'$ is a lattice partition chain, then
\begin{equation}
C(\Lambda/\Lambda', \sigma^2) = C(\Lambda/\Lambda_1, \sigma^2) + \cdots + C(\Lambda_{r-1}/\Lambda', \sigma^2).
\end{equation}

The key idea of \cite{forney6} is to use a good component code $\mathcal{C}_{\ell}$ to achieve the capacity $C(\Lambda_{\ell-1}/\Lambda_{\ell}, \sigma^2)$ for each level $\ell=1,2,\ldots,r$ in Construction D. For such a construction, the total decoding error probability with multistage decoding is bounded by
\begin{equation}\label{total-Pe}
    P_e(L, \sigma^{2}) \leq \sum_{\ell=1}^{r}{ P_e(\mathcal{C}_{\ell},\sigma^{2})} + P_e((\Lambda')^{N},\sigma^{2}).
\end{equation}
To achieve a vanishing error probability, i.e., to make $P_e(L, \sigma^{2}) \to 0$, we need to choose the lattice $\Lambda'$ such that $P_e((\Lambda')^{N},\sigma^{2}) \to 0$ and that all the codes $\mathcal{C}_{\ell}$ for the $\Lambda_{\ell-1}/\Lambda_{\ell}$ channels have error probabilities tending to zero.

Since $V(L)=2^{-NR_{\mathcal{C}}}V(\Lambda')^{N}$, the logarithmic VNR of $L$ is
\begin{eqnarray}
\log \left(\frac{\gamma_{L}(\sigma)}{2\pi e}\right)
&=&\log\frac{V(L)^\frac{2}{nN}}{2\pi e\sigma^2} \notag\\
&=&\log\frac{2^{-\frac{2}{n}R_{\mathcal{C}}}V(\Lambda')^{\frac{2}{n}}}{2\pi e\sigma^2} \notag\\
&=&-\frac{2}{n}R_{\mathcal{C}}+\frac{2}{n}\log V(\Lambda')- \log2\pi e\sigma^2.
\label{eqn:VNRwithe}
\end{eqnarray}

Define
\begin{equation}
\begin{cases} \epsilon_{1}=C(\Lambda,\sigma^2) \\
\epsilon_{2}=h(\sigma^2)-h(\Lambda',\sigma^2) \\
\epsilon_{3}=C(\Lambda/\Lambda', \sigma^{2})-R_{\mathcal{C}}=\sum_{\ell=1}^{r}{C(\Lambda_{\ell-1}/\Lambda_{\ell}, \sigma^{2})-R_{\ell}},
\end{cases}
\label{eqn:epsilons}
\end{equation}
where $h(\sigma^2)=\frac{n}{2}\log2\pi e\sigma^2$ is the differential entropy of the Gaussian noise. We note that, $\epsilon_{1}\geq0$ represents the capacity of the mod-$\Lambda$ channel, $\epsilon_{2}\geq0$ (due to the data processing inequality) is the difference between the entropy of the Gaussian noise and that of the mod-$\Lambda'$ Gaussian noise, and $\epsilon_{3}\geq0$ is the total capacity loss of component codes.

Then we have
\begin{eqnarray}
\log\left(\frac{\gamma_{L}(\sigma)}{2\pi e}\right)
=\frac{2}{n}(\epsilon_{1}-\epsilon_{2}+\epsilon_{3}). \notag\
\end{eqnarray}
Since $\epsilon_2 \geq 0$, we obtain the upper bound\footnote{It was shown in \cite{forney6} that $\epsilon_{2}\approx\pi P_e(\Lambda',\sigma^2)$, which is negligible compared to the other two terms.}
\begin{eqnarray}
\log\left(\frac{\gamma_{L}(\sigma)}{2\pi e}\right)
\leq \frac{2}{n}(\epsilon_{1}+\epsilon_{3}).
\label{eqn:minimumVNR}
\end{eqnarray}
Since $\log\left(\frac{\gamma_{L}(\sigma)}{2\pi e}\right)=0$ represents the Poltyrev capacity \cite{Poltyrev}\cite[Theorem 6.3.1]{BK:Zamir}, i.e., the capacity per unit volume of an unconstrained AWGN channel, the right hand side of \eqref{eqn:minimumVNR} gives an upper bound on the gap to the Poltyrev capacity. The bound is equal to $\frac{6.02}{n}(\epsilon_1+\epsilon_3)$ decibels (dB), by conversion of the binary logarithm into the base-$10$ logarithm.

To approach the Poltyrev capacity, we would like to have $P_e(L, \sigma^2) \to 0$ for any $\log\left(\frac{\gamma_{L}(\sigma)}{2\pi e}\right) > 0$. Thus, from (\ref{eqn:minimumVNR}), we need that both ${\epsilon_1}$ and ${\epsilon_3}$ are arbitrarily small.
In the following lemma, we upper-bound ${\epsilon_1}$ by the flatness factor $\epsilon_{\Lambda}(\sigma)$ of the top lattice.

\begin{lemma}\label{lem:mod-capacity}
The capacity $C(\Lambda, \sigma^2)$ of the mod-$\Lambda$ channel is bounded by
\begin{equation}\label{capacity1-flatness}
    C(\Lambda, \sigma^2) \leq \log{(1+ \epsilon_{\Lambda}(\sigma))} \leq \log(e) \cdot \epsilon_{\Lambda}(\sigma).
\end{equation}
\end{lemma}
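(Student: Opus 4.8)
The plan is to bound $C(\Lambda,\tilde\sigma^2)$ directly from its definition \eqref{eqn:modchannelcapacity} by relating the differential entropy $h(\Lambda,\tilde\sigma^2)$ of the $\Lambda$-aliased noise to the entropy of the uniform distribution over $\mathcal{V}(\Lambda)$, which is exactly $\log V(\Lambda)$. Since $C(\Lambda,\tilde\sigma^2) = \log V(\Lambda) - h(\Lambda,\tilde\sigma^2)$, the claim $C(\Lambda,\tilde\sigma^2)\le\log(1+\epsilon_\Lambda(\tilde\sigma))$ is equivalent to the lower bound $h(\Lambda,\tilde\sigma^2)\ge \log V(\Lambda) - \log(1+\epsilon_\Lambda(\tilde\sigma))$. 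First I would write
\[
h(\Lambda,\tilde\sigma^2) = -\int_{\mathcal{V}(\Lambda)} f_{\tilde\sigma,\Lambda}(x)\log f_{\tilde\sigma,\Lambda}(x)\,dx,
\]
and observe that $g(x)\triangleq V(\Lambda)f_{\tilde\sigma,\Lambda}(x)$ is a probability density with respect to the normalized uniform measure $dx/V(\Lambda)$ on $\mathcal{V}(\Lambda)$. Then $h(\Lambda,\tilde\sigma^2) = \log V(\Lambda) - \int_{\mathcal{V}(\Lambda)} g(x)\log g(x)\,\frac{dx}{V(\Lambda)}$, so everything reduces to upper-bounding the relative entropy term $D\triangleq\int_{\mathcal{V}(\Lambda)} g(x)\log g(x)\,\frac{dx}{V(\Lambda)}$ by $\log(1+\epsilon_\Lambda(\tilde\sigma))$.

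The key step is then to use the definition of the flatness factor, $\epsilon_\Lambda(\tilde\sigma) = \max_{x\in\mathcal{R}(\Lambda)}\abs{g(x)-1}$, which gives the pointwise bound $g(x)\le 1+\epsilon_\Lambda(\tilde\sigma)$ for all $x\in\mathcal{V}(\Lambda)$. Since $t\mapsto t\log t$ is monotone increasing for $t\ge 1/e$ and here we may treat the worst case, I would bound $g(x)\log g(x)\le g(x)\log(1+\epsilon_\Lambda(\tilde\sigma))$ when $g(x)\ge 1$, and note that $g(x)\log g(x)\le 0 \le g(x)\log(1+\epsilon_\Lambda(\tilde\sigma))$ when $g(x)<1$; in either case $g(x)\log g(x)\le g(x)\log(1+\epsilon_\Lambda(\tilde\sigma))$. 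Integrating against $dx/V(\Lambda)$ and using $\int_{\mathcal{V}(\Lambda)} g(x)\,\frac{dx}{V(\Lambda)}=1$ yields $D\le \log(1+\epsilon_\Lambda(\tilde\sigma))$, which is the first inequality. The second inequality $\log(1+\epsilon_\Lambda(\tilde\sigma))\le\log(e)\cdot\epsilon_\Lambda(\tilde\sigma)$ is just the elementary estimate $\ln(1+x)\le x$ for $x\ge 0$, converted to base-$2$ logarithms.

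The main obstacle is handling the region where $g(x)<1$ cleanly: there $g\log g$ is negative and one must be careful that discarding it (replacing by the larger quantity $g\log(1+\epsilon_\Lambda)$) is legitimate and does not over-count — this is fine since $\log(1+\epsilon_\Lambda(\tilde\sigma))\ge 0$ and $g(x)\ge 0$. A secondary subtlety is ensuring the argument is insensitive to the choice of fundamental region $\mathcal{R}(\Lambda)$ in the flatness-factor definition versus the Voronoi region $\mathcal{V}(\Lambda)$ appearing in $h(\Lambda,\tilde\sigma^2)$; since $f_{\tilde\sigma,\Lambda}$ is $\Lambda$-periodic, the maximum of $\abs{g-1}$ is the same over any fundamental region, so the pointwise bound $g(x)\le 1+\epsilon_\Lambda(\tilde\sigma)$ holds in particular on $\mathcal{V}(\Lambda)$, and the argument goes through.
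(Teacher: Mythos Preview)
Your proof is correct and is essentially the same as the paper's: both use the pointwise bound $f_{\tilde\sigma,\Lambda}(x)\le (1+\epsilon_\Lambda(\tilde\sigma))/V(\Lambda)$ from the flatness-factor definition, plug it into $-\int f\log f$, and then invoke $\ln(1+x)\le x$. Your reformulation in terms of $g=V(\Lambda)f_{\tilde\sigma,\Lambda}$ and the relative entropy is just a change of variables, and your case split $g\gtrless 1$ is unnecessary (since $g>0$ and $g\le 1+\epsilon_\Lambda(\tilde\sigma)$ already give $g\log g\le g\log(1+\epsilon_\Lambda(\tilde\sigma))$ directly), but it does no harm.
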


\begin{proof}
By the definition of the flatness factor, we have
\[
f_{\sigma,\Lambda}({x}) \leq  \frac{1+ \epsilon_{\Lambda}(\sigma)}{V(\Lambda)}.
\]
Thus, the differential entropy of the mod-$\Lambda$ Gaussian noise is bounded by
\begin{equation}\label{mod-capacity-flatness}
\begin{split}
  h(\Lambda, \sigma^{2})
  &=-\int_{\mathcal{V}(\Lambda_{1})}f_{\sigma,\Lambda}({x})\text{ log } f_{\sigma,\Lambda}({x})d{x}\\
  &\geq  -\int_{\mathcal{V}(\Lambda_{1})}f_{\sigma,\Lambda}({x})\text{ log } \frac{1+ \epsilon_{\Lambda}(\sigma)}{V(\Lambda)} d{x}\\
  &= -\log \frac{1+ \epsilon_{\Lambda}(\sigma)}{V(\Lambda)} \\
  &= \log V(\Lambda) - \log{(1+ \epsilon_{\Lambda}(\sigma))}.\notag\
\end{split}
\end{equation}
Therefore, from \eqref{eqn:modchannelcapacity}, $C(\Lambda, \sigma^{2})$ is bounded by $\log{(1+ \epsilon_{\Lambda}(\sigma))}$. The second inequality in (\ref{capacity1-flatness}) follows from the fact $\log(1+x) = \log_2(e) \cdot \log_e(1+x) \leq \log(e) \cdot x$ for $x>0$.
\end{proof}



Thus, we have the following design criteria:
\begin{itemize}
  \item The top lattice $\Lambda$ has a negligible flatness factor $\epsilon_{\Lambda}(\sigma)$.
  \item The bottom lattice $\Lambda'$ has a small error probability $P_e(\Lambda',\sigma^{2})$.
  \item Each component code $\mathcal{C}_{\ell}$ is a capacity-approaching code for the $\Lambda_{\ell-1} / \Lambda_{\ell}$ channel.
\end{itemize}


Asymptotically, the error probability of a polar code of codeword length $N$ decreases approximately as $O(2^{-\sqrt{N}})$ \cite{polarcodes1} and we may desire a similar form for the error probability of a polar lattice. In \eqref{total-Pe}, we can let $P_e((\Lambda')^{N},\sigma^{2})$ decrease exponentially by increasing the volume of the bottom lattice $\Lambda'$ or equivalently by expanding the partition chain. More explicitly, the next lemma shows that the first two ceriteria can be satisfied by $r$ growing with $\log N$ (see Appendix \ref{appendixlevels} for a proof).

\begin{lemma}\label{lem:numberoflevels}
Consider a partition chain $\Lambda/\Lambda_{1}/\cdots/\Lambda_{r-1}/\Lambda'$. There exists a sequence of numbers of levels $r= O(\log N)$ such that $\epsilon_{\Lambda}(\sigma) = O(e^{-{N}})$ and $P_{e}(\Lambda',\sigma^{2}) = O(e^{-{N}})$.
\end{lemma}

\begin{remark}
Lemma \ref{lem:numberoflevels} is mostly of theoretical interest, e.g., for proving a partition chain with increasing levels is capacity achieving. In practical designs, if the target error probability is fixed, e.g., $P_e(L,\sigma^2)=10^{-5}$, a small number of levels will suffice. This is because one can choose a top lattice such that $\epsilon_{\Lambda}(\sigma) \approx 10^{-2}$ and a bottom lattice such that $P_{e}(\Lambda',\sigma^{2}) \approx 10^{-6}$, for instance. In fact, it was shown in \cite{forney6} that a two-level partition chain $\mathbb{Z}/2\mathbb{Z}/4\mathbb{Z}$ is enough if $n=1$, although slightly more levels are needed if $n>1$. Readers are referred to \cite{forney6} for more details and Section \ref{sec:practicaldesign} for design examples.
\end{remark}

\subsection{Polar Lattices}
\label{sec:polarlattice}

It is shown in \cite{forney6} that the $\Lambda_{\ell-1}/\Lambda_{\ell}$ channel is symmetric, and the optimum input distribution is uniform \cite{forney6}. Since we use a binary partition $\Lambda_{\ell-1}/\Lambda_{\ell}$, the input $X_{\ell}$ is binary for $\ell \in \{1,2,\ldots,r\}$. Associate $X_{\ell}$ with representative $a_{\ell}$ of the coset in the quotient group $\Lambda_{\ell-1}/\Lambda_{\ell}$. The fact that the $\Lambda_{\ell-1}/\Lambda_{\ell}$ channel is a BMS channel allows a polar code to achieve its capacity.

Let $Y$ denote the output of the AWGN channel. Given $x_{1:\ell-1}$, let $\mathcal{A}_\ell(x_{1:\ell})$ denote the coset chosen by $x_{\ell}$, i.e., $\mathcal{A}_{\ell}(x_{1:\ell})=a_1+\cdots+a_{\ell}+\Lambda_{\ell}$. The conditional PDF of this channel with input $x_\ell$ and output $\bar{y}_\ell =y\text{ mod }\Lambda_{\ell}$ is given by\cite{forney6}
\begin{eqnarray}
P_{\bar{Y}_\ell|X_\ell,X_{1:\ell-1}}(\bar{y}_\ell|x_\ell,x_{1:\ell-1})&=&f_{{\sigma},\Lambda_{\ell}}(\bar{y}_\ell-a_1-\cdots-a_{\ell}) \notag \\
&=&\frac{1}{\sqrt{2\pi}\sigma}\sum\limits_{a\in \mathcal{A}_\ell(x_{1:\ell})}\text{exp}\left(-\frac{\|\bar{y}_\ell-a\|^2}{2\sigma^{2}}\right).
\label{eqn:modchannel}
\end{eqnarray}

\begin{definition}(Channel degradation \cite{BK:Cover})\label{deft:degradation}: Consider two channels $W_1:\mathcal{X}\rightarrow\mathcal{Y}_1$ and $W_2:\mathcal{X}\rightarrow\mathcal{Y}_2$. Channel $W_1$ is said to be (stochastically) degraded with respect to $W_2$ if there exists a channel $Q: \mathcal{Y}_2\rightarrow\mathcal{Y}_1$ such that
\begin{eqnarray}
W_1(y_1|x)=\sum_{y_2\in\mathcal{Y}_2}W_2(y_2|x)Q(y_1|y_2).\notag\
\end{eqnarray}
\end{definition}

The proof of the following lemma is given in Appendix \ref{Appendixdegraded}.

\begin{lemma}\label{lemma:degraded}
Consider a self-similar binary lattice partition chain $\Lambda/\Lambda_{1}/\cdots/\Lambda_{r-1}/\Lambda'$, in which we have $\Lambda_\ell=T^\ell \Lambda$ for all $\ell$, with $T=\alpha V$ for some scale factor $\alpha>1$ and orthogonal matrix $V$. Then, the $\Lambda_{\ell-1}/\Lambda_{\ell}$ channel is degraded with respect to the $\Lambda_{\ell}/\Lambda_{\ell+1}$ channel for $1\leq \ell \leq r-1$.
\end{lemma}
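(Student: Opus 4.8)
The plan is to exhibit an explicit degrading channel $Q$ that transforms the output of the $\Lambda_{\ell}/\Lambda_{\ell+1}$ channel into the output of the $\Lambda_{\ell-1}/\Lambda_{\ell}$ channel, in the sense of Definition \ref{deft:degradation}. By the remark preceding the statement, it suffices to work with $x_{1:\ell-1}=0$, so the $\Lambda_{\ell-1}/\Lambda_{\ell}$ channel has binary input $x_\ell$ selecting a coset representative $a_\ell$ of $\Lambda_\ell$ in $\Lambda_{\ell-1}$, output $\bar y_\ell = y \bmod \Lambda_\ell$, and transition density $f_{\sigma,\Lambda_\ell}(\bar y_\ell - a_\ell)$ as in \eqref{eqn:modchannel}. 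The self-similarity hypothesis $\Lambda_\ell = T^\ell\Lambda$ with $T=\alpha V$, $\alpha>1$, $V$ orthogonal, is what lets us relate consecutive levels: applying $T$ maps the partition $\Lambda_{\ell-1}/\Lambda_\ell$ isometrically-up-to-scaling onto $\Lambda_{\ell}/\Lambda_{\ell+1}$, so a coset representative $a_\ell$ of level $\ell$ corresponds to $T a_\ell$, which (after rescaling) plays the role of the representative $a_{\ell+1}$ at level $\ell+1$, and the noise variance transforms from $\sigma^2$ to $\alpha^2\sigma^2$.

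Concretely, I would first write both channel laws in the theta-like form of \eqref{eqn:modchannel} and change variables by $T$ (equivalently, scale by $\alpha$ and rotate by $V$) to bring the level-$\ell$ channel into a common coordinate system with the level-$(\ell+1)$ channel. This reduces the claim to: a mod-$\Lambda_{\ell}$ AWGN channel with a \emph{larger} noise variance is degraded with respect to a mod-$\Lambda_{\ell}$ AWGN channel with a \emph{smaller} noise variance. That is the standard fact that adding independent Gaussian noise degrades an AWGN channel, but here carried through the $\bmod\,\Lambda_\ell$ reduction: the degrading kernel $Q$ takes the coarser-noise output, adds an independent Gaussian of the appropriate residual variance, and reduces mod $\Lambda_\ell$. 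Since $\bmod\,\Lambda_\ell$ commutes with adding a lattice point and the aliased Gaussian density $f_{\sigma,\Lambda_\ell}$ is exactly the mod-$\Lambda_\ell$ pushforward of $f_\sigma$, convolving $f_{\sigma_1,\Lambda_\ell}$ with a mod-$\Lambda_\ell$-reduced Gaussian of variance $\sigma_2^2-\sigma_1^2$ yields $f_{\sqrt{\sigma_1^2+\sigma_2^2-\sigma_1^2},\Lambda_\ell} = f_{\sigma_2,\Lambda_\ell}$, which is precisely the degradation identity with the explicit $Q$.

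The one subtlety — and the step I expect to be the main obstacle — is bookkeeping the direction of the inequality and the matching of coset representatives across the scaling by $T$. Because $\alpha>1$, the lattice $\Lambda_{\ell}$ is a \emph{scaled-up} copy of $\Lambda_{\ell-1}$, so after we normalize coordinates the level-$(\ell+1)$ channel is the one with the \emph{smaller} effective noise relative to its lattice, and hence it is the stronger (less noisy) channel; one must check that the binary input labelling $x_\ell \mapsto a_\ell$ is carried consistently to $x_{\ell+1}\mapsto a_{\ell+1}$ under multiplication by $T$ so that the same input symbol indexes corresponding cosets on both sides. Once the coordinate change is set up carefully and the aliased-Gaussian convolution identity is invoked, the degrading kernel $Q$ is explicit and Definition \ref{deft:degradation} is verified directly. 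I would also note in passing that this degradation ordering is exactly what guarantees the component polar codes $\mathcal{C}_1 \subseteq \mathcal{C}_2 \subseteq \cdots \subseteq \mathcal{C}_r$ are nested, since the frozen-set/information-set inclusion follows from channel degradation.
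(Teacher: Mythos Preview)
Your proposal is correct and follows essentially the same route as the paper: use the self-similarity $\Lambda_\ell = T^\ell\Lambda$ to rescale one partition channel onto the other (so both become mod-$\Lambda_\ell$ AWGN channels with different noise variances), and then exhibit the degrading kernel as ``add independent Gaussian of the residual variance, then reduce mod $\Lambda_\ell$,'' verifying the aliased-Gaussian convolution identity $f_{\sigma',\Lambda_\ell} \ast f_{\sqrt{\sigma^2-\sigma'^2},\Lambda_\ell} = f_{\sigma,\Lambda_\ell}$. The paper carries out this same argument but via explicit case-by-case computations for the one-dimensional chain $\mathbb{Z}/2\mathbb{Z}/\cdots$ and the two-dimensional chain $\mathbb{Z}^2/R\mathbb{Z}^2/\cdots$, expanding the theta-like sums and doing the Gaussian integral by hand, before sketching the general self-similar case; your abstract formulation covers all of this at once and is arguably cleaner.
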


Now, we recall some basics of polar codes.
Let $W(y|x)$ be a BMS channel with input alphabet $\mathcal{X}=\{0,1\}$ with a priori distribution $\mathrm{Bernoulli}(1/2)$ and output alphabet $\mathcal{Y}\subseteq\mathbb{R}$. Polar codes are block codes of length $N=2^m$ with input bits $u^{1:N}$. Let $I(W)$ be the capacity of $W$. Given a rate $R<I(W)$, the information bits are indexed by a set of $RN$ rows of the generator matrix $G_N=\left[\begin{smallmatrix}1&0\\1&1\end{smallmatrix}\right]^{\otimes m}$, where $\otimes$ denotes the Kronecker product. This gives an $N$-dimensional channel $W_{N}(y^{1:N}|u^{1:N})$. The channel seen by each bit \cite{polarcodes} is given by
\begin{eqnarray}
W_{N}^{(i)}(y^{1:N},u^{1:i-1}|u^{i})=\sum\limits_{u^{i+1:N}\in \mathcal{X}^{N-i}}\frac{1}{2^{N-1}}W_{N}(y^{1:N}|u^{1:N}). \notag\
\end{eqnarray}
Ar{\i}kan proved that as $N$ grows, each channel $W_{N}^{(i)}$ approaches either an error-free channel or a completely noisy channel. The set of almost completely noisy (resp. almost error-free) subchannels is called the frozen set $\mathcal{F}$ (resp. information set $\mathcal{I}$). One sets $u^{i}=0$ for $i\in \mathcal{F}$ and only sends information bits within $\mathcal{I}$.

Given a priori input distribution $\mathrm{Bernoulli}(1/2)$, the error probability of channel $W$ with transition probability $P_{Y|X}$ under maximum-likelihood
decision is given by
\begin{eqnarray}
P_e(W) = \frac{1}{2}\sum\limits_{y}{\min\{P_{Y|X}(y|0),P_{Y|X}(y|1)\}}. \notag\
\end{eqnarray}
The Bhattacharyya parameter serves as an upper bound on $P_e(W)$.
\begin{definition}[Bhattacharyya Parameter for Symmetric Channel \cite{polarcodes}]\label{deft:symZ}
Given a BMS channel $W$ with transition probability $P_{Y|X}$, the Bhattacharyya parameter $Z\in[0,1]$ is defined as
\begin{eqnarray}
Z(W)&\triangleq\sum\limits_{y} \sqrt{P_{Y|X}(y|0)P_{Y|X}(y|1)}. \notag\
\end{eqnarray}
\end{definition}

The rule of SC decoding is defined as
\begin{eqnarray}
\hat{u}^{i}=\left\{\begin{aligned}
&0\:\:\:\:\:\: i\in \mathcal{F}\:\:\:\text{ or } \:\:\:\ \frac{W_{N}^{(i)}(y^{1:N},\hat{u}^{1:i-1}|0)}{W_{N}^{(i)}(y^{1:N},\hat{u}^{1:i-1}|1)}\geq1 \:\:\text{ when } i \in \mathcal{I}, \\ \notag\
&1\:\:\:\:\:\: \text{otherwise}.
\end{aligned}\right. \notag\
\end{eqnarray}

Let $P_B$ denote the block error probability of a binary polar code under SC decoding.
It has been proved in \cite{polarcodes} that $P_{B}$ can be upper-bounded by the sum of the decoding error probability of the genie-aided SC decoder for each information bit, i.e., $P_{B}\leq\Sigma_{i\in \mathcal{I}}Z(W_{N}^{(i)})$. It is worth mentioning that there are some other decoding methods such as belief propagation decoding \cite{PolarBP} and list decoding \cite{ListPolar}, which perform better than SC decoding. However, in this work, we focus on SC decoding because it is sufficient to show that polar lattices are able to achieve the capacity of AWGN channels.

It was shown in \cite{polarcodes1,polarchannelandsource} that for any $\beta<\frac{1}{2}$,
\begin{eqnarray}
\lim_{m\rightarrow \infty}\frac{1}{N}\left|\{i:Z(W_{N}^{(i)})<2^{-N^{\beta}}\}\right|&=&I(W) \notag \\
\lim_{m\rightarrow \infty}\frac{1}{N}\left|\{i:I(W_{N}^{(i)})>1-2^{-N^{\beta}}\}\right|&=&I(W). \notag
\end{eqnarray}
This means that the fraction of good channels approaches to $I(W)$ as $m\rightarrow \infty$. Therefore, constructing polar codes is equivalent
to choosing the good indices.

Let $\mathcal{P}(N,k_{\ell})$ denote the component polar code for the $\Lambda_{\ell-1}/\Lambda_{\ell}$ partition channel ($1\leq\ell\leq r$), where $k_\ell$ is the size of its information set and $N$ is the block length. We stack them as in Construction D to build the polar lattice.
The following lemma shows that these component codes are nested, which guarantees that the multilevel construction creates a lattice \cite{forney6}.
Two rules may be used to determine the component codes. One is the \emph{capacity rule} \cite{forney6,multilevel}, where the channel indices are selected according to a threshold on the mutual information. The other is the \emph{equal-error-probability rule} \cite{multilevel}, namely, the same error probability for each level, where we select the channel indices according to a threshold on the error probability or the Bhattacharyya parameter. The advantage of the equal-error-probability rule is that it gives an estimate of the error probability. For this reason, we use the equal-error-probability rule in this paper. It is well known that the polar codes constructed according to these two rules converge to each other as the block length goes to infinity \cite{polarcodes}.

\begin{lemma}\label{lem:nested}
For the equal-error-probability rule based on either the error probability or the Bhattacharyya parameter, the component polar codes built in the multilevel construction are nested, i.e., $\mathcal{P}(N,k_{1})\subseteq \mathcal{P}(N,k_{2})\subseteq\cdot\cdot\cdot\subseteq \mathcal{P}(N,k_{r})$.
\end{lemma}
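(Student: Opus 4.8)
The plan is to reduce the nesting of the component polar codes to a nesting of their information sets, and then to deduce the latter from the channel-degradation relation established in Lemma \ref{lemma:degraded}. Recall that a polar code $\mathcal{P}(N,k_\ell)$ is determined by its information set $\mathcal{I}_\ell \subseteq [N]$, i.e.\ the set of rows of $G_N$ that carry information; two polar codes of the same length $N$ built on the same polarizing transform $G_N$ satisfy $\mathcal{P}(N,k_\ell) \subseteq \mathcal{P}(N,k_{\ell+1})$ precisely when $\mathcal{I}_\ell \subseteq \mathcal{I}_{\ell+1}$. So the whole lemma comes down to showing $\mathcal{I}_1 \subseteq \mathcal{I}_2 \subseteq \cdots \subseteq \mathcal{I}_r$.

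Next I would invoke Lemma \ref{lemma:degraded}: the $\Lambda_{\ell-1}/\Lambda_{\ell}$ channel, call it $W^{(\ell)}$, is stochastically degraded with respect to the $\Lambda_{\ell}/\Lambda_{\ell+1}$ channel $W^{(\ell+1)}$ (after fixing $x_{1:\ell-1}=0$, as justified before Lemma \ref{lemma:degraded}). The key structural fact about polarization is that degradation is preserved under the channel combining/splitting operations: if $W^{(\ell)}$ is degraded with respect to $W^{(\ell+1)}$, then for every $i$ the subchannel $W^{(\ell)}_N{}^{(i)}$ is degraded with respect to $W^{(\ell+1)}_N{}^{(i)}$. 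This is a standard monotonicity property of Ar\i kan's transform (see \cite{polarchannelandsource}); I would state it as a cited fact rather than reprove it. Degradation in turn implies the monotonicity of the relevant polarization parameters: $I(W^{(\ell)}_N{}^{(i)}) \leq I(W^{(\ell+1)}_N{}^{(i)})$ and $\tilde{Z}(W^{(\ell)}_N{}^{(i)}) \geq \tilde{Z}(W^{(\ell+1)}_N{}^{(i)})$ for every $i\in[N]$.

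With these inequalities in hand, both selection rules give the nesting immediately. Under the \emph{capacity rule}, $\mathcal{I}_\ell = \{i : I(W^{(\ell)}_N{}^{(i)}) \geq 1-\delta\}$ for a common threshold $\delta$; if $i\in\mathcal{I}_\ell$ then $I(W^{(\ell+1)}_N{}^{(i)}) \geq I(W^{(\ell)}_N{}^{(i)}) \geq 1-\delta$, so $i\in\mathcal{I}_{\ell+1}$. Under the \emph{equal-error-probability rule}, $\mathcal{I}_\ell = \{i : \tilde{Z}(W^{(\ell)}_N{}^{(i)}) \leq \eta_N\}$ for a threshold $\eta_N$ chosen so that $\sum_{i\in\mathcal{I}_\ell}\tilde{Z}(W^{(\ell)}_N{}^{(i)})$ meets the per-level error budget; if $i\in\mathcal{I}_\ell$ then $\tilde{Z}(W^{(\ell+1)}_N{}^{(i)}) \leq \tilde{Z}(W^{(\ell)}_N{}^{(i)}) \leq \eta_N$, hence $i\in\mathcal{I}_{\ell+1}$. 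In either case $\mathcal{I}_\ell \subseteq \mathcal{I}_{\ell+1}$ for $1\le \ell \le r-1$, which chains up to give the claim. The main obstacle, and the only place where real content enters, is ensuring that the hypothesis of Lemma \ref{lemma:degraded} actually applies to the partition chains used here — i.e.\ that the chains $\mathbb{Z}/2\mathbb{Z}/\cdots/2^r\mathbb{Z}$ and the multi-dimensional analogues are self-similar with $T=\alpha V$, $\alpha>1$ — and that one is careful that a common threshold is used across all levels so that the bookkeeping with $x_{1:\ell-1}=0$ and the offsets does not disturb the monotonicity; everything after that is routine.
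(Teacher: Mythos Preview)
Your proposal is correct and follows essentially the same route as the paper: reduce code nesting to information-set nesting, invoke Lemma~\ref{lemma:degraded} for degradation between consecutive partition channels, then use the standard fact (Lemma~4.7 of \cite{polarchannelandsource}) that degradation is preserved by Ar\i kan's transform to obtain monotonicity of both $\tilde{Z}$ and $I$ at every subchannel index, from which both threshold rules yield $\mathcal{I}_\ell\subseteq\mathcal{I}_{\ell+1}$. The paper proceeds identically, with the specific threshold $2^{-N^\beta}$ in place of your generic $\eta_N$.
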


\begin{proof}
Firstly, consider the equal-error-probability rule based on the Bhattacharyya parameter.
By \cite[Lemma $4.7$]{polarchannelandsource}, if a BMS channel $V$ is a degraded version of $W$, then the subchannel $V_{N}^{(i)}$ is also degraded with respect to $W_{N}^{(i)}$ and $Z(V_{N}^{(i)})\geq Z(W_{N}^{(i)})$. Let the threshold be $2^{-N^{\beta}}$ for some $\beta<1/2$. The codewords are generated by $x^{1:N}=u^{\mathcal{I}}G_{\mathcal{I}}$, where $G_{\mathcal{I}}$ is the submatrix of $G$ whose rows are indexed by information set $\mathcal{I}$. The information sets for these two channels are respectively given by
\begin{eqnarray}
\left\{\begin{aligned}
&\mathcal{I}_{W}&=&\{i:Z(W_{N}^{(i)})< 2^{-N^{\beta}}\}, \\ \notag\
&\mathcal{I}_{V}&=&\{i:Z(V_{N}^{(i)})< 2^{-N^{\beta}}\}. \notag\
\end{aligned}\right.
\end{eqnarray}
Due to the fact that $Z(V_{N}^{(i)})\geq Z(W_{N}^{(i)})$, we have $\mathcal{I}_{V}\subseteq \mathcal{I}_{W}$. If we construct polar codes $\mathcal{P}(N,|\mathcal{I}_{W}|)$ over $W$ and $\mathcal{P}(N,|\mathcal{I}_{V}|)$ over $V$, $G_{\mathcal{I}_{V}}$ is a submatrix of $G_{\mathcal{I}_{W}}$. Therefore $\mathcal{P}(N,|\mathcal{I}_{V}|)\subseteq \mathcal{P}(N,|\mathcal{I}_{W}|)$.

From Lemma \ref{lemma:degraded}, the channel of the $\ell$-th level is always degraded with respect to the channel of the $(\ell+1)$-th level, and consequently, $\mathcal{P}(N,k_{\ell})\subseteq \mathcal{P}(N,k_{\ell+1})$, for $1\leq\ell<r$.

Then, consider the selection based on the error probability itself. The nesting relation still holds. This is because, by \cite[Lemma $3$]{Ido}, $P_e(V_{N}^{(i)})\geq P_e(W_{N}^{(i)})$ since $V_{N}^{(i)}$ is degraded with respect to $W_{N}^{(i)}$.
\end{proof}

\begin{remark}
Although it will not be used in this paper, it is worth pointing out that the nesting relation also holds if we select the channel indices according to a threshold on the mutual information. This is because, again by \cite[Lemma $3$]{Ido}, $I(V_{N}^{(i)})\leq I(W_{N}^{(i)})$ since $V_{N}^{(i)}$ is degraded with respect to $W_{N}^{(i)}$.
\end{remark}

However, the complexity of exact code construction for a BMS channel with a continuous output alphabet appears to
be exponential in the block length. A quantization method was proposed in \cite{Ido} which transforms a BMS channel with a continuous output alphabet to one with a finite output alphabet. Also, the authors of \cite{polarconstruction} proposed an approximation method to construct polar codes efficiently over any BMS
channel. We follow these methods to construct polar codes for the $\Lambda_{\ell-1}/\Lambda_{\ell}$ channel.
It was shown in \cite{polarconstruction, Ido} that with a sufficient number of quantization levels, the approximation error is negligible while the computational complexity is still $O(N\log N)$.

We illustrate the construction procedure with the example of the $\mathbb{Z}/2\mathbb{Z}$ channel. We need a collection of binary symmetric channels (BSCs) to approximate this $\mathbb{Z}/2\mathbb{Z}$ channel. The conditional PDF of the output after the mod-$2$ operation is given by
\begin{eqnarray}
f_{\sigma,2\mathbb{Z}}(y|x)=\frac{1}{\sqrt{2\pi}\sigma}\sum\limits_{j=-\infty}^{+\infty}\text{exp}\left(-\frac{(y-x+2
j)^2}{2\sigma^2}\right). \notag
\end{eqnarray}
\color{black} Note that the output $Y$ of $\mathbb{Z}/2\mathbb{Z}$ channel is in the Voronoi region $[-1,1)$ of $2\mathbb{Z}$. The channel is symmetric in the sense that $f_{\sigma,2\mathbb{Z}}(y|x=0)=f_{\sigma,2\mathbb{Z}}(\varpi(y)|x=1)$, where $\varpi$ is a permutation such that $\varpi(y)=(y+1) \mod 2\mathbb{Z}$ for any $y \in [-1,1)$.

Then, the output can be divided into several intervals $A_{i}$ and $\varpi (A_{i})$ for $1\leq i \leq K$, where $K$ denotes the quantization level, $A_{i} \subset [-0.5,0.5)$ and $\varpi(A_i) \subset [0.5,1) \cup [-1,-0.5)$. \color{black} The $i$-th BSC is chosen with probability $p_{i}$ and let the cross-over probability be $x_{i}$, which are given by
\begin{eqnarray}\label{eqn:chanQZ}
\left\{\begin{aligned}
p_{i}&=\int_{A_{i}}f_{\sigma,2\mathbb{Z}}(y|x=1)+f_{\sigma,2\mathbb{Z}}(y|x=0)dy,\\
x_{i}&=\frac{\int_{A_{i}}f_{\sigma,2\mathbb{Z}}(y|x=0)dy}{p_{i}}.
\end{aligned}\right.
\end{eqnarray}

The partition of the continuous alphabet is based on a function of the likelihood ratio \cite{Ido}
\begin{eqnarray}
\zeta_{y}=\frac{f_{\sigma,2\mathbb{Z}}(y|x=0)}{f_{\sigma,2\mathbb{Z}}(y|x=1)}. \notag\
\end{eqnarray}
Note that $\zeta_{y} \geq 1$ for $y \in [-0.5,0.5)$.

The symmetric capacity of $W$ is
\begin{eqnarray}
I(W)=\int_{0}^{1}(f(y|x=0)+f(y|x=1))C[\zeta_{y}]dy,
\label{eqn:symmetriccapaciy}
\end{eqnarray}
where $C[\zeta]$ for $\zeta\geq1$ is defined as
\begin{eqnarray}
C[\zeta]=1-\frac{\zeta}{\zeta+1}\text{log}\left(1+\frac{1}{\zeta}\right)-\frac{1}{\zeta+1}\text{log}(\zeta+1). \notag\
\end{eqnarray}

In our case, we let the maximum value of $C[\zeta]$ be $C_{\text{max}}=C[\zeta_{0}]$. For $1\leq i\leq K$, each interval is defined as
\begin{eqnarray}
A_{i}=\left\{y \in [-0.5,0.5):\frac{i-1}{K}C_{\text{max}}\leq C[\zeta_y] \leq \frac{i}{K}C_{\text{max}}\right\}. \notag\
\end{eqnarray}
Thus, the number of discrete output symbols is $2K$. \color{black}Notice that the above quantization process results in a degraded channel with respect to the original one \cite{Ido}. \color{black} According to \cite[Lemma 15]{Ido}, the difference in symmetric capacities of the discrete-output BMS channel and the original continuous-output channel can be bounded by $\frac{1}{K}$. \color{black}In numerical experiments, $K=64$ is sufficient to guarantee a capacity loss around $10^{-4}$ for a binary-input AWGN channel with capacity 0.5.\color{black}

With the discrete BMS channel, we use the merging algorithm in \cite{polarconstruction} to construct polar codes. The main idea is to perform the calculations approximately by restricting the number of output symbols in each level. Then the construction complexity is $O(N K^{2}\log K)$. The details are given in Algorithm \ref{alg:algo} and  Algorithm \ref{alg:deg}, where the function $b(x)=2\sqrt{x(1-x)}$ denotes the Bhattacharyya parameter\footnote{Using the binary entropy function $g(x)=-x \log (x)-(1-x)\log (1-x)$ would give an algorithm for the capacity rule.} of a BSC with cross-over probability $x$. Algorithm \ref{alg:algo} starts with the list $P_{\mathcal{Q}}=\{(p_{1},x_{1}),\cdots,(p_{K},x_{K})\}$, obtained from \eqref{eqn:chanQZ} by quantizing the channel transition PDF $f_{\sigma, 2\mathbb{Z}}(y|x)$. It generates a tree from a BMS channel $W$ as the root node according to the polarization rules \cite[eq. (19)]{polarcodes}. and \cite[eq. (20)]{polarcodes}. Suppose an intermediate BSC channel $\mathcal{W}$ from the polar transform is represented by $P_{\mathcal{W}}=\{(p_{1},x_{1}),\cdots,(p_{M},x_{M})\}$, where $M$ is its size. Then it applies to $\mathcal{W}$ the mass merging Algorithm \ref{alg:deg} on each level of the tree to reduce the size of the output alphabet for the next level. After each merging step of Algorithm \ref{alg:deg}, the size of ${\mathcal{W}}$ is decreased by 1. Finally, Algorithm \ref{alg:algo} returns upper bounds $\overline{P}_{e}(W_{N}^{(i)}, K)$ on the probability of error under SC decoding for the degraded bit channel $W_{N}^{(i)}$; the transmitting subchannels are chosen according to $\overline{P}_{e}(W_{N}^{(i)}, K)$.

We note that both the error probability $P_e(\mathcal{Q})$ and the Bhattacharyya parameter $Z(\mathcal{W})$ can be calculated from their lists of the BSC pairs instead of their channel transition probability mass functions. In fact, we have $P_e(\mathcal{Q})=\sum_{i=1}^K p_ix_i$ and $Z(\mathcal{W})=2\sum_{i=1}^M p_i\sqrt{x_i(1-x_i)}$.

\begin{algorithm}
\caption{Construction of Polar Codes}\label{alg:algo}
\textbf{Input}: BMS channel $W$ (the $\mathbb{Z}/2\mathbb{Z}$ channel) represented by channel transition PDF $f_{\sigma, 2\mathbb{Z}}(y|x)$, block length $N=2^m$, size of information set $k\leq N$, quantization level $K$\\
\textbf{Output}: A set of upper bounds on the error probabilities of $N$ subchannels and an index subset of $\{1,...,N\}$ of size $k$.
\begin{algorithmic}[1]\label{alg:1}
\State Calculate the Bhattacharyya parameter $\mathsf{Z}$ of $W$.

\State Quantize $W$ to $\mathcal{Q}$, represented by the root list $P_{\mathcal{Q}}=\{(p_{1},x_{1}),\cdots,(p_{K},x_{K})\}$, where $x_1 < \cdots < x_K$, using \eqref{eqn:chanQZ}.

\For{$i=1,2,...,N$}
    \State Express $i-1$ in binary representation $\langle b_1,b_2,...b_m\rangle$.
    \For{$j=1,2,...,m$}
       \If{$b_j=0$}
            \State Calculate the probability mass function of the worse polarized channel:
            \State $\mathcal{W} \gets \mathcal{Q} \boxtimes \mathcal{Q}$ \cite[eq. (19)]{polarcodes}. Obtain the BSC pairs $P_{\mathcal{W}}$ of $\mathcal{W}$.
            \State $\mathsf{Z} \gets \min\{Z(\mathcal{W}), 2\mathsf{Z}-\mathsf{Z}^2\}$.
       \Else
            \State Calculate the probability mass function of the better polarized channel:
            \State $\mathcal{W} \gets \mathcal{Q} \otimes \mathcal{Q}$ \cite[eq. (20)]{polarcodes}. Obtain the BSC pairs $P_{\mathcal{W}}$ of $\mathcal{W}$.
            \State $\mathsf{Z} \gets \mathsf{Z}^2$.
       \EndIf
    \State $\mathcal{Q} \gets$ degrading-merging($\mathcal{W},K$).
    \EndFor
\State Compute upper-bound $\overline{P}_{e}(W_{N}^{(i)}, K)=\min\{P_e(\mathcal{Q}),\mathsf{Z}\}$.
\EndFor
\State Return the set $\{\overline{P}_{e}(W_N^{(1)},K),...,\overline{P}_{e}(W_N^{(N)},K)\}$ and the subset of those $k$ indices with smallest $\{\overline{P}_{e}(W_N^{(i)},K)\}$.

\end{algorithmic}
\end{algorithm}

\begin{algorithm}
\caption{Degrading-merging function}\label{alg:deg}
\textbf{Input}: A list of BSC pairs $P_{\mathcal{W}}$, a quantization level $K$\\
\textbf{Output}: A list of BSC pairs $P_{\mathcal{Q}}$ with size $K$.
\begin{algorithmic}[1]
\While{$P_{\mathcal{W}}$ has size $>K$}
    \State Find the index $j=\underset{{i}}{\arg\min} \{p_i(b(\bar{x}_i)-b(x_i))-p_{i+1}(b(x_{i+1})-b(\bar{x}_i))\}$, where $\bar{x}_i=\frac{p_i x_i+p_{i+1} x_{i+1}}{p_i+p_{i+1}}$.
    \State Merge $(p_j, x_j)$ and $(p_{j+1},x_{j+1})$ into $(p_j+p_{j+1}, \bar{x}_j)$.
\EndWhile
\State Return $P_{\mathcal{W}}$.
\end{algorithmic}
\end{algorithm}

\color{black}Although the above merging algorithm results in an approximation error, it can be bounded properly by increasing the size of the finite output alphabet. \color{black} To this end, we introduce the \emph{capacity loss} $\epsilon_{\textrm{loss}}$ under the quantization-merging algorithm and finite length. More precisely, it means that we can construct a polar code of length $N$ over a channel with the symmetric capacity $C$ such that this polar code is assured to have a block error probability $P_{B}^{SC}\leq N2^{-N^{\beta}}$ ($\beta<1/2$) at the rate $C-\epsilon_{\textrm{loss}}$. We give the following lemma on the capacity loss, which is essentially an adaption of in \cite[Theorem 1]{Ido} to the $\mathbb{Z}/2\mathbb{Z}$ channel. This lemma shows that we can get arbitrarily close to the optimal construction of a polar code as $K$ increases.

\begin{lemma}
Given any constant $0<\beta<1/2$, define the capacity loss $\epsilon_{\textrm{loss}}$
\begin{eqnarray}
\frac{1}{N}\left|\{i:\overline{P}_{e}(W_{N}^{(i)}, K)<2^{-N^{\beta}}\}\right|= I(W)-\epsilon_{\textrm{loss}}. \notag\
\end{eqnarray}
For arbitrary real constant $\epsilon>0$, there exists a quantization level $K_0=K_0(W,\epsilon,\beta)$, determined by the underlying channel $W$, the constants $\epsilon$ and $\beta$, such that for all integers $K\geq K_0$ and all sufficiently large code lengths $N$, the polar code constructed from Algorithm \ref{alg:algo} within running time $O(N\cdot K^2\log K)$ achieves a rate loss $\epsilon_{\textrm{loss}} \leq \epsilon+\frac{1}{K}$ and a block error probability $P_{B}^{SC}\leq N2^{-N^{\beta}}$ under SC decoding.
\label{lem:rateloss}
\end{lemma}

\begin{proof}
Since \cite[Theorem 1]{Ido} addresses binary-input discrete symmetric channels, we need apply \cite[(57)]{Ido} to the quantized channel $Q$. However, it was only proved that for any $\epsilon>0$ and a sufficiently large $K \geq K_0(Q,\epsilon,\beta)$, the following $\lim\inf$ exists for the quantized channel $Q$.
\begin{eqnarray}
\underset{N \to \infty}{\lim\inf}\frac{1}{N}\left|\{i:\overline{P}_{e}(W_{N}^{(i)}, K)<2^{-N^{\beta}}\}\right| \geq I(Q)-\epsilon.
\end{eqnarray}
Note that $\overline{P}_{e}(W_{N}^{(i)}, K)$ denotes an upper bound on the error probability of the original subchannel $W_N^{(i)}$, which is returned by Algorithm \ref{alg:algo}. By a more recent work \cite[Lemma 1]{TalSimpleProof}, the above $\lim\inf$ can be safely replace by $\lim$ because the Bhatacharyya parameters of the subchannels from Algorithm \ref{alg:algo} eventually meet the form of \cite[Eq. (1)]{TalSimpleProof}. This can be checked from the two-staged polarization process introduce in the proof of \cite[Theorem 1]{Ido}.

\color{black}
As a result, for any $\epsilon>0$, there exists a sufficiently large $K \geq K_0(Q,\epsilon,\beta)$ and a sufficiently large block length $N$, such that the resulted polar code from Algorithm \ref{alg:algo} has a rate
\begin{eqnarray}\label{eqn:lemma4_1}
R=\frac{1}{N}\left|\{i:\overline{P}_{e}(W_{N}^{(i)}, K)<2^{-N^{\beta}}\}\right|\geq I(Q)-\epsilon,
\end{eqnarray}
and its block error probability under successive cancellation decoding satisfies $P_{B}^{SC}\leq N2^{-N^{\beta}}$.

Note that $Q$ is a quantized version of $W$ by the degradation merging process \eqref{eqn:chanQZ}. By \cite[Lemma 16]{Ido}, we have
\begin{eqnarray}\label{eqn:lemma4_2}
0 \leq I(W)-I(Q) \leq \frac{1}{K}.
\end{eqnarray}
Then, combining \eqref{eqn:lemma4_1} and \eqref{eqn:lemma4_2} gives us
\begin{eqnarray}
\epsilon_{\textrm{loss}} \leq \epsilon+\frac{1}{K},
\end{eqnarray}
which can be made arbitrarily small when $K$ is sufficiently large.
Since $Q$ is quantized from $W$, we may also write $K_0$ as $K_0(W,\epsilon,\beta)$. The proof is completed. \color{black}
\end{proof}

\begin{remark}
To remove the dependency of the rate loss in Lemma \ref{lem:rateloss} on the quantization level $K$, we may follow  \cite[Corollary 2]{Ido} to set $K=\lfloor\log N\rfloor$ in Lemma \ref{lem:rateloss}. Then, Algorithm \ref{alg:algo} will produce a polar code with rate loss $\epsilon_{\textrm{loss}} \leq \epsilon+\frac{1}{\lfloor\log N\rfloor}$ and block error probability $\leq N2^{-N^{\beta}}$, with complexity $O(N \log^2 N \log\log N)$. Clearly, $\epsilon_{\textrm{loss}} \to 0$ as $N \to \infty$. For a more detailed analysis on $\epsilon_{\textrm{loss}}$, see \cite{GuruswamiXia15}.
\end{remark}

\subsection{AWGN Goodness}

By combining the previous lemmas, we arrive at the main result of this section:

\begin{theorem}\label{theorem1}
Construct polar lattice $L$ with the $n$-dimensional binary lattice partition chain $\Lambda/\Lambda_1/\cdots/\Lambda_{r-1}/\Lambda'$ and $r$ nested polar codes of block length $N$, where $r= O(\log N)$ such that  $\epsilon_{\Lambda}(\sigma) = O(e^{-{N}})$ and $P_{e}(\Lambda',\sigma^{2}) = O(e^{-{N}})$. For any $0<\beta<1/2$, the error probability of $L$ under multistage decoding is bounded by
\begin{eqnarray}
P_{e}(L, \sigma^{2})\leq rN2^{-N^{\beta}}+N\left(1-\int_{\mathcal{V}(\Lambda')}f_{\sigma^{2}}(x)dx\right),
\label{eqn:errorbound}
\end{eqnarray}
with the logarithmic VNR bounded by \eqref{eqn:minimumVNR}. Then, $L$ is AWGN-good, i.e., $P_{e}(L, \sigma^{2})\to 0$ as $N\to \infty$ for arbitrary VNR greater than $2\pi e$.
\end{theorem}

\begin{proof}
The fact that the component polar codes are nested is due to Lemma \ref{lem:nested}, while the condition $r= O(\log N)$ is due to Lemma \ref{lem:numberoflevels}.
The error probability bound \eqref{eqn:errorbound} comes from \eqref{total-Pe}.
For a threshold $2^{-N^{\beta}}$ of the Bhattacharyya parameter, the block error probability of a polar code with SC decoding is upper-bounded by $N 2^{-N^{\beta}}$, which gives the first term on the right-hand side of \eqref{eqn:errorbound}. The second term of \eqref{eqn:errorbound} is due to the union bound. Since both terms of \eqref{eqn:errorbound} vanish as block length $N$ grows, $P_{e}(L, \sigma^{2})\to 0$.

Then we analyze the VNR. By Lemma \ref{lem:mod-capacity}, we have $\epsilon_{1}=C(\Lambda, \sigma^2) \leq \log(e) \cdot \epsilon_{\Lambda}(\sigma)=O(e^{-{N}})$.
Also, the capacity loss $\epsilon_{3}$ can be arbitrarily small as $N\to \infty$. Plugging these into \eqref{eqn:minimumVNR}, we can make
$\log\left(\frac{\gamma_{L}(\sigma)}{2\pi e}\right)$ arbitrarily close to $0$
as $N\to\infty$.
\end{proof}

\begin{remark}
In practice, if the target error probability is fixed (e.g., $10^{-5}$), $r$ can be a small integer, namely, $r$ does not have to scale as $\log N$. Thus, the essential condition is $N\to \infty$. Particularly, our example in Section IV shows that $r=2$ is sufficient for a target error probability around $10^{-5}$ when $\sigma=0.3380$.
\end{remark}

For finite $N$, however, the capacity loss $\epsilon_{3}$ is not negligible. We investigate the finite-length performance of polar lattices in the following.

The finite-length analysis of polar codes was given in \cite{Hassani2,ScalingXia,ScalingDina}. It was proved that for a fixed error probability, polar codes need a polynomial block length with respect to the gap to capacity $\epsilon_{\text{loss}}=I(W)-R=O(N^{-\frac{1}{\mu}})$ \cite{Hassani2,ScalingXia}, where $\mu$ is known as the scaling exponent. The lower bound of the gap is $\epsilon_{\text{loss}}\geq\underline{\beta}N^{-\frac{1}{\underline{\mu}}}$, where $\underline{\beta}$ is a constant that depends only on $I(W)$ and $\underline{\mu}=3.55$ \cite{Hassani2}. The upper bound of the gap is $\epsilon_{\text{loss}}\leq\bar{\beta}N^{-\frac{1}{\bar{\mu}}}$, where $\bar{\beta}$ is a constant that depends only on the block error probability $P_B$ and $\bar{\mu}=7$ was given in \cite{Hassani2}. Later this scaling factor $\bar{\mu}$ was improved to $5.77$ \cite{ScalingDina}.

Thus, the gap to the Poltyrev capacity of finite-dimensional polar lattices is
\begin{eqnarray}
\log\left(\frac{\gamma_{L}(\sigma)}{2\pi e}\right) \label{eqn:finitebound}
&\leq& \frac{2}{n}\left(\epsilon_1+r\bar{\beta}N^{-\frac{1}{\bar{\mu}}}\right) \notag\
\end{eqnarray}
with the corresponding block error probability
\begin{eqnarray}
P_{e}(L,\sigma^2)\leq rP_{B}+P_{e}(\Lambda'^{N},\sigma^{2}), \notag\
\end{eqnarray}
where the constant $\bar{\beta}$ depends only on $P_{B}$ (assuming equal error probabilities for the component polar codes). Since $n\ll N$ is fixed, the gap to the Poltyrev capacity of polar lattices also scales polynomially in the dimension $n_{L}=nN$.

In comparison, the optimal bound for finite-dimensional lattices is given by \cite{finiteIC}
\begin{eqnarray}
\log\left(\frac{\gamma_{L}(\sigma)}{2\pi e}\right)_{\text{opt}}=\sqrt{\frac{2}{n_{L}}}Q^{-1}(P_{e}(L,\sigma^2))-\frac{1}{n_{L}}\log n_{L}+ O\left(\frac{1}{n_{L}}\right).
\label{eqn:optimalfinitebound}
\end{eqnarray}
At finite dimensions, this is more precise than the exponential error bound for lattices constructed from random linear codes given in \cite{forney6}. Thus, given $P_{e}(L,\sigma^2)$, the scaling exponent of optimum random lattices is $2$ which is smaller than that of polar lattices $\bar{\mu}$. The result is consistent with the fact that polar codes require larger block length than random codes to achieve the same rate and error probability.

\section{Polarization-Based Gaussian Shaping}
\label{sec:shaping}

To achieve the capacity of the power-constrained Gaussian channel, we can apply Gaussian shaping over the polar lattice $L$. However, it appears difficult to do so directly. In this section, we will apply Gaussian shaping to the top lattice $\Lambda$ instead, which is more friendly for implementation. This is motivated by \cite[Theorem 2]{LingBel13}, which implies that one may construct a capacity-achieving lattice code from a good constellation. More precisely, one may choose a low-dimensional top lattice such as $\mathbb{Z}$ and $\mathbb{Z}^2$ whose mutual information has a negligible gap to the channel capacity as bounded in \cite[Theorem 2]{LingBel13}, and then construct a multilevel code to achieve the capacity. We will show that this strategy is equivalent to implementing Gaussian shaping over the AWGN-good polar lattice.

\subsection{Asymmetric Channels in Multilevel Lattice Coding}
\label{seq:goodconstellation}
By \cite[Theorem 2]{LingBel13}, we choose a constellation $D_{\Lambda,\sigma_s}$ such that the flatness factor $\epsilon_{\Lambda}\left(\tilde{\sigma}\right)$ is negligible, where $\tilde{\sigma}=\frac{\sigma_s\sigma}{\sqrt{\sigma_s^2+\sigma^2}}$. Let the binary partition chain $\Lambda/\Lambda_{1}/\cdots/\Lambda_{r-1}/\Lambda'/\cdots$ be labelled by bits $X_{1},\cdots,X_{r},\cdots$. Then, $D_{\Lambda,\sigma_s}$ induces a distribution $P_{X_{1:r}}$ whose limit corresponds to $D_{\Lambda,\sigma_s}$ as ${r\rightarrow\infty}$. An example for $D_{\mathbb{Z},\sigma_s}$ for $\sigma_s=3$ is shown in Figure \ref{fig:LatticeGaussian}. In this case, a shaping constellation with $M=32$ ($r=5$) points are actually sufficient, since the total probability of these points is rather close to~$1$.

\begin{figure*}[ht!]
    \begin{center}
        \subfigure[$D_{\mathbb{Z},\sigma_s}$ for $\sigma_s=3$]{%
            \label{}
            \includegraphics[width=0.45\textwidth]{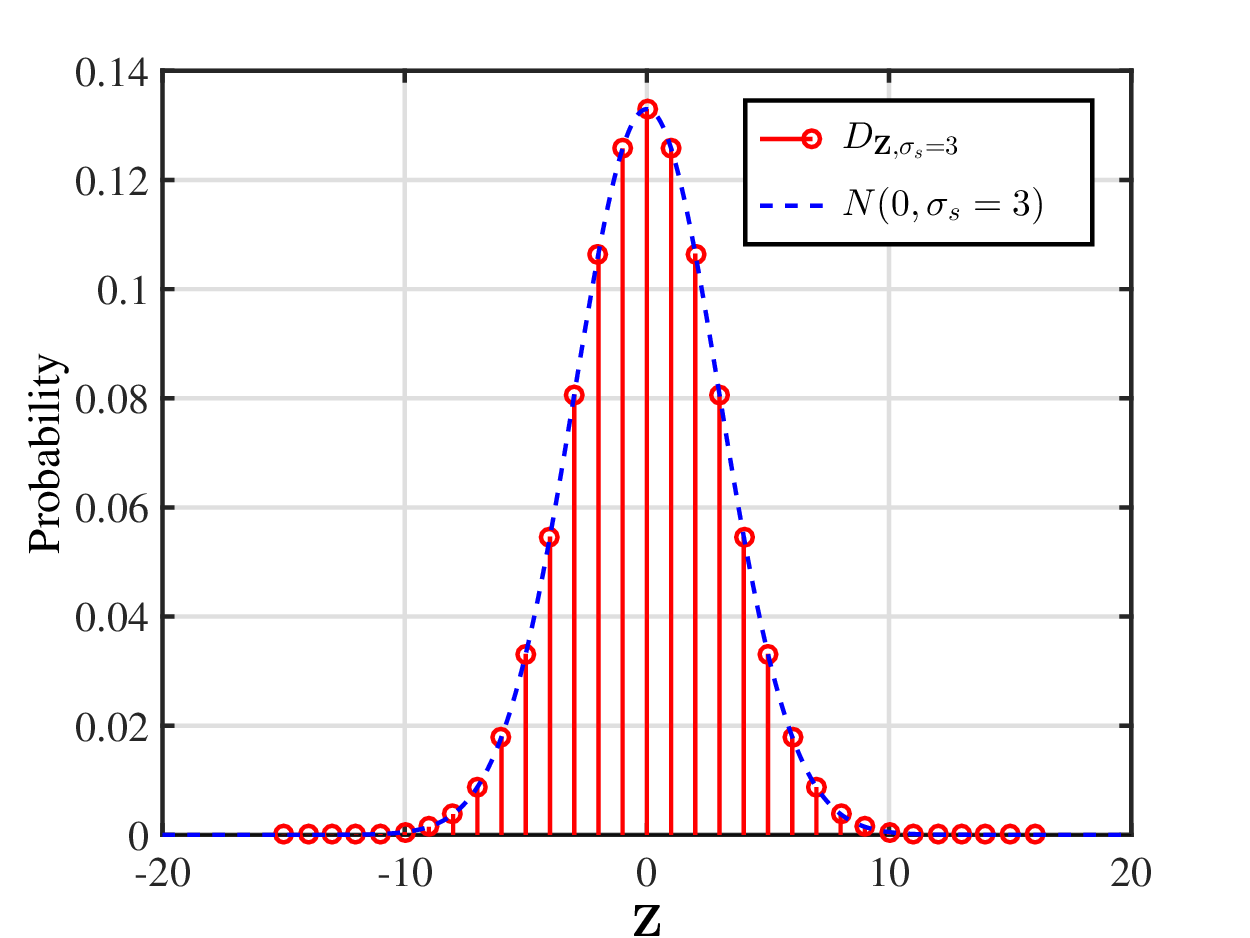}
        }%
        \subfigure[Bit labelling]{%
           \label{}
           \includegraphics[width=0.45\textwidth]{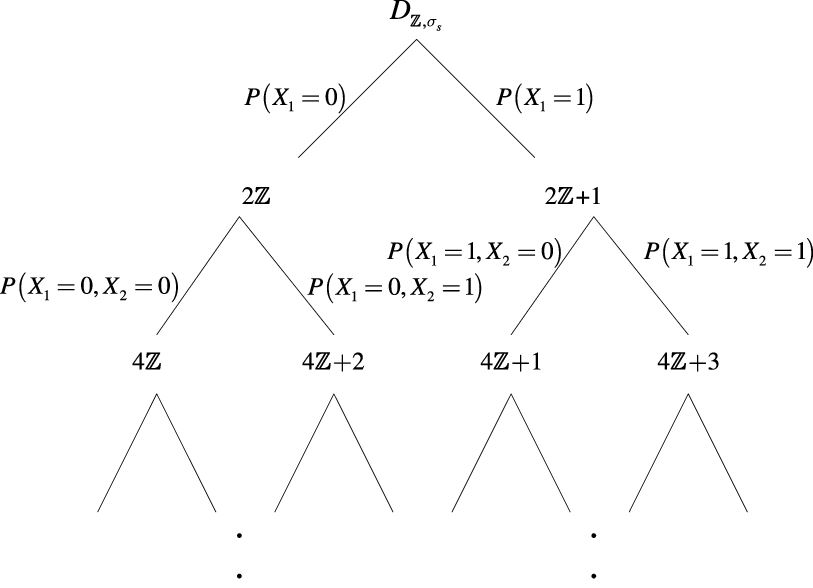}
        }\\ 
%
    \end{center}

    \vspace{-12pt}

    \caption{%
           Lattice Gaussian distribution $D_{\mathbb{Z},\sigma_s}$ and the associated labelling. A probability $P(X_1, X_2, ..., X_i)$ in (b) is given by that of the coset indexed by bits $X_1, X_2, ..., X_i$; for example, $P(X_1=1, X_2=0)=\sum_{\lambda \in 4\mathbb{Z}+1} \text{Pr}(\lambda)$, where $\text{Pr}(\cdot)$ denotes the probability mass function of $D_{\mathbb{Z},\sigma_s}$.
     }%
    \label{fig:LatticeGaussian}
\end{figure*}


By the chain rule of mutual information
\begin{eqnarray}\label{eqn:chainrule}
I(Y;X_{1:r})=\sum_{\ell=1}^{r} I(Y;X_\ell|X_{1:\ell-1}),
\end{eqnarray}
we obtain $r$ binary-input channels ${W}_{\ell}$ for $1\leq \ell \leq r$. Given $x_{1:\ell-1}$, denote again by $\mathcal{A}_\ell(x_{1:\ell})$ the coset of $\Lambda_{\ell}$ indexed by $x_{1:\ell-1}$ and $x_{\ell}$. According to \cite{multilevel}, the channel transition PDF of the $\ell$-th channel ${W}_{\ell}$ is given by
\vspace{-1em}
\begin{eqnarray}
\label{eqn:transition}
& & \hspace{-3em}P_{Y|X_\ell,X_{1:\ell-1}}(y|x_\ell,x_{1:\ell-1}) \notag\\
&=& \frac{1}{P\{\mathcal{A}_\ell(x_{1:\ell})\}}\sum_{a\in \mathcal{A}_\ell(x_{1:\ell})}P(a)P_{Y|A}(y|a)  \notag\\
&=& \frac{1}{f_{\sigma_s}(\mathcal{A}_\ell(x_{1:\ell}))}\sum_{a\in \mathcal{A}_\ell(x_{1:\ell})}\frac{1}{2\pi \sigma\sigma_s}\text{exp}\left(-\frac{\|y-a\|^2}{2\sigma^{2}}-\frac{\|a\|^2}{2\sigma_s^{2}}\right) \\
&=&\text{exp}\left(-\frac{\|y\|^2}{2(\sigma^2_s+\sigma^2)}\right)\frac{1}{f_{\sigma_s}(\mathcal{A}_\ell(x_{1:\ell}))}\frac{1}{2\pi \sigma\sigma_s}\sum_{a\in \mathcal{A}_\ell(x_{1:\ell})}\text{exp}\left(-\frac{\sigma_s^2+\sigma^2}{2\sigma_s^2\sigma^2}\left\|\frac{\sigma_s^2}{\sigma_s^2+\sigma^2}y-a\right\|^2\right) \notag\\
&=&\text{exp}\left(-\frac{\|y\|^2}{2(\sigma_s^2+\sigma^2)}\right)\frac{1}{f_{\sigma_s}(\mathcal{A}_\ell(x_{1:\ell}))}\frac{1}{2\pi\sigma\sigma_s}\sum_{a\in \mathcal{A}_\ell(x_{1:\ell})}\text{exp}\left(-\frac{\|\alpha y-a\|^2}{2\tilde{\sigma}^2}\right). \notag
\end{eqnarray}
where $\alpha=\frac{\sigma_s^2}{\sigma_s^2+\sigma^2}$ is the MMSE coefficient. In general, ${W}_{\ell}$ is asymmetric with the input distribution $P_{X_{\ell}|X_{1:\ell-1}}$ unless $f_{\sigma_s}(\mathcal{A}_\ell(x_{1:\ell}))/f_{\sigma_s}(\mathcal{A}_{\ell-1}(x_{1:\ell-1}))\approx\frac{1}{2}$, which means that $\epsilon_{\Lambda_{\ell}}(\sigma_s)$ is negligible.

For a finite power, the number of levels does not need to be large. The following
lemma shows in a quantitative manner how large $r$ should be in order to achieve the channel capacity. The proof can be found in Appendix \ref{AppendixIloss}.
\begin{lemma}\label{lem:Iloss}
There exists $r= O(\log\log N)$ such that  using the first $r$ levels only incurs a capacity loss~$\sum_{\ell > r} I(Y;X_{\ell}|X_{1:\ell-1}) = O(\frac{1}{N})$.
\end{lemma}

\begin{remark}
The condition $r= O(\log\log N)$ is of theoretical interest, similarly to the condition $r = O(\log N)$ in the AWGN-good setting (Lemma \ref{lem:numberoflevels}). In practice, $r$ can be a small constant so that the different between $I(Y;X_{1:r})$ and capacity is negligible, as we will see from the example in the next section. Note that the relaxed condition on $r$ is thankfully due to the power constraint. Unlike the AWGN-good setting, here we no longer have the term $P_e({\Lambda'}^N,\sigma^2)$ in the upper bound of the error probability, since the bottom lattice $\Lambda'$ does not carry any message.
\end{remark}

\subsection{Polar Codes for Asymmetric Channels}

Since the component channels are asymmetric, we need polar codes for asymmetric channels to achieve their capacity. Fortunately, polar codes for the binary memoryless asymmetric (BMA) channels have been introduced in \cite{aspolarcodes,Mondelli18} recently.
\label{sec:polarAsym}
\begin{definition}[Bhattacharyya Parameter for BMA Channel \cite{polarsource,aspolarcodes}]\label{deft:asymZ}
Let $W$ be a BMA channel with input $X \in \mathcal{X}=\{0,1\}$ and output $Y \in \mathcal{Y}$, and let $P_X$ and $P_{Y|X}$ denote the input distribution and channel transition probability, respectively. The Bhattacharyya parameter $Z$ for channel $W$ is the defined as
\begin{eqnarray}
Z(X|Y)&=&2\sum\limits_{y}P_Y(y)\sqrt{P_{X|Y}(0|y)P_{X|Y}(1|y)} \notag\ \\
&=&2\sum\limits_{y}\sqrt{P_{X,Y}(0,y)P_{X,Y}(1,y)}. \notag
\end{eqnarray}
\end{definition}
Note that this definition reduces to that for the BMS channel when $P_X$ is uniform.

The next lemma shows that adding an observable at the output of $W$ will not increase $Z$.
\begin{lemma}[Conditioning reduces Bhattacharyya parameter $Z$]\label{lem:CondiReduce}
Let $(X,Y,Y')\sim P_{X,Y,Y'}, X\in\mathcal{X}=\{0,1\}, Y\in \mathcal{Y},Y'\in \mathcal{Y}'$, we have
\begin{eqnarray}
Z(X|Y,Y')\leq Z(X|Y). \notag
\end{eqnarray}
\end{lemma}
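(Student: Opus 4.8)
The plan is to prove this by a direct computation, exploiting the sum-structure of the Bhattacharyya parameter in its second form, $Z(X|Y)=2\sum_y\sqrt{P_{X,Y}(0,y)P_{X,Y}(1,y)}$. First I would write $Z(X|Y,Y')$ using this form with the pair $(Y,Y')$ as the output, and then use the Markov/joint factorization $P_{X,Y,Y'}(x,y,y')=P_{X,Y}(x,y)P_{Y'|X,Y}(y'|x,y)$ to expand the summand. The goal is to carry out the sum over $y'$ \emph{inside} the square root, which is where the inequality comes from.

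The key step is the following elementary inequality: for nonnegative reals $a_1,\dots,a_k,b_1,\dots,b_k$,
\[
\sum_{j}\sqrt{a_j b_j}\leq \sqrt{\Bigl(\sum_j a_j\Bigr)\Bigl(\sum_j b_j\Bigr)},
\]
which is just the Cauchy--Schwarz inequality applied to the vectors $(\sqrt{a_j})$ and $(\sqrt{b_j})$. Applying this with the index $j$ ranging over $y'$, $a_{y'}=P_{X,Y,Y'}(0,y,y')$ and $b_{y'}=P_{X,Y,Y'}(1,y,y')$ for each fixed $y$, we get
\[
\sum_{y'}\sqrt{P_{X,Y,Y'}(0,y,y')P_{X,Y,Y'}(1,y,y')}\leq\sqrt{P_{X,Y}(0,y)P_{X,Y}(1,y)},
\]
using that $\sum_{y'}P_{X,Y,Y'}(x,y,y')=P_{X,Y}(x,y)$. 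Summing this over $y$ and multiplying by $2$ gives exactly $Z(X|Y,Y')\leq Z(X|Y)$.

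So the proof is: (i) write both quantities in the joint-probability form of Definition~\ref{deft:asymZ}; (ii) apply Cauchy--Schwarz to the inner sum over $y'$ for each $y$; (iii) recognize the marginalization $\sum_{y'}P_{X,Y,Y'}(x,y,y')=P_{X,Y}(x,y)$ and sum over $y$. I do not anticipate a real obstacle here — the statement is a one-line consequence of Cauchy--Schwarz once the right form of $Z$ is used; the only thing to be careful about is handling terms where $P_Y(y)=0$ (or where some $P_{X,Y}(x,y)=0$), which is harmless since all summands are then zero. For the continuous-output case (as in \eqref{eqn:transition}) the sums are replaced by integrals and the same Cauchy--Schwarz step applies verbatim.
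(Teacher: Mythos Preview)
Your proof is correct and essentially identical to the paper's own argument: both write $Z(X|Y,Y')$ in the joint-probability form, apply Cauchy--Schwarz to the inner sum over $y'$ for each fixed $y$, and marginalize to recover $Z(X|Y)$.
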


\begin{proof}
\begin{eqnarray}
Z(X|Y,Y')&=&2\sum\limits_{y,y'}\sqrt{P_{X,Y,Y'}(0,y,y')P_{X,Y,Y'}(1,y,y')} \notag \\
&=&2\sum\limits_{y}\sum\limits_{y'}\sqrt{P_{X,Y,Y'}(0,y,y')}\sqrt{P_{X,Y,Y'}(1,y,y')} \notag \\
&\stackrel{(a)}\leq& 2\sum\limits_{y}\sqrt{\sum\limits_{y'}P_{X,Y,Y'}(0,y,y')}\sqrt{\sum\limits_{y'}P_{X,Y,Y'}(1,y,y')} \notag \\
&=&2\sum\limits_{y}\sqrt{P_{X,Y}(0,y)P_{X,Y}(0,y)} \notag
\end{eqnarray}
where $(a)$ follows from Cauchy-Schwarz inequality.
\end{proof}

Let $X^{1:N}$ and $Y^{1:N}$ be the input and output vector after $N$ independent uses of $W$. For simplicity, denote the distribution of $(X^i,Y^i)$ by $P_{XY}=P_X P_{Y|X}$ for $i \in [N]$. The following property of the polarized random variables $U^{1:N}=X^{1:N}G_N$ is well known.

\begin{theorem}[Polarization of Random Variables \cite{aspolarcodes}]\label{theorem:polarization}
For any $\beta\in(0,1/2)$,
\begin{eqnarray}
\begin{aligned}
&\left.\left.\lim_{N\rightarrow\infty}\frac{1}{N}\right\vert\left\{i:Z(U^i|U^{1:i-1})\geq1-2^{-N^{\beta}}\right\}\right\vert=H(X),\\
&\left.\left.\lim_{N\rightarrow\infty}\frac{1}{N}\right\vert\left\{i:Z(U^i|U^{1:i-1})\leq2^{-N^{\beta}}\right\}\right\vert=1-H(X),\\
&\left.\left.\lim_{N\rightarrow\infty}\frac{1}{N}\right\vert\left\{i:Z(U^i|U^{1:i-1},Y^{1:N})\geq1-2^{-N^{\beta}}\right\}\right\vert=H(X|Y),\\
&\left.\left.\lim_{N\rightarrow\infty}\frac{1}{N}\right\vert\left\{i:Z(U^i|U^{1:i-1},Y^{1:N})\leq2^{-N^{\beta}}\right\}\right\vert=1-H(X|Y),
\end{aligned}
\end{eqnarray}
and
\begin{eqnarray}
\begin{aligned}
&\left.\left.\lim_{N\rightarrow\infty}\frac{1}{N}\right\vert\left\{i:Z(U^i|U^{1:i-1},Y^{1:N})\leq2^{-N^{\beta}}\text{ and }Z(U^i|U^{1:i-1})\geq1-2^{-N^{\beta}}\right\}\right\vert=I(X;Y),\\
&\left.\left.\lim_{N\rightarrow\infty}\frac{1}{N}\right\vert\left\{i:Z(U^i|U^{1:i-1},Y^{1:N})\geq2^{-N^{\beta}}\text{ or }Z(U^i|U^{1:i-1})\leq1-2^{-N^{\beta}}\right\}\right\vert=1-I(X;Y).
\end{aligned}
\end{eqnarray}
\end{theorem}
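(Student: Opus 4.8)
The plan is to reduce the theorem to the polarization of two conditional-entropy processes and then to sharpen the convergence to the sub-exponential rate $2^{-N^{\beta}}$, following the source/channel polarization scheme of \cite{aspolarcodes}. First I would pass from the Bhattacharyya parameter to conditional entropy: for a binary $X$ with side information $Y$ there are two-sided estimates relating $Z(X|Y)$ and $H(X|Y)$ — schematically $Z(X|Y)^{2}\le H(X|Y)\le\log_{2}\!\bigl(1+Z(X|Y)\bigr)$, together with symmetric estimates near the endpoint $1$ (see \cite{polarsource}) — which guarantee $Z\to 0\iff H\to 0$ and $Z\to 1\iff H\to 1$. Applying this with the observation being $U^{1:i-1}$ alone, and then $U^{1:i-1}$ together with $Y^{1:N}$, it suffices to show that the processes $H^{(i)}_{N}\triangleq H(U^{i}|U^{1:i-1})$ and $\mathcal{H}^{(i)}_{N}\triangleq H(U^{i}|U^{1:i-1},Y^{1:N})$ both polarize to $\{0,1\}$ with the stated limiting fractions, and that the ``$H\approx 0$'' portion is reached with $Z\le 2^{-N^{\beta}}$.

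Next I would establish entropy polarization. Since $U^{1:N}=X^{1:N}G_{N}$ is invertible, the chain rule gives $\frac1N\sum_{i}H^{(i)}_{N}=\frac1N H(X^{1:N})=H(X)$ and $\frac1N\sum_{i}\mathcal{H}^{(i)}_{N}=\frac1N H(X^{1:N}|Y^{1:N})=H(X|Y)$. Indexing the subchannels by a uniform bit sequence, the basic transform $(X_{1},X_{2})\mapsto(X_{1}\oplus X_{2},X_{2})$ satisfies $H(X_{1}\oplus X_{2})\ge H(X)$ and $H(X_{2}|X_{1}\oplus X_{2})=2H(X)-H(X_{1}\oplus X_{2})\le H(X)$ (and the analogous identities with $Y_{1},Y_{2}$ adjoined), so each entropy process is a bounded martingale and converges almost surely by the martingale convergence theorem. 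That the limit is $\{0,1\}$-valued follows from the standard fact that the Bhattacharyya parameter, which dominates a quantity of the form $\sqrt{H(1-H)}$, contracts strictly unless $H$ is already near an endpoint; the limiting masses are then pinned down by the expectations above to be $\mathbb{P}(H_{\infty}=1)=H(X)$ and $\mathbb{P}(\mathcal{H}_{\infty}=1)=H(X|Y)$. This gives the first four limits at any fixed threshold $\delta>0$.

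To combine the two processes, I would invoke Lemma~\ref{lem:CondiReduce}: adjoining $Y^{1:N}$ never increases the Bhattacharyya parameter, so $Z(U^{i}|U^{1:i-1},Y^{1:N})\le Z(U^{i}|U^{1:i-1})$, hence also $\mathcal{H}^{(i)}_{N}\le H^{(i)}_{N}$. Consequently the index set on which $\mathcal{H}^{(i)}_{N}\approx 0$ contains the set on which $H^{(i)}_{N}\approx 0$, and the ``good'' set of the second display — indices with $Z(U^{i}|U^{1:i-1},Y^{1:N})$ small \emph{and} $Z(U^{i}|U^{1:i-1})$ close to $1$ — is exactly their set difference. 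Its normalized cardinality therefore tends to $\bigl(1-H(X|Y)\bigr)-\bigl(1-H(X)\bigr)=I(X;Y)$, and the complementary set to $1-I(X;Y)$.

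Finally, to replace the fixed threshold by $2^{-N^{\beta}}$ I would invoke the rate-of-polarization argument of \cite{polarcodes1}: fix $\beta<\beta'<1/2$; by the previous step, after $m_{0}$ applications of the basic transform the process has entered $(0,\zeta)$ for a small constant $\zeta$ with probability at least $\bigl(1-H(X|Y)\bigr)-o(1)$ on the channel side (resp.\ $1-H(X)$ on the source side), and from that point on the squaring branch $Z\mapsto Z^{2}$ occurs on a fraction close to $1/2$ of the remaining $m-m_{0}$ steps with probability $1-o(1)$ by a Chernoff bound, forcing $Z$ below $2^{-2^{\beta' m}}\le 2^{-N^{\beta}}$; the endpoint-$1$ events are handled symmetrically. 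Feeding these refined thresholds back into the combining step completes the proof. I expect the main obstacle to be carrying out this last step \emph{jointly} for the coupled pair $\bigl(Z(U^{i}|U^{1:i-1}),\,Z(U^{i}|U^{1:i-1},Y^{1:N})\bigr)$: both coordinates are driven by the same random index sequence, so one must verify that the ``good'' event inherits the sub-exponential rate from both coordinates simultaneously. The remaining ingredients are the now-standard source and channel polarization machinery.
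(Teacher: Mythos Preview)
The paper does not prove this theorem at all: it is quoted verbatim from \cite{aspolarcodes} (with supporting rate results from \cite{polarcodes1,polarchannelandsource,polarsource}) and used as a black box. There is therefore no ``paper's own proof'' to compare against.

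That said, your sketch is essentially the standard derivation one finds in those references, and it is sound. The reduction from $Z$ to $H$ via two-sided bounds, the bounded-martingale argument for the entropy processes with limiting masses read off from the chain-rule averages, the containment $\{Z(U^i|U^{1:i-1})\le\delta\}\subset\{Z(U^i|U^{1:i-1},Y^{1:N})\le\delta\}$ coming from Lemma~\ref{lem:CondiReduce}, and the Ar{\i}kan--Telatar bootstrap for the rate $2^{-N^\beta}$ are exactly the ingredients of \cite{aspolarcodes,polarcodes1,polarsource}. Your closing worry about the \emph{joint} rate for the coupled pair is not a real obstacle: once you have, separately, that the fraction of indices with $Z(U^i|U^{1:i-1},Y^{1:N})\le 2^{-N^\beta}$ tends to $1-H(X|Y)$ and the fraction with $Z(U^i|U^{1:i-1})\le 2^{-N^\beta}$ tends to $1-H(X)$, the monotonicity (Lemma~\ref{lem:CondiReduce}) forces the second set inside the first, so the set difference has normalized size tending to $I(X;Y)$ with the sub-exponential threshold already in place on both sides. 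The companion ``near $1$'' rate follows by the dual argument (or by applying the same bootstrap to $1-Z$, using the one-step relation $Z^{-}\ge 2Z-Z^{2}$).
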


\medskip

The Bhattacharyya parameter for asymmetric models was originally defined for distributed source coding in \cite{polarsource}. By the duality between channel coding and source coding, it can be also used to construct capacity-achieving polar codes for BMA channels \cite{aspolarcodes}. Actually, $Z(U^i|U^{1:i-1})$ is the Bhattacharyya parameter for a single source $X$ (without side information).

The Bhattacharyya parameter of a BMA channel can be related to that of a symmetric channel. To this aim, we use a symmetrization technique which creates a BMS channel $\tilde{W}$ from the BMA channel $W$ \cite{polarchannelandsource,aspolarcodes}.

\begin{lemma}[Symmetrization]\label{lem:Asy2sym}
Let $\tilde{W}$ be a binary-input channel with input $\tilde{X} \in \mathcal{X}=\{0,1\}$ and output $\tilde{Y} \in \mathcal{Y}\times\mathcal{X}$, built from the asymmetric channel $W$ by treating $\tilde{X}$ as the new input and $\tilde{X}\oplus X$ as an additional output, as shown in Figure \ref{fig:2Asym2sym}. Then $\tilde{W}$ is a binary-input symmetric channel in the sense that $P_{\tilde{Y}|\tilde{X}}(y,x\oplus\tilde{x}|\tilde{x})=P_{Y,X} (y,x)$. Therefore, the optimal input distribution of $\tilde{W}$ is the uniform distribution.
\end{lemma}

\begin{figure}[h]
    \centering
    \includegraphics[width=7cm]{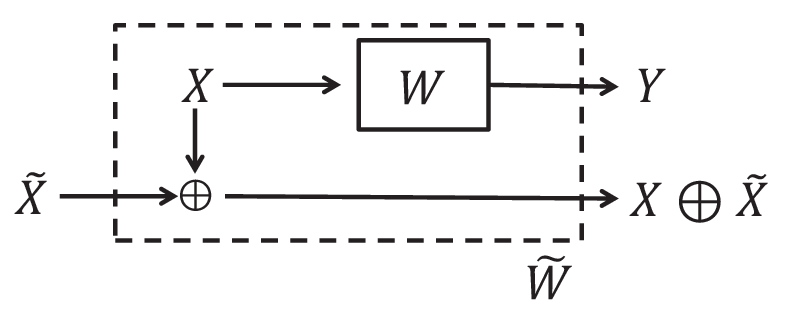}
    \caption{The relationship between the asymmetric channel $W$ and the symmetrized channel $\tilde{W}$.}
    \label{fig:2Asym2sym}
\end{figure}

\begin{proof}
\begin{eqnarray}
P_{\tilde{Y}|\tilde{X}}(y,x\oplus\tilde{x}|\tilde{x})&=& \frac{P_{\tilde{Y},\tilde{X}}(y,x\oplus\tilde{x},\tilde{x})}{P_{\tilde{X}}(\tilde{x})}
= \frac{\sum_{x'\in \mathcal{X}} P_{\tilde{Y},X,\tilde{X}}(y,x\oplus\tilde{x},x',\tilde{x})}{P_{\tilde{X}}(\tilde{x})} \notag \\
&\stackrel{(a)}=& \frac{\sum_{x'\in X} P_{Y|X}(y|x') P_{X\oplus\tilde{X},X,\tilde{X}}(x\oplus\tilde{x},x',\tilde{x})}{P_{\tilde{X}}(\tilde{x})} \notag \\
&\stackrel{(b)}=& \frac{\sum_{x'\in X} P_{Y|X}(y|x') P_{X\oplus\tilde{X}|X,\tilde{X}}(x\oplus\tilde{x}|x',\tilde{x}) P_{X}(x')P_{\tilde{X}}(\tilde{x})}{P_{\tilde{X}}(\tilde{x})} \notag \\
&\stackrel{(c)}=& P_{Y,X} (y,x). \notag
\end{eqnarray}
The equalities $(a)$-$(c)$ follow from $(a)$ $Y$ is only dependent of $X$, $(b)$ $X$ and $\tilde{X}$ are independent of each other and $(c)$ $P_{X\oplus\tilde{X}|X,\tilde{X}}(x\oplus\tilde{x}|x',\tilde{x})=\mathds{1}(x'=x)$.
\end{proof}

The following theorem connects the Bhattacharyya Parameter of a BMA channel $W$ and that of the symmetrized channel $\tilde{W}$. Denote by $W_N$ and $\tilde{W}_N$ the combining channels of $N$ uses of $W$ and $\tilde{W}$, respectively.

\begin{theorem}[Connection Between Bhattacharyya Parameters \cite{aspolarcodes}]\label{theorem:conasypolar}
Let $\tilde{X}^{1:N}$ and $\tilde{Y}^{1:N}=\left(Y^{1:N}, X^{1:N}\oplus\tilde{X}^{1:N}\right)$ be the input and output vectors of $\tilde{W}$, respectively, and let  $U^{1:N}$=$X^{1:N}G_N$ and $\tilde{U}^{1:N}$=$\tilde{X}^{1:N}G_N$. The Bhattacharyya parameter of each subchannel of $W_N$ is equal to that of each subchannel of $\tilde{W}_N$, i.e.,
\begin{eqnarray}
Z(U^i|U^{1:i-1},Y^{1:N})=\tilde{Z}(\tilde{W}_N^{(i)})=Z(\tilde{U}^i|\tilde{U}^{1:i-1},Y^{1:N}, X^{1:N}\oplus\tilde{X}^{1:N}). \notag\
\label{eqn:asymmetricz}
\end{eqnarray}
\end{theorem}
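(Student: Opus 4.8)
The plan is to prove the identity by directly relating the two polarized subchannels through the symmetrization construction of Lemma~\ref{lem:Asy2sym}. The starting observation is that Lemma~\ref{lem:Asy2sym} establishes a bijective relationship between the pair $(X,Y)$ for the asymmetric channel $W$ and the pair $(\tilde X,\tilde Y)=(\tilde X, (X\oplus\tilde X, Y))$ for the symmetrized channel $\tilde W$: once we fix $\tilde X$, the map $X\mapsto X\oplus\tilde X$ is a bijection on $\{0,1\}$, so conditioning on $\tilde X$ and observing $X\oplus\tilde X$ is informationally the same as observing $X$. First I would lift this to the block length $N$: since $G_N$ is invertible over $\mathbb{F}_2$ and acts componentwise-linearly, $\tilde U^{1:N}=\tilde X^{1:N}G_N$ and $U^{1:N}=X^{1:N}G_N$ are related by $U^{1:N}\oplus\tilde U^{1:N}=(X^{1:N}\oplus\tilde X^{1:N})G_N$, which is a deterministic function of the $\tilde W_N$ output component $X^{1:N}\oplus\tilde X^{1:N}$. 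In particular, from the side information $(\tilde U^{1:i-1}, X^{1:N}\oplus\tilde X^{1:N}, Y^{1:N})$ one can reconstruct $U^{1:i-1}$ (add the first $i-1$ coordinates of $(X^{1:N}\oplus\tilde X^{1:N})G_N$), and from $(U^{1:i-1}, X^{1:N}\oplus\tilde X^{1:N}, Y^{1:N})$ one can likewise reconstruct $\tilde U^{1:i-1}$. Moreover $\tilde U^i\oplus U^i$ is a deterministic function of $X^{1:N}\oplus\tilde X^{1:N}$, hence of the conditioning variables.

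The key step is then a change-of-variables computation in the definition of the Bhattacharyya parameter (Definition~\ref{deft:asymZ}). Write out
\[
\tilde Z(\tilde U^i|\tilde U^{1:i-1},X^{1:N}\oplus\tilde X^{1:N},Y^{1:N})
=2\sum_{u^{1:i-1},\,v^{1:N},\,y^{1:N}}\sqrt{P(\tilde U^i{=}0,\cdot)\,P(\tilde U^i{=}1,\cdot)},
\]
where the dotted arguments are the realizations $\tilde U^{1:i-1}=u^{1:i-1}$, $X^{1:N}\oplus\tilde X^{1:N}=v^{1:N}$, $Y^{1:N}=y^{1:N}$. Using the independence of $X^{1:N}$ and $\tilde X^{1:N}$ and the uniformity of $\tilde X^{1:N}$ (hence uniformity of $v^{1:N}=X^{1:N}\oplus\tilde X^{1:N}$ and its independence from $X^{1:N}$, as in the proof of Lemma~\ref{lem:Asy2sym}), I would expand the joint law of $(\tilde U^{1:i},\tilde Y^{1:N})$ in terms of the joint law of $(U^{1:i},Y^{1:N})$. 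Concretely, for a fixed value $w^{1:N}:=v^{1:N}G_N$ of $\tilde U^{1:N}\oplus U^{1:N}$, summing over $v^{1:N}$ is the same as summing over $w^{1:N}$ (since $G_N$ is a bijection), and the event $\{\tilde U^{1:i-1}=u^{1:i-1}, \tilde U^i=b\}$ translates to $\{U^{1:i-1}=u^{1:i-1}\oplus w^{1:i-1}, U^i=b\oplus w^i\}$. The uniform prior on $w^{1:N}$ contributes a factor $2^{-N}$ that is common to the $b=0$ and $b=1$ terms, and after the geometric mean is taken it reassembles, upon summing over $w^{1:N}$, into exactly $2\sum_{u^{1:i-1},y^{1:N}}\sqrt{P_{U^i,U^{1:i-1},Y^{1:N}}(0,\cdot)P_{U^i,U^{1:i-1},Y^{1:N}}(1,\cdot)}=Z(U^i|U^{1:i-1},Y^{1:N})$, because the geometric mean $\sqrt{P(\cdot,b{=}0)P(\cdot,b{=}1)}$ is invariant under the global flip $b\mapsto b\oplus w^i$ of the two branches.

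I expect the main obstacle to be bookkeeping rather than conceptual: one must be careful that the side information $X^{1:N}\oplus\tilde X^{1:N}$ is being treated as an \emph{observation} (summed over in the Bhattacharyya sum) and not fixed, and that the translation of indices $U^{1:i-1}\leftrightarrow\tilde U^{1:i-1}$ via the partial action of $G_N$ is handled consistently under the sum — in particular that the reindexing $v^{1:N}\mapsto w^{1:N}=v^{1:N}G_N$ is a measure-preserving bijection and that the first $i$ coordinates of $w^{1:N}$ depend only on the first $i$ coordinates of $v^{1:N}$, which holds because $G_N$ is lower-triangular up to bit-reversal permutation in Ar{\i}kan's construction. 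Once these are pinned down, the identity $Z(U^i|U^{1:i-1},Y^{1:N})=\tilde Z(\tilde U^i|\tilde U^{1:i-1},X^{1:N}\oplus\tilde X^{1:N},Y^{1:N})$ follows by matching the two sums term by term. An alternative, slightly slicker route that avoids the explicit $G_N$ algebra is to first prove the single-letter identity $Z(X|Y)=\tilde Z(\tilde X|X\oplus\tilde X, Y)$ directly from Lemma~\ref{lem:Asy2sym} and Definition~\ref{deft:asymZ}, then observe that $\tilde W$ is itself a genuine BMS channel whose $N$-fold Ar{\i}kan transform has subchannels given by the symmetric Bhattacharyya parameter, and finally invoke the fact that the asymmetric polarization transform of $W$ and the symmetric polarization transform of $\tilde W$ are driven by the \emph{same} underlying combining operations — so the per-subchannel identity propagates from the single-letter case to every level by the recursive structure of the polar transform.
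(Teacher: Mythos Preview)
The paper does not supply a proof of this theorem; it quotes the result from \cite{aspolarcodes} and proceeds. Your proof sketch is essentially correct and self-contained. The change-of-variables computation you outline --- rewriting the joint law of $(\tilde U^{1:i}, V^{1:N}, Y^{1:N})$ (with $V^{1:N}=X^{1:N}\oplus\tilde X^{1:N}$) as $2^{-N}$ times the joint law of $(U^{1:i}, Y^{1:N})$ evaluated at the shifted point $u^{1:i}=\tilde u^{1:i}\oplus w^{1:i}$ (where $w^{1:N}=v^{1:N}G_N$), then using the invariance of $\sqrt{P(\cdot,0)P(\cdot,1)}$ under the flip $b\mapsto b\oplus w^i$, and finally summing out the now-inert variable $w^{1:N}$ to cancel the $2^{-N}$ --- is valid and is the standard argument.

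One minor over-complication: you do not actually need the lower-triangular structure of $G_N$ here. All that is required is that $v^{1:N}\mapsto w^{1:N}=v^{1:N}G_N$ be a bijection on $\{0,1\}^N$ (invertibility of $G_N$ over $\mathbb{F}_2$), so that the sum over $v^{1:N}$ can be replaced by a sum over $w^{1:N}$; for each fixed $w^{1:N}$, the shift $\tilde u^{1:i-1}\mapsto u^{1:i-1}=\tilde u^{1:i-1}\oplus w^{1:i-1}$ is a bijection on $\{0,1\}^{i-1}$, after which the summand no longer depends on $w^{1:N}$ and the sum over it contributes exactly $2^N$. Your alternative recursive route through the single-letter identity $Z(X|Y)=\tilde Z(\tilde X\mid X\oplus\tilde X, Y)$ also works.
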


Now, we are in a position to construct polar codes for the BMA channel. Define the frozen set $\mathcal{\tilde{F}}$ and information set $\mathcal{\tilde{I}}$ of the symmetric polar codes as follows:
\begin{eqnarray}
\begin{cases}
\begin{aligned}
&\text{frozen set: } \mathcal{\tilde{F}}=\{i\in[N]:Z(U^i|U^{1:i-1},Y^{1:N})>2^{-N^{\beta}}\}\\
&\text{information set: } \mathcal{\tilde{I}}=\{i\in[N]:Z(U^i|U^{1:i-1},Y^{1:N})\leq2^{-N^{\beta}}\}.
\end{aligned}
\end{cases}
\end{eqnarray}

By Theorem \ref{theorem:conasypolar}, the Bhattacharyya parameters of the symmetrized channel $\tilde{W}$ and the asymmetric channel $W$ are the same. However, the channel capacity of $\tilde{W}$ is $I(\tilde{X};X\oplus\tilde{X})+I(\tilde{X};Y|X\oplus\tilde{X})=1-H(X)+I(X;Y)$,
which is $1-H(X)$ more than the capacity of $W$. To obtain the real capacity $I(X;Y)$ of $W$, the input distribution of $W$ needs to be adjusted to $P_X$. By polar lossless source coding, the indices with very small $Z(U^i|U^{1:i-1})$ should be removed from the information set $\mathcal{\tilde{I}}$ of the symmetrized channel, and the proportion of this part is $1-H(X)$ as $N\to \infty$. We name the remaining set as the information set $\mathcal{I}$ of the asymmetric channel $W$. Further, there are some bits which are uniformly distributed and can be made independent from the information bits; we name this set as the frozen set $\mathcal{F}$. In order to generate the desired input distribution $P_X$, the remaining bits are determined by the bits in $\mathcal{F}\cup\mathcal{I}$; we call it the shaping set $\mathcal{S}$. This process is depicted in Figure \ref{fig:symandasym}. We formally define the three sets as follows:
\begin{eqnarray}\label{eqn:asymdefinition}
\begin{cases}
\begin{aligned}
&\text{frozen set: } \mathcal{F}=\{i\in[N]:Z(U^i|U^{1:i-1},Y^{1:N})\geq1-2^{-N^{\beta}}\}\\
&\text{information set: } \mathcal{I}=\{i\in[N]:Z(U^i|U^{1:i-1},Y^{1:N})\leq2^{-N^{\beta}}\text{ and }Z(U^i|U^{1:i-1})\geq1-2^{-N^{\beta}}\}\\
&\text{shaping set: } \mathcal{S}=\left(\mathcal{F}\cup\mathcal{I}\right)^c.
\end{aligned}
\end{cases}
\end{eqnarray}

\begin{figure}[t]
    \centering
    \label{fig:1stPolar}
    \includegraphics[width=11cm]{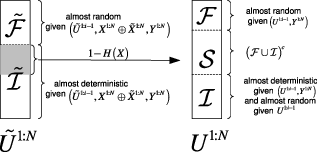}
    \caption{Polarization for symmetric and asymmetric channels.}
    \label{fig:symandasym}
\end{figure}

To find these sets, one can use Theorem \ref{theorem:conasypolar} to calculate $Z(U^i|U^{1:i-1},Y^{1:N})$ with the known technique for symmetric polar codes \cite{Ido}. We note that $Z(U^i|U^{1:i-1})$ can be computed in a similar way: one constructs a symmetric channel between $\tilde{X}$ and $X \oplus \tilde{X}$, which is actually a binary-input symmetric channel with cross-over probability $P_X(x=1)$. The above construction is equivalent to implementing shaping over the polar code for the symmetrized channel $\tilde{W}$.

Besides the construction, the decoding can also be converted to that of the symmetric polar code. If $X^{1:N}\oplus \tilde{X}^{1:N}=0$, we have $U^{1:N}=\tilde{U}^{1:N}$, which means the decoding result of $U^{1:N}$ equals to that of $\tilde{U}^{1:N}$. Thus, decoding of the polar code for $W$ can be treated as decoding of the polar code for $\tilde{W}$ given that $X\oplus \tilde{X}=0$. Clearly, the SC decoding complexity for asymmetric channel is also $O(N\log N)$. We summarize this observation as the following lemma.
\begin{lemma}[Decoding for Asymmetric Channel \cite{aspolarcodes}]\label{lem:AsymDec}
Let $y^{1:N}$ be a realization of $Y^{1:N}$ and $\hat{u}^{1:i-1}$ be the previous $i-1$ estimates of $u^{1:N}$. The likelihood ratio of $u^i$ is given by
\begin{eqnarray}\label{eqn:asydec}
\frac{P_{U^i|U^{1:i-1},Y^{1:N}}(0|\hat{u}^{1:i-1},y^{1:N})}{P_{U^i|U^{1:i-1},Y^{1:N}}(1|\hat{u}^{1:i-1},y^{1:N})}=
\frac{\tilde{W}_N^{(i)}((y^{1:N},0^{1:N}),\hat{u}^{1:i-1}|0)}{\tilde{W}_N^{(i)}((y^{1:N},0^{1:N}),\hat{u}^{1:i-1}|1)},
\end{eqnarray}
where $\tilde{W}_N^{(i)}$ denotes the transition probability of the $i$-th subchannel of $\tilde{W}_N$.
\end{lemma}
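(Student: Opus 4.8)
The plan is to verify the identity~\eqref{eqn:asydec} by a direct manipulation of conditional probabilities, exploiting the symmetrization construction of Lemma~\ref{lem:Asy2sym} together with the observation (already used to justify the decoding reduction) that conditioning on the event $X^{1:N}\oplus\tilde{X}^{1:N}=0^{1:N}$ forces $U^{1:N}=\tilde{U}^{1:N}$. First I would expand the left-hand side using the definition of the polarized transform: since $U^{1:N}=X^{1:N}G_N$ and $G_N$ is an involution over $\mathrm{GF}(2)$, the joint law of $(U^{1:N},Y^{1:N})$ is obtained from the product law $\prod_{i}P_{XY}(x^i,y^i)$ by the change of variables $x^{1:N}=u^{1:N}G_N$, and the subchannel transition probability $\tilde{W}_N^{(i)}$ is the standard marginal over the future coordinates $u^{i+1:N}$. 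Writing both numerator and denominator of the likelihood ratio of $u^i$ in this form will exhibit them as (proportional to) $P_{U^i\mid U^{1:i-1},Y^{1:N}}(b\mid \hat u^{1:i-1},y^{1:N})$ for $b\in\{0,1\}$, so the common normalizing factor $P_{U^{1:i-1},Y^{1:N}}(\hat u^{1:i-1},y^{1:N})$ cancels in the ratio.

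Next I would turn to the right-hand side. By Lemma~\ref{lem:Asy2sym}, the symmetrized channel $\tilde{W}$ has output $\tilde Y=(X\oplus\tilde X, Y)$ and satisfies $P_{\tilde Y\mid\tilde X}(x\oplus\tilde x,y\mid\tilde x)=P_{X,Y}(x,y)$. Hence evaluating the symmetrized subchannel $\tilde{W}_N^{(i)}$ at the particular output $(y^{1:N},0^{1:N})$—i.e. fixing the ``$X\oplus\tilde X$'' block to $0$—substitutes $\tilde x^{1:N}=x^{1:N}$ into the product law, so that $\tilde{W}_N^{(i)}((y^{1:N},0^{1:N}),\tilde u^{1:i-1}\mid \tilde u^i)$ becomes, up to the very same $\tilde u$-independent and $i$-independent factor coming from $P_{X^{1:N}}$ evaluated along the relevant coset, exactly the unnormalized quantity $\sum_{u^{i+1:N}}\prod_j P_{XY}(x^j,y^j)$ with $x^{1:N}=u^{1:N}G_N$. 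Taking the ratio over $\tilde u^i\in\{0,1\}$ with $\tilde u^{1:i-1}=\hat u^{1:i-1}$ then kills this factor and matches the expansion of the left-hand side from the previous paragraph. The identification $U^{1:N}=\tilde U^{1:N}$ under $X\oplus\tilde X=0$, noted before the lemma, is what makes the estimates $\hat u^{1:i-1}$ play the same role on both sides.

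The only delicate point—and the step I expect to be the main obstacle—is bookkeeping the input-distribution factor correctly. In the symmetric world $\tilde X^{1:N}$ is uniform, so $\tilde{W}_N^{(i)}$ carries an implicit weight $2^{-(N-i)}$ over the summed-out future bits and no genuine prior on $\tilde X^{1:i-1}$; in the asymmetric world $P_{U^i\mid U^{1:i-1},Y^{1:N}}$ is a genuine posterior that includes the nonuniform prior $P_{X^{1:N}}(x^{1:N})=\prod_j P_X(x^j)$ with $x^{1:N}=u^{1:N}G_N$. One must check that, after the substitution $\tilde x^{1:N}=x^{1:N}$ forced by conditioning on $X\oplus\tilde X=0$, this very prior re-enters the symmetrized expression through the $P_{X,Y}$ factors supplied by Lemma~\ref{lem:Asy2sym}, and that whatever residual $\tilde u^i$-independent constant survives (it depends only on $\hat u^{1:i-1}$, $y^{1:N}$ and $N$) is genuinely common to numerator and denominator. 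Once that cancellation is made explicit, the equality~\eqref{eqn:asydec} follows, and the $O(N\log N)$ complexity claim is immediate since the right-hand side is computed by the ordinary SC recursion for the symmetric polar code $\tilde{W}$ on the augmented output $(y^{1:N},0^{1:N})$.
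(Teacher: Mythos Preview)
Your proposal is correct and follows the same route the paper sketches just before the lemma: the paper does not give a formal proof but simply observes that conditioning on $X^{1:N}\oplus\tilde X^{1:N}=0$ forces $U^{1:N}=\tilde U^{1:N}$, so decoding $W$ reduces to decoding $\tilde W$ on the augmented output $(y^{1:N},0^{1:N})$, and cites \cite{aspolarcodes}. Your computation makes this precise via Lemma~\ref{lem:Asy2sym}, which gives $\tilde W((y,0)\mid\tilde x)=P_{X,Y}(\tilde x,y)$ and hence $\tilde W_N^{(i)}((y^{1:N},0^{1:N}),\hat u^{1:i-1}\mid u^i)=2^{-(N-1)}\sum_{u^{i+1:N}}\prod_j P_{X,Y}((u^{1:N}G_N)_j,y^j)$, so the nonuniform prior is indeed carried by the $P_{X,Y}$ factors and the $u^i$-independent constant cancels in the ratio exactly as you anticipated.
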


In \cite{aspolarcodes}, the bits in $\mathcal{F} \cup \mathcal{S}$ are all chosen according to $P_{U^i|U^{1:i-1}}(u^i|u^{1:i-1})$, which can also be calculated using \eqref{eqn:asydec} (treating $Y$ as an independent variable and remove it). However, in order to be compatible with polar lattices, we modify the scheme such that the bits in $\mathcal{F}$ are uniformly distributed over $\{0,1\}$ while the bits in $\mathcal{S}$ are still chosen according to $P_{U^i|U^{1:i-1}}(u^i|u^{1:i-1})$. The expectation of the decoding error probability still vanishes with $N$. The following theorem is an extension of the result in \cite[Theorem 3]{aspolarcodes}, and its proof is given in Appendix \ref{appendix0}.

\begin{theorem}\label{theorem:codingtheorem}
Consider a polar code with the following encoding and decoding strategies for a BMA channel.
\begin{itemize}
\item Encoding: Before sending the codeword $x^{1:N}=u^{1:N}G_N$, the index set $[N]$ are divided into three parts: the frozen set $\mathcal{F}$, the information set $\mathcal{I}$ and the shaping set $\mathcal{S}$ which are defined in \eqref{eqn:asymdefinition}. The encoder places uniformly distributed information bits in $\mathcal{I}$, and fills $\mathcal{F}$ with a uniform random $\{0,1\}$ sequence which is shared between the encoder and the decoder. The bits in $\mathcal{S}$ are generated by a mapping $\phi_{\mathcal{S}}\triangleq\{\phi_i\}_{i\in\mathcal{S}}$ in the family of randomized mappings $\Phi_{\mathcal{S}}$, which yields the following distribution:
\begin{equation}
u^i=\phi_i(u^{1:i-1})=
\begin{cases}
0 \;\;\;\; \text{with probability } P_{U^i|U^{1:i-1}}(0|u^{1:i-1}),\notag\ \\
1 \;\;\;\; \text{with probability } P_{U^i|U^{1:i-1}}(1|u^{1:i-1}). \notag\
\end{cases}
\label{eqn:encoding}
\end{equation}
\item Decoding: The decoder receives $y^{1:N}$ and estimates $\hat{u}^{1:N}$ of $u^{1:N}$ according to the rule
\begin{equation}
\hat{u}^i=
\begin{cases}
u^i, \;\;\;\;\;\;\;\;\;\;\;\;\;\;\;\;\text{if } i\in\mathcal{F} \\
\phi_i(\hat{u}^{1:i-1}),   \;\;\;\;\; \text{if } i\in\mathcal{S} \\
\underset{u}{\operatorname{argmax}} \; P_{U^i|U^{1:i-1},Y^{1:N}}(u|\hat{u}^{1:i-1},y^{1:N}), \,\text{if } i\in\mathcal{I}. \notag\
\end{cases}.
\label{eqn:decoding1}
\end{equation}
\end{itemize}
With the above encoding and decoding, the message rate can be arbitrarily close to $I(X;Y)$ and the expectation of the decoding error probability over the randomized mappings satisfies $E_{\Phi_{\mathcal{S}}}[P_e^{SC}(\phi_{\mathcal{S}})]= O(2^{-N^{\beta'}})$ for $\beta'<\beta<0.5$, where $\beta$ is used to choose the frozen set, the information set, and the shaping set as in \eqref{eqn:asymdefinition}.
\end{theorem}

\medskip

By an averaging argument, there exists a deterministic mapping $\phi_{\mathcal{S}}$ such that $P_e(\phi_{\mathcal{S}})= O(2^{-N^{\beta'}})$. However, it is difficult to actually find such a deterministic mapping. In practice, we may share a random mapping $\phi_{\mathcal{S}}$ between the encoder and decoder, i.e., let them have access to the same source of randomness (e.g., using the same seed for the pseudorandom number generators).

\subsection{Multilevel Polar Codes}

Next, our task is to construct polar codes to achieve the mutual information $I(Y;X_\ell|X_{1:\ell-1})$ for all levels. The construction of the preceding subsection is readily applicable to the construction for the first level $W_1$. To demonstrate the construction for other levels, we take the channel of the second level $W_2$ as an example. This is also a BMA channel with input $X_2\in\mathcal{X}=\{0,1\}$, output $Y\in\mathcal{Y}$ and side information $X_1$. Its channel transition probability is shown in \eqref{eqn:transition}. To construct a polar code for the second level, we propose the following two-step procedure.

\begin{figure}[t]
    \centering
    \includegraphics[width=13cm]{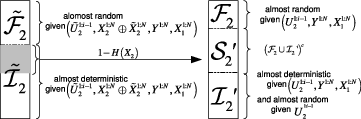}
    \caption{The first step of polarization in the construction for the second level.}
    \label{fig:2ndsymandasym}
\end{figure}

\begin{enumerate}[{Step}~1:]
\item Construct a polar code for the BMS channel with input vector $\tilde{X}_2^{1:N}=[\tilde{X}_2^1,\tilde{X}_2^2,\cdot\cdot\cdot,\tilde{X}_2^N]$ and output vector $\tilde{Y}^{1:N}=\left(X_2^{1:N}\oplus\tilde{X}_2^{1:N},Y^{1:N},X_1^{1:N}\right)$ where $\tilde{X}_2^i\in\mathcal{X}=\{0,1\}$ is uniformly distributed. At this step $X_1$ is regarded as a part of the outputs. Then the distribution of $X_2$ becomes the marginal distribution $\sum_{x_1,x_{3:r}}P_{X_{1:r}}(x_{1:r})$. Consider polarized random variables ${U}_2^{1:N}={X}_2^{1:N}G_N$ and $\tilde{U}_2^{1:N}=\tilde{X}_2^{1:N}G_N$. According to Theorem \ref{theorem:polarization}, the polarization gives us the three sets $\mathcal{F}_2$, $\mathcal{I}'_2$ and $\mathcal{S}'_2$ as shown in Figure \ref{fig:2ndsymandasym}. Similarly, we can prove that $\frac{|\mathcal{I}'_2|}{N}\rightarrow I(Y,X_1;X_2)$ and $\frac{|\mathcal{F}_2 \cup \mathcal{S}'_2|}{N}\rightarrow 1-I(Y,X_1;X_2)$ as $N\to \infty$. These three sets are defined as follows:
    \begin{eqnarray}\label{eqn:asymdefinition2nd}
\begin{cases}
\begin{aligned}
&\text{frozen set: } \mathcal{F}_2=\{i\in[N]:Z(U_2^i|U_2^{1:i-1},Y^{1:N},X_1^{1:N})\geq1-2^{-N^{\beta}}\}\\
&\text{information set: } \mathcal{I}'_2=\{i\in[N]:Z(U_2^i|U_2^{1:i-1},Y^{1:N},X_1^{1:N})\leq2^{-N^{\beta}}\text{and } \\
&\hspace{9em}Z(U_2^i|U_2^{1:i-1})\geq1-2^{-N^{\beta}}\}\\
&\text{shaping set: } \mathcal{S}'_2=\left(\mathcal{F}_2\cup\mathcal{I}'_2\right)^c.
\end{aligned}
\end{cases}
\end{eqnarray}

\item Treat $X_1^{1:N}$ as the side information for the encoder. Given $X_1^{1:N}$, the choices of ${X}_2^{1:N}$ are further restricted since $X_1$ and $X_2$ are generally correlated, i.e., $P_{X_1,X_2}(x_1,x_2)=f_{\sigma_s}(\mathcal{A}(x_1,x_2))/f_{\sigma_s}(\Lambda)$ (cf. Figure \ref{fig:LatticeGaussian}). By removing from $\mathcal{I}'_2$ the bits which are almost deterministic given $U_2^{1:i-1}$ and $X_1^{1:N}$, we obtain the information set $\mathcal{I}_2$ for $W_2$. Then the distribution of the input $X_2$ becomes the conditional distribution $P_{X_2|X_1}(x_2|x_1)$. The process is shown in Figure \ref{fig:2nd}.
    More precisely, the indices are divided into three portions as follows:
\begin{eqnarray}
   \notag\ 1&=&\underbrace{1-I(\tilde{X}_2;\tilde{X}_2\oplus X_2,X_1,Y)}_{|\mathcal{F}_2|/N}+I(\tilde{X}_2;\tilde{X}_2\oplus X_2,X_1,Y) \\ \notag\
    &\stackrel{\text{Step} 1}=&\underbrace{1-I(\tilde{X}_2;\tilde{X}_2\oplus X_2,X_1,Y)}_{|\mathcal{F}_2|/N}+\underbrace{I(\tilde{X}_2;\tilde{X}_2\oplus X_2)}_{|\mathcal{S}'_2|/N}+\underbrace{I(\tilde{X}_2;X_1,Y|\tilde{X}_2\oplus X_2)}_{|\mathcal{I}'_2|/N} \\ \notag\
    &\stackrel{\text{Step} 2}=& \underbrace{1-I(\tilde{X}_2;\tilde{X}_2\oplus X_2,X_1,Y)}_{|\mathcal{F}_2|/N}+\underbrace{1-H(X_2)}_{|\mathcal{S}'_2|/N}+\underbrace{I(X_2;X_1)}_{|\mathcal{S}_{X_1}|/N}+\underbrace{I(X_2;Y|X_1)}_{|\mathcal{I}_2|/N} \\\notag\
    &=&\underbrace{1-I(\tilde{X}_2;\tilde{X}_2\oplus X_2,X_1,Y)}_{|\mathcal{F}_2|/N}+\underbrace{1-H(X_2|X_1)}_{|\mathcal{S}_2|/N}+\underbrace{I(X_2;Y|X_1)}_{|\mathcal{I}_2|/N}
\end{eqnarray}

We give the formal statement of this procedure in the following lemma.
\end{enumerate}
\begin{figure}[t]
    \centering
    \includegraphics[width=11cm]{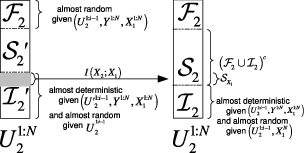}
    \caption{The second step of polarization in the construction for the second level.}
    \label{fig:2nd}
\end{figure}
\begin{lemma}
After the first step of polarization, we obtain the three sets $\mathcal{F}_2$, $\mathcal{I}'_2$ and $\mathcal{S}'_2$ in \eqref{eqn:asymdefinition2nd}. Let $\mathcal{S}_{X_1}$ denote the set of indices whose Bhattacharyya parameters satisfy $Z(U_2^i|U_2^{1:i-1},Y^{1:N},X_1^{1:N})\leq2^{-N^{\beta}}$, $Z(U_2^i|U_2^{1:i-1},X_1^{1:N})\leq1-2^{-N^{\beta}}$ and $Z(U_2^i|U_2^{1:i-1})\geq1-2^{-N^{\beta}}$. The proportion of $\mathcal{S}_{X_1}$ is asymptotically given by $\lim_{N\rightarrow\infty}\frac{|\mathcal{S}_{X_1}|}{N}=I(X_2;X_1)$. Then by removing $\mathcal{S}_{X_1}$ from $\mathcal{I}'_{2}$, we obtain the desired information set $\mathcal{I}_{2}$ corresponding to the mutual information $I(X_2;Y|X_1)$ associated with $W_2$. Formally, the three sets are obtained as follows:
\begin{eqnarray}\label{eqn:asymdefinition2nd2}
\begin{cases}
\begin{aligned}
&\text{frozen set: } \mathcal{F}_2=\{i\in[N]:Z(U_2^i|U_2^{1:i-1},Y^{1:N},X_1^{1:N})\geq1-2^{-N^{\beta}}\}\\
&\text{information set: } \mathcal{I}_2=\{i\in[N]:Z(U_2^i|U_2^{1:i-1},Y^{1:N},X_1^{1:N})\leq2^{-N^{\beta}}\text{ and } \\
&\hspace{9em}Z(U_2^i|U_2^{1:i-1},X_1^{1:N})\geq1-2^{-N^{\beta}}\}\\
&\text{shaping set: } \mathcal{S}_2=\left(\mathcal{F}_2\cup\mathcal{I}_2\right)^c.
\end{aligned}
\end{cases}
\end{eqnarray}
\end{lemma}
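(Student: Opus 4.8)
The plan is to verify the three claims of the lemma by reducing each to an application of the polarization theorems already stated, using the chain-rule identity displayed just before the lemma statement. First I would establish the asymptotic size of $\mathcal{S}_{X_1}$. Observe that the three Bhattacharyya conditions defining $\mathcal{S}_{X_1}$ say precisely that index $i$ is (a) good for decoding $U_2^i$ from $(U_2^{1:i-1},Y^{1:N},X_1^{1:N})$, (b) \emph{not} almost-deterministic given only $(U_2^{1:i-1},X_1^{1:N})$, and (c) almost-deterministic given only $U_2^{1:i-1}$. By the source-polarization half of Theorem \ref{theorem:polarization} applied to the source $X_2$ with side information $X_1$ (so that $Z(U_2^i|U_2^{1:i-1},X_1^{1:N})\le 2^{-N^\beta}$ on a fraction $1-H(X_2|X_1)$ of indices, while $Z(U_2^i|U_2^{1:i-1})\le 2^{-N^\beta}$ on a fraction $1-H(X_2)$), the set of indices satisfying (c) but violating the small-$Z$ version of (b) has asymptotic fraction $\big(1-H(X_2)\big)-\big(1-H(X_2|X_1)\big)=H(X_2|X_1)-H(X_2)=-I(X_2;X_1)$; correcting the sign by noting condition (b) is the \emph{complement} (large $Z$) event, one gets fraction $I(X_2;X_1)$. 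Condition (a) is automatically implied on all but a vanishing fraction of these indices since $\mathcal{S}_{X_1}\subseteq\mathcal{I}'_2$ already by construction (the indices of $\mathcal{I}'_2$ being good for $(Y,X_1)$-decoding), and channel/source polarization each converge to $\{0,1\}$-valued limits, so the intersection behaves additively in the limit. This yields $\lim_{N\to\infty}|\mathcal{S}_{X_1}|/N = I(X_2;X_1)$.

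Next I would show $\mathcal{I}_2 = \mathcal{I}'_2 \setminus \mathcal{S}_{X_1}$ as claimed, i.e. that the definition of $\mathcal{I}_2$ in \eqref{eqn:asymdefinition2nd2} is the set-difference of $\mathcal{I}'_2$ from \eqref{eqn:asymdefinition2nd} and $\mathcal{S}_{X_1}$, up to a vanishing fraction. The key point is the monotonicity ``conditioning reduces $Z$'' (Lemma \ref{lem:CondiReduce}): since adding $X_1^{1:N}$ as an observable can only decrease the Bhattacharyya parameter, $Z(U_2^i|U_2^{1:i-1},Y^{1:N},X_1^{1:N}) \le Z(U_2^i|U_2^{1:i-1},Y^{1:N})$, so the ``good for decoding'' condition in $\mathcal{I}_2$ is implied by that in $\mathcal{I}'_2$. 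Likewise $Z(U_2^i|U_2^{1:i-1},X_1^{1:N}) \le Z(U_2^i|U_2^{1:i-1})$. Because all these Bhattacharyya sequences polarize to $\{0,1\}$, on all but a vanishing fraction of indices the value is either below $2^{-N^\beta}$ or above $1-2^{-N^\beta}$, so the condition $Z(U_2^i|U_2^{1:i-1},X_1^{1:N})\ge 1-2^{-N^\beta}$ in $\mathcal{I}_2$ is, up to that vanishing fraction, the negation of $Z(U_2^i|U_2^{1:i-1},X_1^{1:N})\le 2^{-N^\beta}$. Comparing the defining conditions of $\mathcal{I}'_2$, $\mathcal{I}_2$ and $\mathcal{S}_{X_1}$ then shows that an index lies in $\mathcal{I}'_2$ iff it lies in $\mathcal{I}_2$ or in $\mathcal{S}_{X_1}$ (these two being disjoint by conditions (b) small-$Z$ vs.\ large-$Z$), giving $|\mathcal{I}_2|/N \to I(X_2;Y,X_1)-I(X_2;X_1)=I(X_2;Y|X_1)$, consistent with the decomposition displayed before the lemma. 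Finally, $\mathcal{S}_2=(\mathcal{F}_2\cup\mathcal{I}_2)^c=\mathcal{S}'_2\cup\mathcal{S}_{X_1}$ has asymptotic fraction $(1-H(X_2))+I(X_2;X_1)=1-H(X_2|X_1)$.

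The main obstacle I anticipate is the bookkeeping at the boundaries: several Bhattacharyya-parameter sequences are involved simultaneously ($Z(U_2^i|U_2^{1:i-1})$, $Z(U_2^i|U_2^{1:i-1},X_1^{1:N})$, $Z(U_2^i|U_2^{1:i-1},Y^{1:N})$, $Z(U_2^i|U_2^{1:i-1},Y^{1:N},X_1^{1:N})$), and the argument relies on the fact that \emph{each} polarizes to the extremes, so that their joint behaviour is governed by inclusion relations that hold exactly in the limit but only up to $o(N)$ exceptional indices at finite $N$. Making the ``up to a vanishing fraction'' statements precise---i.e.\ controlling the measure of indices that are near the threshold $2^{-N^\beta}$ but not yet polarized---requires invoking the rate-of-polarization statement (the $2^{-N^\beta}$ bound with $\beta<1/2$) rather than mere convergence, and checking that the finite-$N$ slack propagates correctly through the nested set differences without accumulating. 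Everything else follows routinely from Theorem \ref{theorem:polarization}, Lemma \ref{lem:CondiReduce}, and the mutual-information chain rule.
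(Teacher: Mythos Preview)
Your approach is essentially the same as the paper's: both arguments hinge on Theorem~\ref{theorem:polarization} applied to the pair $(X_2,X_1)$, the monotonicity Lemma~\ref{lem:CondiReduce}, and the fact that the unpolarized set (indices with $2^{-N^\beta}\le Z(U_2^i|U_2^{1:i-1},X_1^{1:N})\le 1-2^{-N^\beta}$) has vanishing proportion. Two points where the paper is cleaner than your sketch are worth noting. First, rather than the sign-juggling you do in the first step, the paper introduces the auxiliary set $\mathcal{S}'_{X_1}=\{i: Z(U_2^i|U_2^{1:i-1},X_1^{1:N})\le 2^{-N^\beta}\text{ and }Z(U_2^i|U_2^{1:i-1})\ge 1-2^{-N^\beta}\}$, observes directly from Theorem~\ref{theorem:polarization} that $|\mathcal{S}'_{X_1}|/N\to I(X_2;X_1)$, and then bounds $|\mathcal{S}_{X_1}\setminus\mathcal{S}'_{X_1}|$ by the size of the unpolarized set; the fact that condition~(a) is automatic follows from Lemma~\ref{lem:CondiReduce} applied to $\mathcal{S}'_{X_1}$ (if $Z(\cdot|X_1^{1:N})\le 2^{-N^\beta}$ then a~fortiori $Z(\cdot|X_1^{1:N},Y^{1:N})\le 2^{-N^\beta}$), so no separate ``all but a vanishing fraction'' argument is needed there. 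Second, the paper shows $\mathcal{S}_{X_1}\cup\mathcal{I}_2=\mathcal{I}'_2$ as an \emph{exact} set identity, not merely up to $o(N)$ indices: since Lemma~\ref{lem:CondiReduce} gives $Z(U_2^i|U_2^{1:i-1},X_1^{1:N})\ge 1-2^{-N^\beta}\Rightarrow Z(U_2^i|U_2^{1:i-1})\ge 1-2^{-N^\beta}$, the condition defining $\mathcal{I}_2$ already implies membership in $\mathcal{I}'_2$, and then $\mathcal{S}_{X_1}$ and $\mathcal{I}_2$ partition $\mathcal{I}'_2$ according to whether $Z(U_2^i|U_2^{1:i-1},X_1^{1:N})$ lies below or above $1-2^{-N^\beta}$. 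Your invocation of the inequality $Z(U_2^i|U_2^{1:i-1},Y^{1:N},X_1^{1:N})\le Z(U_2^i|U_2^{1:i-1},Y^{1:N})$ in this step is unnecessary, since both $\mathcal{I}_2$ and $\mathcal{I}'_2$ carry the identical first condition on $Z(U_2^i|U_2^{1:i-1},Y^{1:N},X_1^{1:N})$.
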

\begin{proof}
Firstly, we show the proportion of set $\mathcal{S}_{X_1}$ goes to $I(X_1;X_2)$ as $N\to \infty$. Here we define a slightly different set $\mathcal{S}'_{X_1}=\{i\in[N]:Z(U_2^i|U_2^{1:i-1}, X_1^{1:N})\leq2^{-N^{\beta}}\text{ and }Z(U_2^i|U_2^{1:i-1})\geq1-2^{-N^{\beta}}\}$. Suppose we are constructing an asymmetric polar code for the channel from $X_1$ to $X_2$; it is not difficult to find that $\lim_{N\rightarrow\infty}\frac{|\mathcal{S}'_{X_1}|}{N}=I(X_2;X_1)$ by Theorem \ref{theorem:codingtheorem}. Furthermore, by Lemma \ref{lem:CondiReduce}, if $Z(U_2^i|U_2^{1:i-1}, X_1^{1:N})\leq2^{-N^{\beta}}$, we can immediately have $Z(U_2^i|U_2^{1:i-1}, X_1^{1:N}, Y^{1:N})\leq2^{-N^{\beta}}$. Therefore, the difference between the definitions of $\mathcal{S}_{X_1}$ and $\mathcal{S}'_{X_1}$ only lies on $Z(U_2^i|U_2^{1:i-1}, X_1^{1:N})$. Denoting by $\bar{\mathcal{P}}_{X_1}$ the unpolarized set with $2^{-N^{\beta}}  \leq Z(U_2^i|U_2^{1:i-1}, X_1^{1:N})\leq1-2^{-N^{\beta}}$, we have
\begin{eqnarray}
\lim_{N\rightarrow\infty}\left(\frac{|\mathcal{S}_{X_1}|}{N}-\frac{|\mathcal{S}'_{X_1}|}{N}\right) \leq \lim_{N\rightarrow\infty}\frac{|\bar{\mathcal{P}}_{X_1}|}{N}=0.
\end{eqnarray}
As a result, $\lim_{N\rightarrow\infty}\frac{|\mathcal{S}_{X_1}|}{N}=\lim_{N\rightarrow\infty}\frac{|\mathcal{S}'_{X_1}|}{N}=I(X_2;X_1)$.

Secondly, we show that $\mathcal{S}_{X_1} \cup \mathcal{I}_{2}= \mathcal{I}'_{2}$. According to the definitions of $\mathcal{S}_{X_1}$ and $\mathcal{I}_{2}$, we note that $\mathcal{S}_{X_1} \cap \mathcal{I}_2=\emptyset$. By Lemma \ref{lem:CondiReduce}, if $Z(U_2^i|U_2^{1:i-1},X_1^{1:N})\geq1-2^{-N^{\beta}}$, we get $Z(U_2^i|U_2^{1:i-1})\geq1-2^{-N^{\beta}}$ and the difference between the definitions of $\mathcal{S}_{X_1}$ and $\mathcal{I}'_{2}$ only lies on $Z(U_2^i|U_2^{1:i-1}, X_1^{1:N})$. Observe that the union $\mathcal{S}_{X_1} \cup \mathcal{I}_{2}$ would remove the condition on $Z(U_2^i|U_2^{1:i-1}, X_1^{1:N})$, and accordingly we have $\mathcal{S}_{X_1} \cup \mathcal{I}_{2}= \mathcal{I}'_{2}$. It can be also found that the proportion of $\mathcal{I}_2$ goes to $I(X_2;Y|X_1)$ as $N \to \infty$.
\end{proof}

We summarize our main results in the following theorem (see Appendix \ref{appendix1} for the proof):
\begin{theorem}[Coding Theorem for Multilevel Polar Codes]\label{theorem:codingtheoremside}
Consider a polar code with the following encoding and decoding strategies for the channel of the second level $W_2$ with the channel transition probability $P_{Y|X_2,X_1}(y|x_2,x_1)$ shown in \eqref{eqn:transition}.
\begin{itemize}
\item Encoding: Before sending the codeword $x_2^{1:N}=u_2^{1:N}G_N$, the index set $[N]$ are divided into three parts: the frozen set $\mathcal{F}_2$, information set $\mathcal{I}_2$, and shaping set $\mathcal{S}_2$ according to \eqref{eqn:asymdefinition2nd2}. The encoder first places uniformly distributed information bits in $\mathcal{I}_2$. Then the frozen set $\mathcal{F}_2$ is filled with a uniform random sequence which are shared between the encoder and the decoder. The bits in $\mathcal{S}_2$ are generated by a mapping $\phi_{\mathcal{S}_2}\triangleq\{\phi_i\}_{i\in\mathcal{S}_2}$ form a family of randomized mappings $\Phi_{\mathcal{S}_2}$, which yields the following distribution:
\begin{equation}
u_2^i=
\begin{cases}
0 \;\;\;\; \text{with probability } P_{U_2^i|U_2^{1:i-1},X_1^{1:N}}(0|u_2^{1:i-1},x_1^{1:N}),\\
1 \;\;\;\; \text{with probability } P_{U_2^i|U_2^{1:i-1},X_1^{1:N}}(1|u_2^{1:i-1},x_1^{1:N}).
\end{cases}
\label{eqn:encodingside}
\end{equation}
\item Decoding: The decoder receives $y^{1:N}$ and estimates $\hat{u}_2^{1:N}$ based on the previously recovered $x_1^{1:N}$ according to the rule
\begin{equation}
\hat{u}_2^i=
\begin{cases}
u_2^i, \;\;\;\;\;\;\;\;\;\;\;\;\;\;\;\;\text{if } i\in\mathcal{F}_2 \\
\phi_i(\hat{u}_2^{1:i-1}),   \;\;\;\;\; \text{if } i\in\mathcal{S}_2 \\
\underset{u}{\operatorname{argmax}} \; P_{U_2^i|U_2^{1:i-1},X_1^{1:N},Y^{1:N}}(u|\hat{u}_2^{1:i-1},x_1^{1:N},y^{1:N}), \,\text{if } i\in\mathcal{I}_2 \notag\
\end{cases}.
\label{eqn:decoding2}
\end{equation}
\end{itemize}
With the above encoding and decoding, the message rate can be arbitrarily close to $I(Y;X_2|X_1)$ and the expectation of the decoding error probability over the randomized mappings satisfies $E_{\Phi_{\mathcal{S}_2}}[P_e^{SC}(\phi_{\mathcal{S}_2})]= O(2^{-N^{\beta'}})$ for any $0<\beta'<\beta<0.5$, where $\beta$ is used to choose the frozen set, the information set, and the shaping set as in \eqref{eqn:asymdefinition2nd2}.
\end{theorem}

\medskip

Note that probability $P_{U_2^i|U_2^{1:i-1},X_1^{1:N},Y^{1:N}}$ can be calculated by \eqref{eqn:asydec} efficiently, treating $Y$ and $X_1$ (already decoded by the SC decoder at level 1) as the outputs of the asymmetric channel. Again, there exists a deterministic mapping $\phi_{\mathcal{S}_2}$ such that $P_e^{SC}(\phi_{\mathcal{S}_2})= O(2^{-N^{\beta'}})$.

Obviously, Theorem \ref{theorem:codingtheoremside} can be generalized to the construction of a polar code for the channel of the $\ell$-th level $W_{\ell}$. The only difference is that the side information changes from $X_1^{1:N}$ to $X_{1:\ell-1}^{1:N}$. As a result, we can construct a polar code which achieves a rate arbitrarily close to $I(Y;X_\ell|X_{1:\ell})$ with vanishing error probability.

\subsection{Achieving Channel Capacity}
\label{sec:achievingcapacity}

So far, we have constructed polar codes to achieve the capacity of the induced asymmetric channels for all levels. Since the sum capacity of the component channels nearly equals the mutual information $I(Y;X)$, and since we choose a good constellation such that $I(Y;X)\approx \frac{1}{2}\log(1+\SNR)$, we have constructed a lattice code to achieve the capacity of the Gaussian channel.
We summarize the construction in the following theorem:
\begin{theorem}\label{theorem6}
Choose a good constellation with negligible flatness factor $\epsilon_{\Lambda}(\tilde{\sigma})$ as in \cite[Theorem 2]{LingBel13}, and construct a multilevel polar code with $r = O(\log\log N)$ as above. Then, for any SNR, the message rate approaches $\frac{1}{2}\log(1+\SNR)$, while the error probability under multistage SC decoding is bounded by
\begin{eqnarray}
P_{e}^{SC}=  O(2^{-N^{\beta'}}), \quad 0<\beta'<0.5
\label{eqn:errorbound-shap}
\end{eqnarray}
as $N\to \infty$.
\end{theorem}

\begin{remark}
It is simple to generate a transmitted codeword of the proposed scheme. For $n=1$, let
\begin{eqnarray}
\chi = \sum_{\ell=1}^{r}2^{\ell-1}\left[\sum_{i\in\mathcal{I}_{\ell}}u_{\ell}^{i}\mathbf{g}_{i}+\sum_{i\in\mathcal{S}_{\ell}}u_{\ell}^{i}\mathbf{g}_{i}+\sum_{i\in\mathcal{F}_{\ell}}u_{\ell}^{i}\mathbf{g}_{i}\right].
\label{constructionD-finite-power}
\end{eqnarray}
The transmitted codeword $x$ is drawn from $D_{2^r\mathbb{Z}^N+\chi},\sigma_s$. From the proof of Lemma \ref{lem:Iloss}, we know that the probability of choosing a point outside of the interval $[-2^{r-1},2^{r-1}]$ is negligible if $r$ is sufficiently large, which implies there exists only one point in this interval with probability close to $1$. Therefore, one may simply transmit $x=\chi \mod 2^r$, where the modulo operation is applied component-wise with range $(-2^{r-1},2^{r-1}]$.
\end{remark}

Next, we show that such a multilevel polar coding scheme is equivalent to Gaussian shaping over a coset $L+{c}'$ of a polar lattice $L$ for some translate $c'$. In fact, the polar lattice $L$ is exactly constructed from the corresponding symmetrized channels $\tilde{W}_{\ell}$. Recall that the $\ell$-th channel ${W}_{\ell}$ is a BMA channel with the input distribution $P_{X_{\ell}|X_{1:\ell-1}}$ $(1\leq \ell\leq r)$. It is clear that $P_{X_{1:\ell}}(x_{1:\ell})=f_{\sigma_s}(\mathcal{A}_\ell(x_{1:\ell}))/f_{\sigma_s}(\Lambda)$. By Lemma \ref{lem:Asy2sym} and \eqref{eqn:transition}, the transition probability of the symmetrized channel $\tilde{W}_{\ell}$ is
\vspace{-1em}
\begin{eqnarray}
\label{eqn:equivalentsymmetric} &&\hspace{-3em}P_{\tilde{W}_{\ell}}((y,x_{1:\ell-1},x_{\ell}\oplus\tilde{x}_\ell)|\tilde{x}_\ell)\notag\\
&&=P_{Y,X_{1:\ell}}(y,x_{1:\ell}) \notag \\
&&=P_{X_{1:\ell}}(x_{1: \ell})P_{Y|X_{\ell},X_{1:\ell-1}}(y|x_{\ell},x_{1:\ell-1})   \\
&&=\exp\left(-\frac{\|y\|^2}{2(\sigma_s^2+\sigma^2)}\right)\frac{1}{f_{\sigma_s}(\Lambda)}\frac{1}{2\pi\sigma\sigma_s}\sum_{a\in \mathcal{A}_\ell(x_{1:\ell})}\text{exp}\left(-\frac{\|\alpha y-a\|^2}{2\tilde{\sigma}^2}\right). \notag\ 
\end{eqnarray}
Note that the difference between the asymmetric channel \eqref{eqn:transition} and symmetrized channel \eqref{eqn:equivalentsymmetric} is the \textit{a priori} probability $P_{X_{1:\ell}}(x_{1:\ell})=f_{\sigma_s}(\mathcal{A}_\ell(x_{1:\ell}))/f_{\sigma_s}(\Lambda)$. Comparing with the $\Lambda_{\ell-1}/\Lambda_{\ell}$ channel \eqref{eqn:modchannel}, we see that the symmetrized channel \eqref{eqn:equivalentsymmetric} is equivalent to a $\Lambda_{\ell-1}/\Lambda_{\ell}$ channel, since the common terms in front of the sum will be completely cancelled out in the calculation of the likelihood ratio\footnote{Even if $y \in \mathbb{R}^n$ in \eqref{eqn:equivalentsymmetric}, the sum over $\mathcal{A}_\ell(x_{1:\ell})$ is $\Lambda_{\ell}$-periodic. Hence, the likelihood ratio will be the same if one takes $\bar{y} = y \mod \Lambda_{\ell}$ and uses \eqref{eqn:modchannel}}. We summarize the foregoing analysis in the following lemma:

\begin{lemma}[Equivalence lemma]\label{lem:equivalence}
Consider a multilevel lattice code constructed from constellation $D_{\Lambda,\sigma_s}$ for a Gaussian channel with noise variance $\sigma^2$. The $\ell$-th symmetric channel $\tilde{W}_{\ell}$ ($1\leq \ell \leq r$) which is derived from the asymmetric channel $W_{\ell}$ is equivalent to the MMSE-scaled $\Lambda_{\ell-1}/\Lambda_{\ell}$ channel with noise variance $\tilde{\sigma}^2$.
\end{lemma}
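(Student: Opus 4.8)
The plan is to establish the equivalence by writing the transition law of the symmetrized channel $\tilde{W}_{\ell}$ in closed form and then matching it, up to an invertible and input-independent transformation of the channel output, with the transition law \eqref{eqn:modchannel} of the MMSE-scaled $\Lambda_{\ell-1}/\Lambda_{\ell}$ channel of noise variance $\tilde{\sigma}^2$. First I would invoke the symmetrization construction of Lemma \ref{lem:Asy2sym} at level $\ell$, which gives
\[
P_{\tilde{W}_{\ell}}\bigl((y,x_{1:\ell-1},x_{\ell}\oplus\tilde{x}_{\ell})\mid\tilde{x}_{\ell}\bigr)=P_{Y,X_{1:\ell}}(y,x_{1:\ell}),
\]
and then factor the right-hand side as $P_{X_{1:\ell}}(x_{1:\ell})\,P_{Y\mid X_{\ell},X_{1:\ell-1}}(y\mid x_{\ell},x_{1:\ell-1})$. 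Substituting $P_{X_{1:\ell}}(x_{1:\ell})=f_{\sigma_s}(\mathcal{A}_{\ell}(x_{1:\ell}))/f_{\sigma_s}(\Lambda)$ together with the MMSE-rescaled form of \eqref{eqn:transition} makes the factor $f_{\sigma_s}(\mathcal{A}_{\ell}(x_{1:\ell}))$ cancel, leaving exactly the expression \eqref{eq:equivalence}, namely a constant (in $\tilde{x}_{\ell}$) prefactor times $\sum_{a\in\mathcal{A}_{\ell}(x_{1:\ell})}\exp\!\bigl(-\|\alpha y-a\|^{2}/(2\tilde{\sigma}^{2})\bigr)$; the completion-of-square producing the $\|\alpha y-a\|^{2}/(2\tilde{\sigma}^{2})$ exponent has already been carried out in \eqref{eqn:transition}, so this step is purely bookkeeping.

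Next I would compare \eqref{eq:equivalence} with \eqref{eqn:modchannel} (with $\sigma$ replaced by $\tilde{\sigma}$ and $\bar{y}_{\ell}=\alpha y \bmod \Lambda_{\ell}$). The two transition laws differ only by (i) a multiplicative prefactor $\exp\!\bigl(-\|y\|^{2}/(2(\sigma_s^{2}+\sigma^{2}))\bigr)\,\tfrac{1}{f_{\sigma_s}(\Lambda)}\,\tfrac{1}{2\pi\sigma\sigma_s}$ versus $\tfrac{1}{\sqrt{2\pi}\tilde{\sigma}}$, both of which depend only on $y$, $\sigma_s$, $\sigma$ and $\Lambda$ and hence are independent of the transmitted bit, and (ii) the output living in $\mathbb{R}^{n}$ in one case and in a fundamental region of $\Lambda_{\ell}$ in the other. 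For (ii), since the sum over $\mathcal{A}_{\ell}(x_{1:\ell})$ is $\Lambda_{\ell}$-periodic in $\alpha y$, the map $\alpha y\mapsto\alpha y\bmod\Lambda_{\ell}$ is a sufficient statistic for $\tilde{x}_{\ell}$, so passing to $\bar{y}_{\ell}$ loses no information (the footnote following \eqref{eq:equivalence} makes this point). Consequently the two channels have identical likelihood ratios $P(\cdot\mid\tilde{x}_{\ell}=0)/P(\cdot\mid\tilde{x}_{\ell}=1)$ for every fixed value of the side information $x_{1:\ell-1}$.

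It then remains to spell out what ``equivalent'' should mean: two BMS channels whose outputs are related by a deterministic invertible map (here the MMSE scaling $y\mapsto\alpha y$ followed by the lossless reduction mod $\Lambda_{\ell}$) and whose transition laws agree up to an input-independent factor induce the same likelihood-ratio distribution, hence the same Bhattacharyya parameter $\tilde{Z}$, the same capacity, and the same polarized subchannels $\tilde{W}_{N}^{(i)}$ and SC decoder. Therefore the polar code constructed from $\tilde{W}_{\ell}$ coincides with the one constructed from the MMSE-scaled $\Lambda_{\ell-1}/\Lambda_{\ell}$ channel, and by Theorem \ref{theorem:conasypolar} this is in turn the code underlying the asymmetric channel $W_{\ell}$; the same argument applies verbatim at every level with side information $x_{1:\ell-1}$. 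The only genuine obstacle is this conceptual step rather than any computation: one must verify that the transformation relating the two channels is truly information-lossless for $\tilde{x}_{\ell}$ given $x_{1:\ell-1}$, so that Bhattacharyya parameters, information sets, and the decoder of Lemma \ref{lem:AsymDec} transfer without change; once the $\Lambda_{\ell}$-periodicity argument is in place, the input-independence of the remaining prefactors is immediate and the lemma follows.
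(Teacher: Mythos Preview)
Your proposal is correct and follows essentially the same route as the paper: the paper derives \eqref{eq:equivalence} by applying Lemma~\ref{lem:Asy2sym} and factoring $P_{Y,X_{1:\ell}}=P_{X_{1:\ell}}P_{Y\mid X_{\ell},X_{1:\ell-1}}$ so that $f_{\sigma_s}(\mathcal{A}_{\ell}(x_{1:\ell}))$ cancels, then argues equivalence with \eqref{eqn:modchannel} by noting that the remaining prefactors are input-independent and that the $\Lambda_{\ell}$-periodicity of the sum makes the mod-$\Lambda_{\ell}$ reduction lossless (the footnote you cite). Your write-up is more explicit about what ``equivalent'' means operationally (identical likelihood ratios, hence identical Bhattacharyya parameters and polarized subchannels), but the underlying argument is the same.
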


Thus, the resultant polar codes for the symmetrized channels are nested, and the polar lattice is AWGN-good for noise variance $\tilde{\sigma}^2$; also, the multistage decoding is performed on the MMSE-scaled signal $\alpha y$ (cf. Lemma \ref{lem:AsymDec}). Since the frozen sets of the polar codes are filled with random bits (rather than all zeros), we actually obtain a coset $L+{c}'$ of the polar lattice, where the shift ${c}'$ accounts for the effects of all random frozen bits. Finally, since we start from $D_{\Lambda,\sigma_s}$, we would obtain $D_{\Lambda^N,\sigma_s}$ without coding; since $L+{c}' \subset \Lambda^N$ by construction, we obtain a discrete Gaussian distribution $D_{L+{c}',\sigma_s}$ over $L+{c}'$.

\begin{remark}
This analysis shows that our proposed scheme is an explicit construction of lattice Gaussian coding introduced in \cite{LingBel13}, which applies Gaussian shaping to an AWGN-good lattice (or its coset). Note that the condition of negligible $\epsilon_{\Lambda}(\tilde{\sigma})$ in Theorem \ref{theorem6} implies negligible capacity $C(\Lambda, \tilde{\sigma}^2)$ of the top lattice in the construction of the AWGN-good lattice in Section \ref{sec:PL}. Again, it is always possible to scale down the top lattice $\Lambda$ such that $\epsilon_{\Lambda}(\tilde{\sigma})$ becomes negligible. Thus, Theorem \ref{theorem6} holds for any SNR, meaning that we have removed the condition $\SNR>e$ required by \cite[Theorem 3]{LingBel13}\footnote{The reason of the condition $\SNR>e$ in \cite{LingBel13} is that a more stringent condition is imposed on the flatness factor of $L$, i.e., $\epsilon_{L}\left(\frac{\sigma_s^2}{\sqrt{\sigma_s^2+\sigma^2}}\right)$ is negligible. Intuitively, for a given lattice $\Lambda$, the flatness factor $\epsilon_{\Lambda}(\sigma)$ decreases as $\sigma$ grows. Namely, the larger $\sigma_s$ is, the smaller $\epsilon_L\left(\frac{\sigma_s^2}{\sqrt{\sigma_s^2+\sigma^2}}\right)$ is. To make $\epsilon_L\left(\frac{\sigma_s^2}{\sqrt{\sigma_s^2+\sigma^2}}\right)$ negligible, $\sigma_s$ can not be arbitrarily small, which then causes an additional condition on the SNR.}. Moreover, if a good constellation of the form $D_{\Lambda-{c},\sigma_s}$ for some shift ${c}$ is used in practice (e.g., a constellation taking values in $\{\pm1, \pm 3, \ldots\}$), the proposed construction holds verbatim.
\end{remark}

\section{Design Examples}
\label{sec:practicaldesign}

In this section, we give design examples of polar lattices with and without the power constraint. The design follows the equal-error-probability rule. Multistage SC decoding is applied. Since the complexity of SC decoding is $O(N\log N)$, the overall decoding complexity is $O(rN\log N)$.

\subsection{Design Examples Without Power Constraint}

\label{sec:example}

\begin{figure}
    \centering
    \includegraphics[width=9cm]{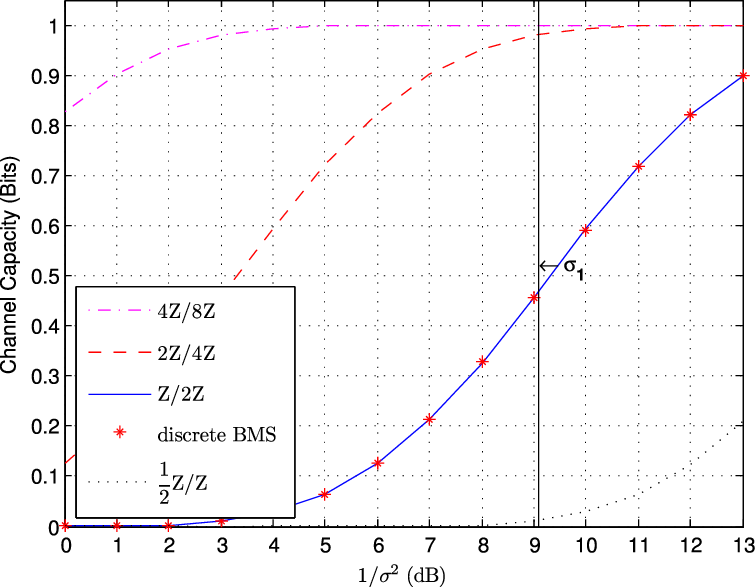}
    \caption{Channel capacity for partition chain $\mathbb{Z}/2\mathbb{Z}/\cdot\cdot\cdot/2^{r}\mathbb{Z}$. The discrete BMS approximation uses the quantization-merging algorithm with $64$ quantization levels.}
    \vspace{-6pt}
    \label{fig:channelcapacity}
\end{figure}

Consider the one-dimensional lattice partition $\mathbb{Z}/2\mathbb{Z}/\cdot\cdot\cdot/2^{r}\mathbb{Z}$. To construct a multilevel lattice, one needs to determine the number of levels of lattice partitions and the actual rates according to the the target error probability for a given noise variance. By the guidelines given in Section \ref{sec:PL}, the effective levels are those which can achieve the target error probability with an actual rate not too close to either $0$ or $1$. Therefore, one can determine the number of effective levels with the help of capacity curves in Fig. \ref{fig:channelcapacity}. For example, for the given noise variance indicated by the straight line in Fig. \ref{fig:channelcapacity}, one may choose partition $\mathbb{Z}/2\mathbb{Z}/4\mathbb{Z}$, i.e., $r=2$, which was indeed suggested in \cite{forney6}.

The multilevel construction and the multistage decoding are shown in Fig. \ref{fig:1D}. For the $\ell$-th level, $\mathbf{g}_{1},\mathbf{g}_{2},\cdots,\mathbf{g}_{k_{\ell}}$ are a set of code generators chosen from the matrix $G_{N}$, and $\sigma_{\ell}$ is the standard deviation of the noise.

\begin{figure}
    \centering
    \includegraphics[width=12cm]{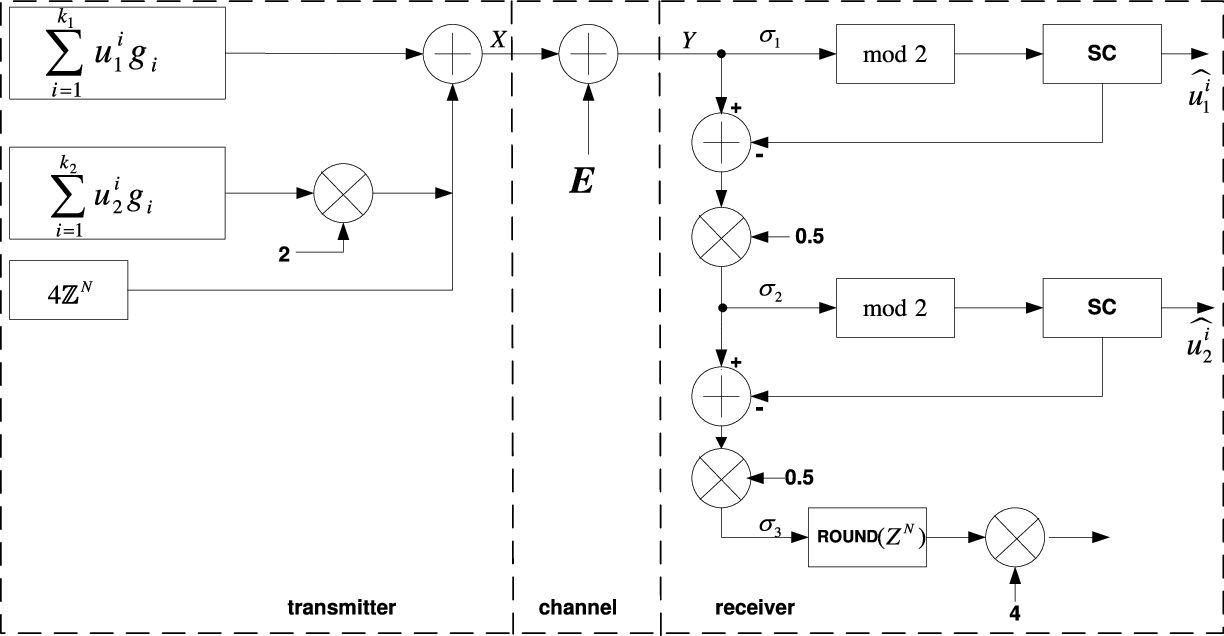}
    \caption{A polar lattice with two levels, where $\sigma_{1}=\sigma$.}
    \vspace{-6pt}
    \label{fig:1D}
\end{figure}

Now, we give an example for length $N=1024$ and target error probability $P_{e}(L,\sigma^{2})= 10^{-5}$. We note that the calculation of $C(\Lambda/\Lambda', \sigma^2)$ and $P_{e}(\Lambda,\sigma^{2})$ can be simplified by the scaling property of the partition channels as shown in the proof of Lemma \ref{lemma:degraded}. For the one-dimensional partition chain, we have $C(4\mathbb{Z}/8\mathbb{Z}, \sigma^2)=C(2\mathbb{Z}/4\mathbb{Z}, (\frac{\sigma}{2})^2)=C(\mathbb{Z}/2\mathbb{Z}, (\frac{\sigma}{4})^2)$, and $P_e(4\mathbb{Z}, \sigma^2)=P_e(2\mathbb{Z}, (\frac{\sigma}{2})^2)=P_e(\mathbb{Z}, (\frac{\sigma}{4})^2)$. Let $\sigma_1=\sigma$, $\sigma_2=\sigma/2$ and $\sigma_3=\sigma/4$ be the equivalent Gaussian noise deviation at the $\ell$'s level with respect to the 1st one.

Since the bottom level is
a $\mathbb{Z}^{N}$ lattice decoder, $\sigma_{3}\approx 0.0845$ for target error probability $\frac{1}{3}\cdot10^{-5}$. For the middle level, $\sigma_{2}=2\cdot\sigma_{3}=0.1690$. From Fig. \ref{fig:channelcapacity}, the channel capacity of the middle
level is $C(\mathbb{Z}/2\mathbb{Z},\sigma_{2}^2)=C(2\mathbb{Z}/4\mathbb{Z},\sigma_{1}^2)=0.9874$. For the top level, $\sigma=\sigma_{1}=0.3380$ and the capacity is $0.5145$. Our goal is to find two polar codes approaching
the respective capacities at block error probabilities $\leq \frac{1}{3}\cdot10^{-5}$ over these binary-input mod-$2$ channels.

For $N=1024$, we
found the first polar code with rate $\frac{k_{1}}{N}=0.23$ for $P_{e}(\mathcal{C}_{1}, \sigma_{1}^{2})\approx \frac{1}{3}\cdot10^{-5}$, and the second polar code with rate $\frac{k_{2}}{N}=0.9$ for $P_{e}(\mathcal{C}_{2}, \sigma_{2}^{2})\approx \frac{1}{3}\cdot10^{-5}$. Recall that the channel in the first level is degraded with respect to the one at the second level according to Lemma \ref{lemma:degraded}, and the two polar codes in this construction turn out to be nested. Thus, the sum rate of component polar codes $R_{\mathcal{C}}=0.23+0.9$, implying a capacity loss $\epsilon_3 = 0.3719$. Meanwhile, the factor $\epsilon_1 = C(\mathbb{Z},0.3380^2)=0.0160$. Therefore, the rate losses at each level are 0.016, 0.285, and 0.087. From \eqref{eqn:minimumVNR},
the logarithmic VNR is given by
\begin{eqnarray}
\log\left(\frac{\gamma_{L}(\sigma)}{2\pi e}\right)\leq2\left(\epsilon_1+\epsilon_3\right)=0.7758,
\label{eqn:gap}
\end{eqnarray}
which is 2.34 dB.
It is seen from Fig. \ref{fig:latticesser} that the estimate 2.34 dB is
very close to the actual gap at $P_{e}(L,\sigma_{1}^{2})\approx 10^{-5}$. This simulation indicates that the gap to the Poltyrev capacity is largely due to the capacity losses of component codes.

A comparison between the polar lattice and Barnes-Wall lattice is also presented in Fig. \ref{fig:1D_bw}. The Barnes-Wall lattices are constructed from Reed-Muller codes at each partition level. By changing the Barnes-Wall rule (base on the hamming weight) to the capacity rule after channel polarization, it can be seen that the performance of the polar lattice is significantly improved.
Thanks to density evolution \cite{mori}, the upper bound $\sum_{i\in \mathcal{A}}\big(Z(W_{N}^{(i)})\big)$ on the block error probability of a polar code with finite length can be calculated numerically. According to \eqref{eqn:errorbound}, we plot the upper bound on the block error probability $P_{e}(L,\sigma^2)$ of the polar lattice in Fig. \ref{fig:1D_bw}, which is quite tight.

\begin{figure}
    \centering
    \includegraphics[width=10cm]{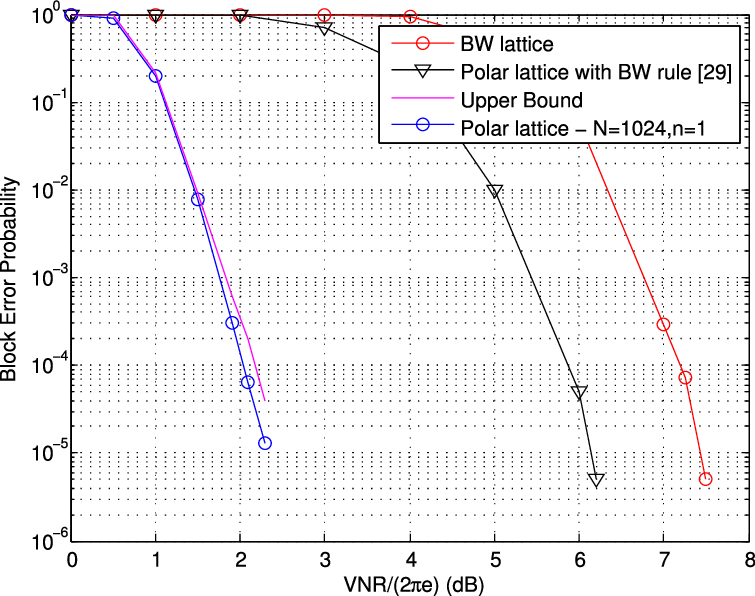}
    \vspace{-12pt}
    \caption{Block error probabilities of polar lattices of length $N=1024$ with multistage decoding.}
    \vspace{-12pt}
    \label{fig:1D_bw}
\end{figure}

We summarize the numerical simulations of polar lattices for infinite constellations as follows. For a given Gaussian noise variance $\sigma^2$ and a target error probability $P_e(L,\sigma^2)$, the first step is to find the smallest $r$ such that $P_e(2^r\mathbb{Z},\sigma^2) \leq \frac{1}{r+1} P_e(L,\sigma^2)$. Then, we need to design $r$ component polar codes with error probability smaller than $\frac{1}{r+1} P_e(L,\sigma^2)$ for the $r$ partition levels, respectively. The proper rate $\frac{k_i}{N}$ of the polar code at the $i$-th level can be estimated by separate numerical simulations or the density evolution technique used in \cite{mori}. By the union bound, the constructed polar lattice is guaranteed to be capable of achieving an error probability lower than $P_e(L,\sigma^2)$.

\begin{figure}[t]
    \centering
    \includegraphics[width=10cm]{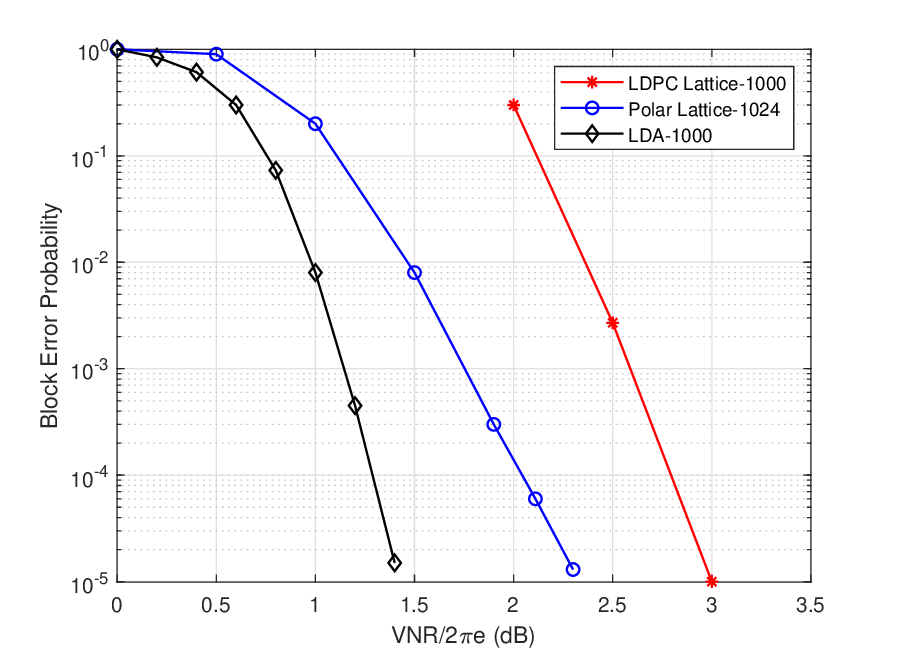}
    \caption{Performance comparison of lattices at dimension around 1000.}
    \label{fig:latticesser}
\end{figure}

Performance comparison of competing lattices approaching the Poltyrev capacity is presented in Fig. \ref{fig:latticesser}, at dimension around 1000. The polar lattice used here is constructed from the aforementioned one-dimensional lattice partition ($N=1024, n=1$). The simulation curves of other lattices are taken from their corresponding papers. Among the three types of lattices compared, the LDPC lattice \cite{ldpclattice} has the weakest performance. The LDA lattice \cite{PietroZB18} has better performance than the polar lattice, at the expense of higher decoding complexity $O(p^2N\log N)$ if $p$-ary LDPC codes are employed. Assuming $p\approx 2^r$, it would require complexity $O(2^{2r}N\log N)$, compared to $O(r N\log N)$ of the polar lattice. The LDLC lattice is not included in this comparison because of lack of block error probabilities in \cite{ldlc}. In contrast to the polar lattice and LDA lattice, analytic results of the LDLC are not available; therefore, they are less understood in theory. It is worth pointing out that the plain polar codes used in polar lattice can be optimized in several aspects: for example, to use a better kernel, list decoding, or even a soft-output decoding algorithm. We leave such improvements of polar lattices to future work.

\subsection{Design Examples With Power Constraint}
\label{sbsec:ExamplewithS}
\begin{figure}[h]
    \centering
    \includegraphics[width=9cm]{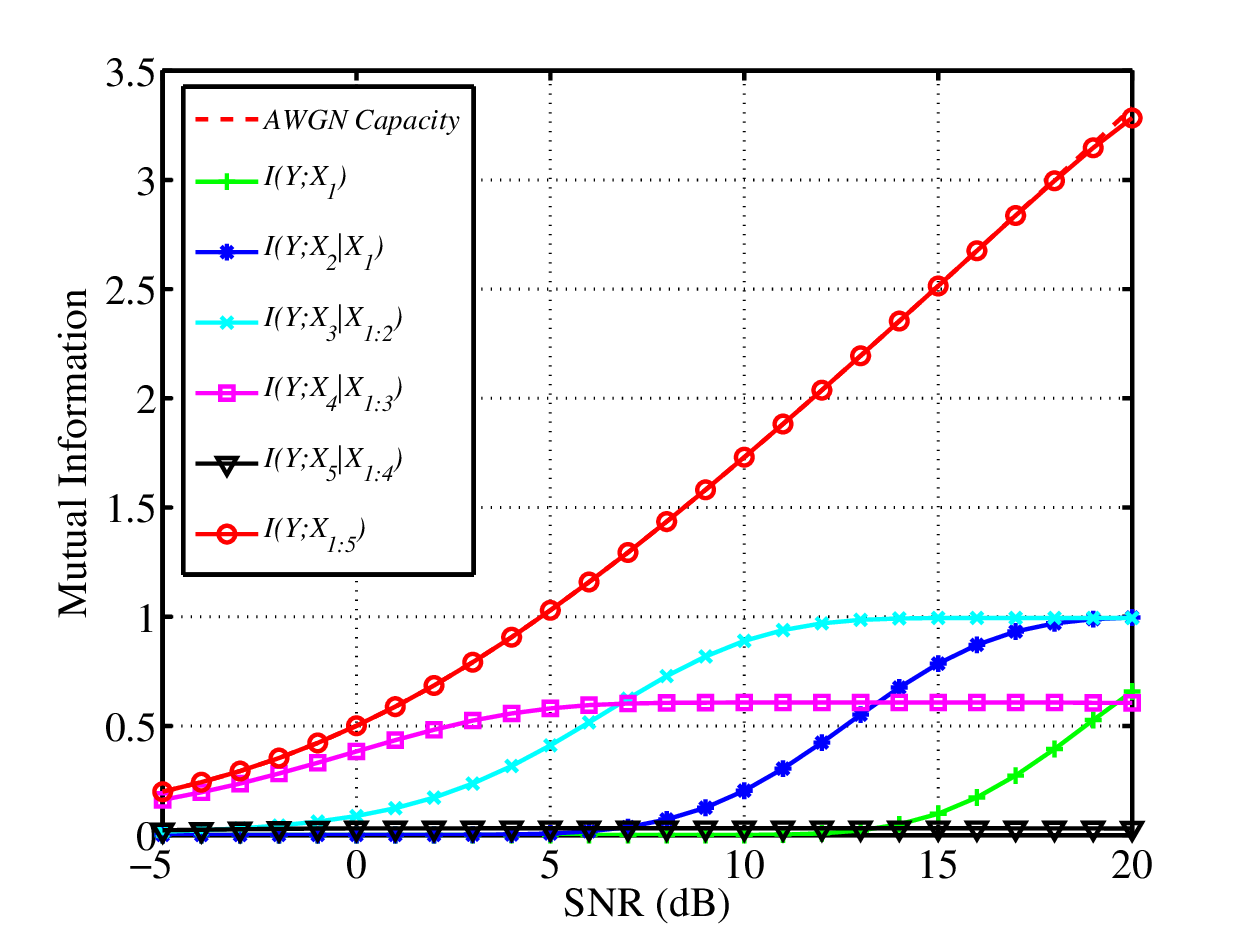}
    \vspace{-2em}
    \caption{Channel capacity for each level as a function of SNR.}
    \label{fig:Record2}
\end{figure}

To satisfy the power constraint, we use discrete lattice distribution $D_{\mathbb{Z},\sigma_s}$ for shaping. The mutual information $I(Y;X_\ell|X_{1:\ell-1})$ at each level for different SNRs is shown in Figure \ref{fig:Record2}. We can see that for partition $\mathbb{Z}/2\mathbb{Z}/...$,  five levels are enough to achieve the AWGN channel capacity for SNR ranging from $-5$ dB to $20$ dB. Note that the actual number of required levels depends on the SNR: a smaller number of levels are enough for low SNRs, while a larger number of levels is required for high SNRs (to support higher rates).

\begin{figure}[H]
    \centering
    \includegraphics[width=9cm]{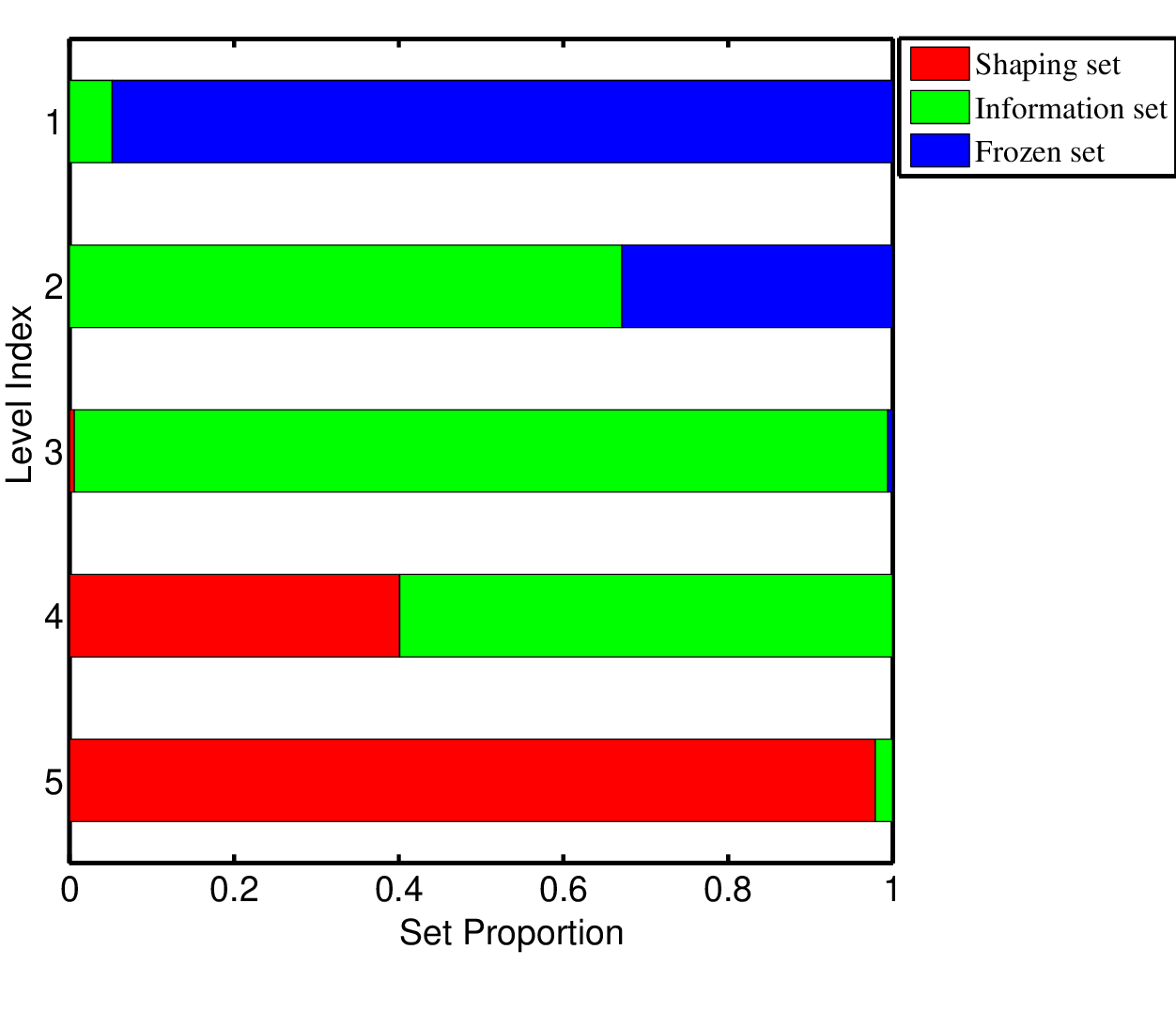}
    \vspace{-2em}
    \caption{The proportions of the shaping set, information set, and frozen set on each level when $N=2^{16}$ and $\SNR=15$ dB.}
    \label{fig:Record10}
\end{figure}

For each level, we estimate a lower bound on the code rate for block error probability $1 \times 10^{-5}$. This is done by calculating an upper bound on the block error probability of the polar code, using the Bhattacharyya parameter. With this target error probability, the assignments of bits to the information, shaping and frozen sets on different levels are shown in Figure \ref{fig:Record10} for $\SNR=15$ dB and $N=2^{16}$. In fact, $X_{1}$ and $X_{2}$ are nearly uniform such that there is no need for shaping on the first two levels (these levels actually correspond to the AWGN-good lattice). The third level channel is very clean, and most bits are information bits. In contrast, the fifth level is mostly for shaping; since its message rate is already small, adding another level clearly would not contribute to the overall rate of the lattice code. Finally, lower bounds on the rates achieved by polar lattices with various block lengths are shown in Figure \ref{fig:Record11}. We note that the gap to the channel capacity diminishes as $N$ increases, and it is only about $0.1$ bits/dimension when $N=2^{20}$.
\begin{figure}[H]
    \centering
    \includegraphics[width=9cm]{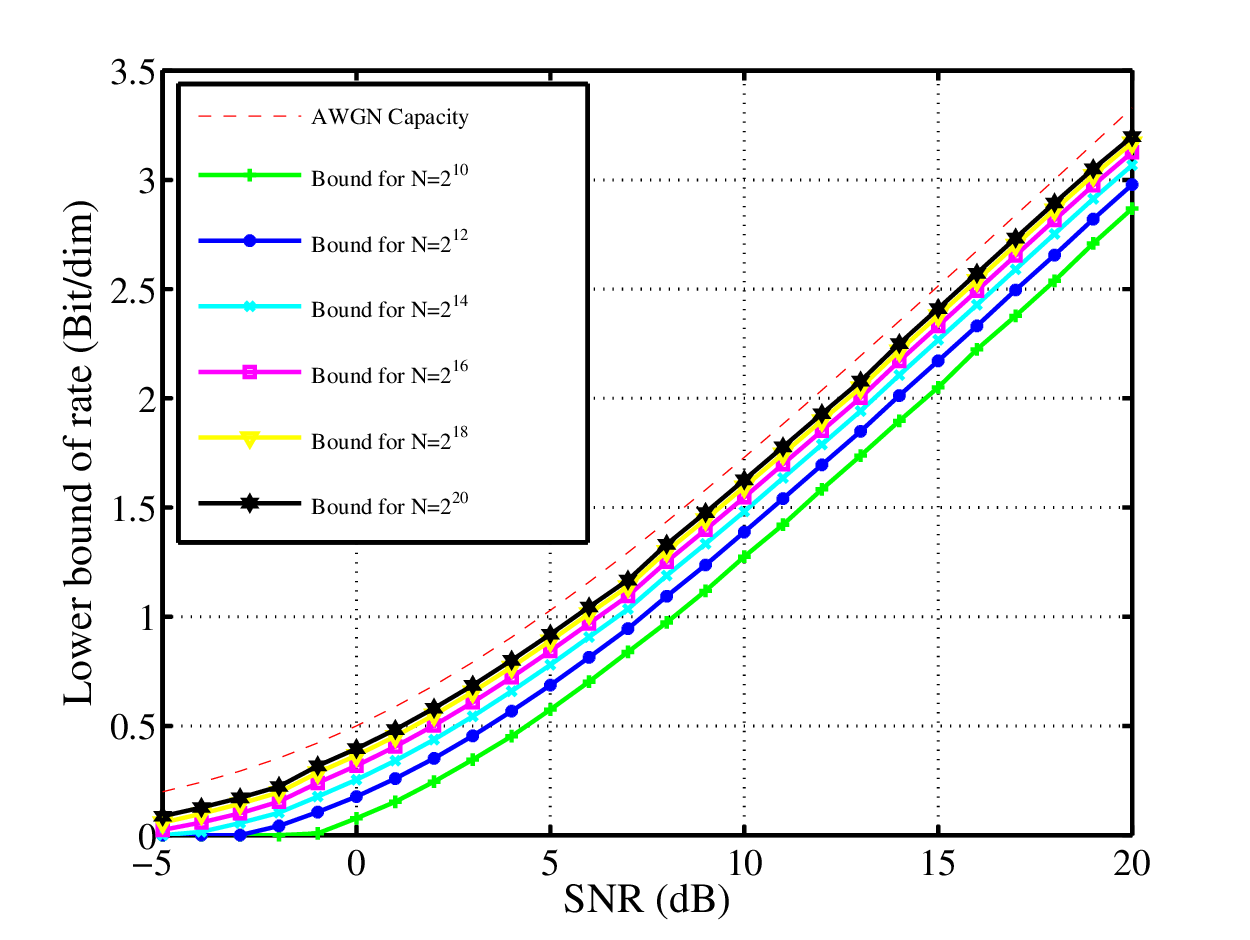}
     \vspace{-2em}
    \caption{Lower bounds on the rates achieved by polar lattices with block error probability $5\times 10^{-5}$ for block lengths $2^{10},..., 2^{20}$.}
    \label{fig:Record11}
\end{figure}

\section{Conclusions}

In this paper, we have constructed polar lattices to approach the capacity of the power-constrained Gaussian channel. The construction is based on a combination of channel polarization and source polarization. Without shaping, the constructed polar lattices are AWGN-good. The Gaussian shaping on a polar lattice deals with the power constraint but is technically more involved. Our shaping approach is different from the standard Voronoi shaping which involves a quantization-good lattice \cite{zamir}. The proposed Gaussian shaping does not require such a quantization-good lattice any more. The overall scheme is explicit and efficient, featuring quasi-linear complexity.

%

\appendices


\section{Proof of Lemma \ref{lem:numberoflevels}}\label{appendixlevels}

\begin{proof}
For this purpose, we assume $\Lambda=a\mathbb{Z}^n$ and $\Lambda'=b\mathbb{Z}^n$ where $a, b$ are scaling parameters to be estimated. We note that for all partition chains in  \cite{forney6}, this is always possible: if the bottom lattice does not take the form of $b\mathbb{Z}^n$, one may simply further extend the partition chain (which will lead to an upper bound on $r$).

We firstly note that the flatness factor $\epsilon_{\Lambda}(\sigma)$ can be made arbitrarily small by scaling down the top lattice $\Lambda$. To see this, we recall that $\epsilon_{\Lambda}(\sigma_e)\leq [1+\epsilon_{\Lambda_0}(\sigma_e)]^n-1$ \cite[Lemma 3]{LingBel13} where $\Lambda_0= a \mathbb{Z}$ for the afore-mentioned scaling factor $a$.


Let $\Lambda_0^*=\frac{1}{a}\mathbb{Z}$ be the dual lattice of $\Lambda_0$. By \cite[Corollary 1]{cong2}, we have
\begin{eqnarray}\label{eqn:epsilon1Z}
\begin{aligned}
\epsilon_{\Lambda_0}(\sigma)&=\Theta_{\Lambda_0^*}(2\pi \sigma^2)-1 \\
&=\sum_{\lambda \in \Lambda_0^*}\exp(-2\pi^2\sigma^2|\lambda|^2 )-1\\
&=2\sum_{\lambda \in \frac{1}{a}\mathbb{N}} \exp(-2\pi^2\sigma^2|\lambda|^2 ) \\
&\leq\frac{2\exp(-2 \pi^2\sigma^2\frac{1}{a^2})}{1-\exp(-2 \pi^2\sigma^2\frac{3}{a^2})}\\
&\leq4\exp(-2 \pi^2\sigma^2\frac{1}{a^2}) \quad \text{for sufficiently small } a.
\end{aligned}
\end{eqnarray}
Therefore, letting $\frac{1}{a}=\sqrt{\frac{N}{2\pi^2 \sigma^2}}$, we have $\epsilon_{\Lambda_0}(\sigma)=O(e^{-{N}})$ and hence $\epsilon_{\Lambda}(\sigma)=O(e^{-{N}})$ for fixed $n$.

Secondly, by the union bound, the error probability of the bottom lattice $\Lambda'$ is upper-bounded by
\begin{eqnarray}
P_{e}(\Lambda',\sigma^{2})
\leq nQ\left(\frac{b}{2\sigma}\right) \leq n e^{-\frac{b^2}{8\sigma^{2}}} \notag
\end{eqnarray}
where we apply the Chernoff bound on the Q-function.
We can obtain
\begin{eqnarray}
P_{e}(\Lambda',\sigma^{2})=O(e^{-{N}})
\notag\
\end{eqnarray}
by choosing $b=\sqrt{8\sigma^2 N}$ for fixed $n$.

For a binary lattice partition, we have $(b/a)^n=2^{r}$. Thus, we conclude that
\begin{eqnarray}
r = n \log\left(\frac{b}{a}\right)= n \log \left(\frac{2}{\pi} N\right) \leq n\log N = O(\log N). \notag\
\end{eqnarray}
\end{proof}

\section{Proof of Lemma \ref{lemma:degraded}}\label{Appendixdegraded}

\begin{proof}
By the self-similarity of the lattice partition chain, we can scale a $\Lambda_{\ell-1}/\Lambda_{\ell}$ channel to a $\Lambda_{\ell}/\Lambda_{\ell+1}$ channel by multiplying the output of a $\Lambda_{\ell-1}/\Lambda_{\ell}$ channel with $T$. Since $T=\alpha V$ for some scale factor $\alpha>0$ and orthogonal matrix $V$, the Gaussian noise for each dimension is still independent of each other and the noise variance per dimension is increased after the scaling. Therefore, a $\Lambda_{\ell-1}/\Lambda_{\ell}$ channel is stochastically equivalent to a $\Lambda_{\ell}/\Lambda_{\ell+1}$ channel with a larger Gaussian noise variance per dimension. For our design examples, a $\mathbb{Z}/2\mathbb{Z}$ channel with Gaussian noise variance $\sigma^2$ is equivalent to a $2\mathbb{Z}/4\mathbb{Z}$ channel with Gaussian noise variance $4\sigma^2$, and a $\mathbb{Z}^{2}/R\mathbb{Z}^{2}$ channel with noise variance $\sigma^2$ per dimension is equivalent to a $R\mathbb{Z}^{2}/2\mathbb{Z}^{2}$ channel with noise variance $2\sigma^2$ per dimension. Then our task is to prove that a $\Lambda_{\ell}/\Lambda_{\ell+1}$ channel with noise variance $\sigma_2^2$ is degraded with respect to a $\Lambda_{\ell}/\Lambda_{\ell+1}$ channel with noise variance $\sigma_1^2$ if $\sigma_1^2 \leq \sigma_2^2$.

To see the channel degradation, we construct an intermediate channel with input in $\mathcal{R}(\Lambda_{\ell+1})$ and a mod-$\Lambda_{\ell+1}$ operation at the receiver's front end. The noise variance of this mod-$\Lambda_{\ell+1}$ channel is given by $\sigma_2^2-\sigma_1^2$ per dimension. By the property $[X+Y] \mod \Lambda_{\ell+1}= \Big[X \mod \Lambda_{\ell+1}+Y \Big] \mod \Lambda_{\ell+1}$, we can find that the concatenated channel that consists of a $\Lambda_\ell/\Lambda_{\ell+1}$ channel with noise variance $\sigma_1^2$ followed by the mentioned intermediate channel is stochastically
equivalent to a $\Lambda_\ell/\Lambda_{\ell+1}$ channel with noise variance $\sigma_2^2$, in the sense that the channel transition probability density functions of the two channels for any given input and output are equivalent. This relationship is depicted in Fig. \ref{fig:degradedproof}.
According to Definition \ref{deft:degradation}, the proof is completed.

\begin{figure}[H]
    \centering
    \includegraphics[width=10cm]{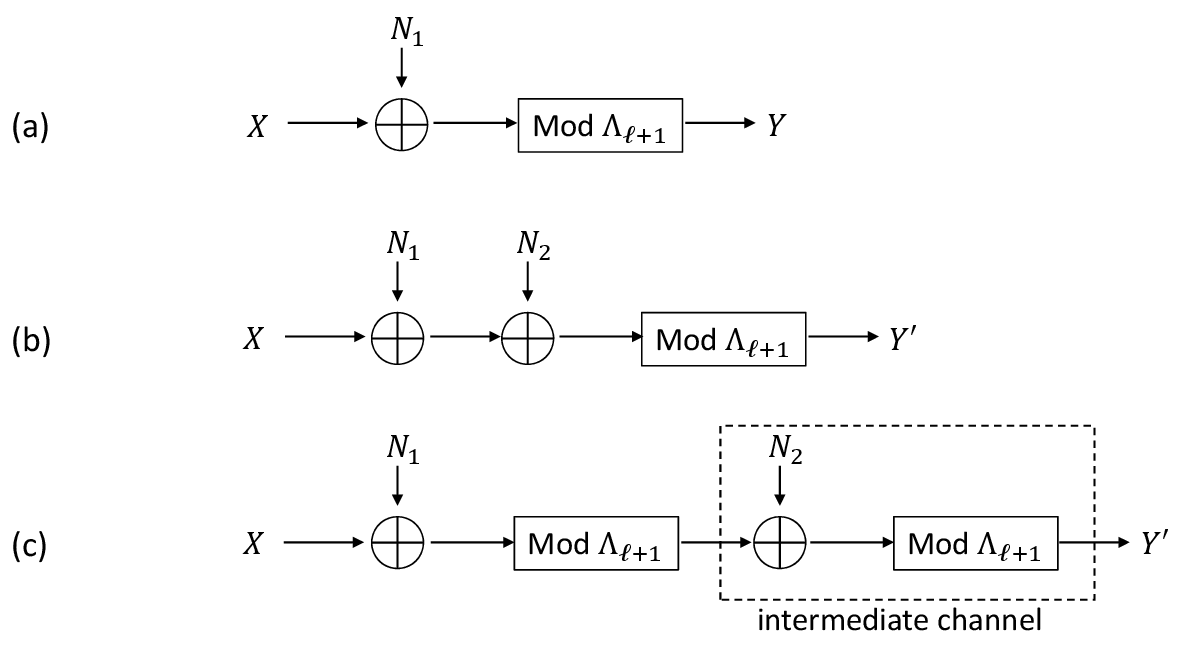}
    \caption{Let $X \in \mathcal{R}(\Lambda_{\ell+1})$ denote the channel input. Let $N_1$ and $N_2$ denote two independent additive Gaussian noise with variances $\sigma_1^2$ and $\sigma_2^2-\sigma_1^2$, respectively. Clearly, the two $\Lambda_{\ell}/\Lambda_{\ell+1}$ channels with noise variances $\sigma_1^2$ and $\sigma_2^2$ can be described by channel (a) and (b), respectively. By the property of modulo operation, channel (b) is equivalent to channel (c), which is a concatenated channel made by concatenating channel (a) with an intermediate mod-$\Lambda_{\ell+1}$ channel.}
    \label{fig:degradedproof}
\end{figure}

\end{proof}

\section{Proof of Lemma \ref{lem:Iloss}}\label{AppendixIloss}
\begin{proof}
For convenience we consider a one-dimensional partition chain $\mathbb{Z}/2\mathbb{Z}/\cdots$. The proof can be extended to the multi-dimensional case by sandwiching the partition in $\mathbb{Z}^n/2\mathbb{Z}^n/\cdots$, which reduces to the one-dimensional case.

For level $r$, the selected coset $\mathcal{A}_r$ can be written as $x_1+\cdot\cdot\cdot 2^{r-1}x_r+2^r\mathbb{Z}$. Clearly, $\mathcal{A}_r$ is a subset of $\mathcal{A}_{r-1}$. Let $\lambda_1$ and $\lambda_2$ denote the two lattice points with smallest norm in set $\mathcal{A}_{r-1}$. Without loss of generality, we assume $\lambda_1 \leq 0 \leq \lambda_2$ and $|\lambda_1| \leq |\lambda_2|$. Observe that $\lambda_2-\lambda_1=2^{r-1}$.
Assume $2^{r-1}= 3T\sigma_s$, and $T=\delta \log N$ for some positive constant $\delta$, then $\lambda_1$ and $\lambda_2$ cannot be in the interval $[-T\sigma_s, T\sigma_s]$ simultaneously. We consider two cases.

Case I: If the two points are both outside of $[-T\sigma_s, T\sigma_s]$, then we have
\begin{eqnarray}
\begin{aligned}
P(\mathcal{A}_{r-1})=\frac{f_{\sigma_s}(\mathcal{A}_{r-1}(x_{1:r-1}))}{f_{\sigma_s}(\mathbb{Z})}&<\frac{\frac{1}{\sqrt{2\pi} \sigma_s}\sum\limits_{x\in2^{r-1}\mathbb{Z}}\exp(-\frac{(x+\lambda_1)^2}{2\sigma_s^2})}{\frac{1}{\sqrt{2\pi} \sigma_s}} \\ \notag\
&\leq 2 \sum\limits_{x\in2^{r-1}\mathbb{Z}_-}\exp\left(-\frac{(x+\lambda_1)^2}{2\sigma_s^2}\right) \\
&\leq 2 \sum\limits_{x\in2^{r-1}\mathbb{Z}_-}\exp\left(-\frac{x^2+\lambda_1^2}{2\sigma_s^2}\right) \\
&\stackrel{(a)}\leq 2 \exp\left(-\frac{\lambda_1^2}{2\sigma_s^2}\right)\sum\limits_{n\in\mathbb{Z}_-} \exp\left(n\frac{(2^{r-1})^2}{2\sigma_s^2}\right) \\
&\leq 2 \frac{\exp (-\frac{\lambda_1^2}{2\sigma_s^2})}{1-\exp(-\frac{(2^{r-1})^2}{2\sigma_s^2})} \leq 2 \frac{\exp (-\frac{T^2}{2})}{1-\exp(-\frac{9T^2}{2})},
\end{aligned}
\end{eqnarray}
where $\mathbb{Z}_-$ represents all non-positive integers and trivial bound $n \leq n^2$ for $n\in \mathbb{Z}$ is applied in step (a). This means $P(\mathcal{A}_{r-1})$ roughly scales as $\frac{1}{N^{\log N}}$, so $P(\mathcal{A}_{r-1}) = O(\frac{1}{N^c})$ for any constant $c>0$.

Case II: The point $\lambda_1$ is in the interval $[-T\sigma_s, T\sigma_s]$ while $\lambda_2$ lies outside. Without loss of generality, we assume that the two cosets corresponding to $x_r=0$ and $x_r=1$ are $\lambda_1+2^r \mathbb{Z}$ and $\lambda_2+2^r \mathbb{Z}$, respectively. Then we have
\begin{eqnarray}
\begin{aligned}
\frac{P(x_r=0|x_{1:r-1})}{P(x_r=1|x_{1:r-1})}&=\frac{\sum\limits_{x\in2^r\mathbb{Z}} \exp(-\frac{(x+\lambda_1)^2}{2\sigma_s^2})}{\sum\limits_{x\in2^r\mathbb{Z}} \exp(-\frac{(x+\lambda_2)^2}{2\sigma_s^2})} \\ \notag\
&\geq \frac{\exp (-\frac{\lambda_1^2}{2\sigma_s^2})}{2\sum\limits_{x\in2^r\mathbb{Z}_+} \exp (-\frac{(x+\lambda_2)^2}{2\sigma_s^2})} \\
&\geq \frac{\exp(-\frac{\lambda_1^2}{2\sigma_s^2})}{2\cdot\exp(-\frac{\lambda_2^2}{2\sigma_s^2})}  \left(1-\exp\left(-\frac{2^{2r}}{2\sigma_s^2}\right)\right),
\end{aligned}
\end{eqnarray}
where $\mathbb{Z}_+$ represents all non-negative integers. Since $\lambda_2-\lambda_1=2^{r-1}=3T\sigma_s$ and $\lambda_2+\lambda_1 \geq T\sigma_s$, for any $T>1$, we can obtain
\begin{eqnarray}
\begin{aligned}
\frac{P(x_r=0|x_{1:r-1})}{P(x_r=1|x_{1:r-1})} &\geq \frac{1}{2}\exp\left(\frac{3}{2}T^2\right)(1-\exp(-18T^2)) \\ \notag\
&\geq \frac{1}{4}\exp\left(\frac{3}{2}T^2\right)= \frac{1}{4}\exp\left(\frac{3}{2}\delta^2 \log^2 N\right).
\end{aligned}
\end{eqnarray}
Assuming that $\frac{1}{4}\exp(\frac{3}{2}\delta^2 \log^2 N)=M$, we can get $P(x_r=0|x_{1:r-1})\geq \frac{M}{M+1}$ and $P(x_r=1|x_{1:r-1})\leq \frac{1}{M+1}$. Then we have,
\begin{eqnarray}
I(Y;X_{r}|X_{1:r-1}) \leq H(X_{r}|X_{1:r-1}) \leq h_2\left(\frac{1}{M+1}\right), \notag\
\end{eqnarray}
where $h_2(p)=p\text{log}(\frac{1}{p})+(1-p)\text{log}(\frac{1}{1-p})$ denotes the binary entropy function. By the relationship $\text{ln}(x) \leq \frac{x-1}{\sqrt{x}}$ when $x \geq 1$, we finally have
\begin{eqnarray}
I(Y;X_{r}|X_{1:r-1}) \leq \log(e)\left(\frac{1}{\sqrt{M}}+\frac{1}{M}\right)=\log(e)\left(\frac{2}{\exp(\delta_12^{2r})}+ \frac{4}{\exp(\delta_22^{2r})}\right), \notag\
\end{eqnarray}
where $\delta_1$ and $\delta_2$ are two positive constants. Therefore, there exists $r= O(\log \log N)$ such that $I(Y;X_{r}|X_{1:r-1})\rightarrow 0$ as $N$ increases, and $\sum_{\ell \geq r} I(Y;X_{\ell}|X_{1:\ell-1}) = O(\frac{1}{N})$.

To see this, let $\frac{2}{\exp(\delta_12^{2r})}=\frac{1}{N^c}$ for any constant $c>2$. From this we derive $r=\frac{1}{2}\log\frac{\log(2N^c)}{\delta_1}= O(\log\log N)$. Then for sufficiently large $N$, $I(Y;X_{r}|X_{1:r-1}) \leq \log(e)\frac{2}{N^c}$ and
\begin{eqnarray*}
  \sum_{\ell \geq r} I(Y;X_{\ell}|X_{1:\ell-1})  &\leq& \sum_{n \geq N} \log(e)\frac{2}{n^c} \\
   &\leq& \log(e)\frac{2}{N} \sum_{n \geq N} \frac{1}{n^{c-1}} \\
   &\leq&  \log(e)\frac{2}{N} \sum_{n \geq 1} \frac{1}{n^{c-1}} \\
   &\leq&  \log(e)\frac{2}{N} \zeta(c-1)
\end{eqnarray*}
where $\zeta(x)$ denotes the Riemann zeta function, which converges for any real $x>1$.

Finally, applying the total probability theorem to both cases and noting that Case I also happens with probability $O(\frac{1}{N^c})$ for any $c>2$, we conclude that $I(Y;X_{r}|X_{1:r-1}) = O(\frac{1}{N^c})$ for $c>2$, hence $\sum_{\ell \geq r} I(Y;X_{\ell}|X_{1:\ell-1}) = O(\frac{1}{N})$.

\end{proof}

\section{Proof of Theorem \ref{theorem:codingtheorem}} \label{appendix0}
\begin{proof}
Let $\mathcal{E}_i$ denote the set of pairs of $u^{1:N}$ and $y^{1:N}$ such that decoding error occurs at the $i$-th bit, then the block decoding error event is given by $\mathcal{E} \equiv \bigcup_{i \in \mathcal{I}} \mathcal{E}_i$. According to our encoding scheme, each codeword $u^{1:N}$ appears with probability
\begin{equation}
2^{-(|\mathcal{I}|+|\mathcal{F}|)} \prod_{i \in \mathcal{S}} P_{U^i|U^{1:i-1}}(u^i|u^{1:i-1}). \notag\
\end{equation}
Then the expectation of decoding error probability over all random mapping is expressed as
\begin{eqnarray}
\begin{aligned}
E[P_e]=\sum_{u^{1:N},y^{1:N}} &2^{-(|\mathcal{I}|+|\mathcal{F}|)} (\prod_{i \in \mathcal{S}} P_{U^i|U^{1:i-1}}(u^i|u^{1:i-1})) \\ \notag\
&\cdot P_{Y^{1:N}|U^{1:N}}(y^{1:N}|u^{1:N}) \mathds{1}[(u^{1:N},y^{1:N})\in \mathcal{E}].
\end{aligned}
\end{eqnarray}
Now we define the probability distribution $Q_{U^{1:N},Y^{1:N}}$ as
\begin{eqnarray}
Q_{U^{1:N},Y^{1:N}}(u^{1:N},y^{1:N})=2^{-(|\mathcal{I}|+|\mathcal{F}|)} (\prod_{i \in \mathcal{S}} P_{U^i|U^{1:i-1}}(u^i|u^{1:i-1}))P_{Y^{1:N}|U^{1:N}}(y^{1:N}|u^{1:N}). \notag\
\end{eqnarray}
Then the variational distance between $Q_{U^{1:N},Y^{1:N}}$ and $P_{U^{1:N},Y^{1:N}}$ can be bounded as
{\allowdisplaybreaks\begin{eqnarray}\label{eqn:Errstep1}
\begin{aligned}
&\hspace{-2em}2\|Q_{U^{1:N},Y^{1:N}}-P_{U^{1:N},Y^{1:N}}\|= \sum_{u^{1:N},y^{1:N}} |Q(u^{1:N},y^{1:N})-P(u^{1:N},y^{1:N})|\\
&\stackrel{(a)}=\sum_{u^{1:N},y^{1:N}}|\sum_i(Q(u^i|u^{1:i-1})-P(u^i|u^{1:i-1}))(\prod_{j=1}^{i-1}P(u^i|u^{1:i-1}))(\prod_{j=i+1}^{N}Q(u^i|u^{1:i-1}))Q(y^{1:N}|u^{1:N})|\\
&\leq\sum_{i\in\mathcal{I}\cup\mathcal{F}}\sum_{u^{1:N},y^{1:N}}|Q(u^i|u^{1:i-1})-P(u^i|u^{1:i-1})|(\prod_{j=1}^{i-1}P(u^i|u^{1:i-1}))(\prod_{j=i+1}^{N}Q(u^i|u^{1:i-1}))Q(y^{1:N}|u^{1:N})\\
&=\sum_{i\in\mathcal{I}\cup\mathcal{F}} \sum_{u^{1:i-1}} 2P(u^{1:i-1})\|Q_{U^i|U^{1:i-1}=u^{1:i-1}}-P_{U^i|U^{1:i-1}=u^{1:i-1}}\|\\
&\stackrel{(b)}\leq \sum_{i\in\mathcal{I}\cup\mathcal{F}} \sum_{u^{1:i-1}} P(u^{1:i-1}) \sqrt{2\text{ln}2 D(P_{U^i|U^{1:i-1}=u^{1:i-1}}\|Q_{U^i|U^{1:i-1}=u^{1:i-1}})}\\
&\leq\sum_{i\in\mathcal{I}\cup\mathcal{F}} \sqrt{2\text{ln}2 \sum_{u^{1;i-1}}P(u^{1:i-1}) D(P_{U^i|U^{1:i-1}=u^{1:i-1}}\|Q_{U^i|U^{1:i-1}=u^{1:i-1}})}\\
&\leq\sum_{i\in\mathcal{I}\cup\mathcal{F}} \sqrt{2\text{ln}2 D(P_{U^i|U^{1:i-1}}||Q_{U^i|U^{1:i-1}})}\\
&\leq\sum_{i\in\mathcal{I}} \sqrt{2\text{ln}2(1-H(U^i|U^{1:i-1}))}+\sum_{i\in\mathcal{F}} \sqrt{2\text{ln}2(1-H(U^i|U^{1:i-1}))}\\
&\leq\sum_{i\in\mathcal{I}} \sqrt{2\text{ln}2(1-Z(U^i|U^{1:i-1})^2)}+\sum_{i\in\mathcal{F}} \sqrt{2\text{ln}2(1-Z(U^i|U^{1:i-1},Y^{1:N})^2)}\\
&\leq2N\sqrt{4\text{ln}2\cdot2^{-N^\beta}},
\end{aligned}
\end{eqnarray}}
where equality $(a)$ follows from \cite[Equation (56)]{aspolarcodes} and $Q(y^{1:N}|u^{1:N})=P(y^{1:N}|u^{1:N})$. $D(\cdot||\cdot)$ in the inequality $(b)$ is the relative entropy, and this inequality holds because of the Pinsker's inequality.
Then we have
\begin{eqnarray}\label{eqn:Errstep2}
\begin{aligned}
E[P_e]&=Q_{U^{1:N},Y^{1^N}}(\mathcal{E})\\
&\leq\|Q_{U^{1:N},Y^{1:N}}-P_{U^{1:N},Y^{1:N}}\|+P_{U^{1:N},Y^{1:N}}(\mathcal{E})\\
&\leq\|Q_{U^{1:N},Y^{1:N}}-P_{U^{1:N},Y^{1:N}}\|+\sum_{i\in \mathcal{I}} P_{U^{1:N},Y^{1:N}}(\mathcal{E}_i),
\end{aligned}
\end{eqnarray}
where
\begin{eqnarray}
\begin{aligned}
P_{U^{1:N},Y^{1:N}}(\mathcal{E}_i)&\leq\sum_{u^{1:N},y^{1:N}}P(u^{1;i-1},y^{1:N})P(u^i|u^{1:i-1},y^{1:N})\cdot\mathds{1}[P(u^i|u^{1:i-1},y^{1:N})\leq P(u^i\oplus1|u^{1:i-1},y^{1:N})]\\ \notag\
&\leq \sum_{u^{1:N},y^{1:N}}P(u^{1;i-1},y^{1:N})P(u^i|u^{1:i-1},y^{1:N})\sqrt{\frac{P(u^i\oplus1|u^{1:i-1},y^{1:N})}{P(u^i|u^{1:i-1},y^{1:N})}}\\
&=Z(U^i|U^{1:i-1},Y^{1:N})\leq 2^{-N^\beta}.
\end{aligned}
\end{eqnarray}
From \eqref{eqn:Errstep1} and \eqref{eqn:Errstep2}, we have $E[P_e]\leq 2N\sqrt{4\text{ln}2\cdot2^{-N^\beta}}+N2^{-N^\beta}=N2^{-N^{\beta'}}$ for any $\beta'<\beta<0.5$.
\end{proof}

\section{Proof of Theorem \ref{theorem:codingtheoremside}}\label{appendix1}
\begin{proof}
Let $\mathcal{E}_i$ denote the set of triples of $u_2^{1:N}$, $x_1^{1:N}$ and $y^{1:N}$ such that decoding error occurs at the $i$-th bit, then the block decoding error event is given by $\mathcal{E} \equiv \bigcup_{i \in \mathcal{I}} \mathcal{E}_i$. According to our encoding scheme, each codeword $u_2^{1:N}$ appears with probability
\begin{eqnarray}
2^{-(|\mathcal{I}_2|+|\mathcal{F}_{2}|)} \prod_{i \in \mathcal{S}_2} P_{U_2^i|U_2^{1:i-1},X_1^{1:N}}(u_2^i|u_2^{1:i-1},x_1^{1:N}). \notag\
\end{eqnarray}
Then the expectation of decoding error probability over all random mapping is expressed as
\begin{eqnarray}
\begin{aligned}
E[P_e]&=&\sum_{u_2^{1:N},x_1^{1:N},y^{1:N}} 2^{-(|\mathcal{I}_2|+|\mathcal{F}_2|)} (\prod_{i \in \mathcal{S}_2} P_{U_2^i|U_2^{1:i-1},X_1^{1:N}}(u_2^i|u_2^{1:i-1},x_1^{1:N})) \\ \notag\
&& \cdot P_{Y^{1:N},X_1^{1:N}|U_2^{1:N}}(y^{1:N},x_1^{1:N}|u_2^{1:N}) \mathds{1}[(u_2^{1:N},x_1^{1:N},y^{1:N})\in \mathcal{E}].
\end{aligned}
\end{eqnarray}
Now we define the probability distribution $Q_{U_2^{1:N},X_1^{1:N},Y^{1:N}}$ as
\begin{eqnarray}
\begin{aligned}
Q_{U_2^{1:N},X_1^{1:N},Y^{1:N}}(u_2^{1:N},x_1^{1:N},y^{1:N})=&2^{-(|\mathcal{I}_2|+|\mathcal{F}_{2}|)}\cdot Q_{X_1^{1:N}}(x_1^{1:N}) \\
&(\prod_{i \in \mathcal{S}_2} P_{U_2^i|U_2^{1:i-1},X_1^{1:N}}(u_2^i|u_2^{1:i-1},x_1^{1:N})) \cdot P_{Y^{1:N}|X_1^{1:N},U_2^{1:N}}(y^{1:N}|u_2^{1:N},x_1^{1:N}). \notag
\end{aligned}
\end{eqnarray}
Then the variational distance between $Q_{U_2^{1:N},X_1^{1:N},Y^{1:N}}$ and $P_{U_2^{1:N},X_1^{1:N},Y^{1:N}}$ can be bounded as
\begin{eqnarray}
\begin{aligned}
&2\|Q_{U_2^{1:N},X_1^{1:N},Y^{1:N}}-P_{U_2^{1:N},X_1^{1:N},Y^{1:N}}\|= \sum_{u_2^{1:N},x_1^{1:N},y^{1:N}} |Q(u_2^{1:N},x_1^{1:N},y^{1:N})-P(u_2^{1:N},x_1^{1:N},y^{1:N})|\\  \notag\
&=\sum_{u_2^{1:N},x_1^{1:N},y^{1:N}}|Q(u_2^{1:N}|x_1^{1:N})Q(x_1^{1:N}) Q(y^{1:N}|u_2^{1:N},x_1^{1:N})-P(u_2^{1:N}|x_1^{1:N})P(x_1^{1:N}) P(y^{1:N}|u_2^{1:N},x_1^{1:N})|\\
&\stackrel{(a)}\leq\sum_{u_2^{1:N},x_1^{1:N},y^{1:N}} |Q(u_2^{1:N}|x_1^{1:N})-P(u_2^{1:N}|x_1^{1:N})|P(x_1^{1:N})P(y^{1:N}|u_2^{1:N},x_1^{1:N})\\
&+ \sum_{u_2^{1:N},x_1^{1:N},y^{1:N}} |Q(x_1^{1:N})-P(x_1^{1:N})|Q(u_2^{1:N}|x_1^{1:N})P(y^{1:N}|u_2^{1:N},x_1^{1:N})\\
\end{aligned}
\end{eqnarray}
where inequation $(a)$ follows from \cite[Equation (56)]{aspolarcodes}, $Q(y^{1:N}|u_2^{1:N},x_1^{1:N})=P(y^{1:N}|u_2^{1:N},x_1^{1:N})$.
For the first summation, following the same fashion as the proof of Theorem \ref{theorem:codingtheorem}, we can prove
\begin{eqnarray}
\sum_{u_2^{1:N},x_1^{1:N},y^{1:N}} |Q(u_2^{1:N}|x_1^{1:N})-P(u_2^{1:N}|x_1^{1:N})|P(x_1^{1:N})P(y^{1:N}|u_2^{1:N},x_1^{1:N}) \leq2N\sqrt{4\text{ln}2\cdot2^{-N^\beta}}. \nonumber
\end{eqnarray}
According to the result of the coding scheme for level 1, we already have
\begin{eqnarray}
2\|Q_{U_1^{1:N},Y^{1:N}}-P_{U_1^{1:N},Y^{1:N}}\| \leq2N\sqrt{4\text{ln}2\cdot2^{-N^\beta}}.
\end{eqnarray}
Since we have $P_{Y^{1:N}|U_1^{1:N}}=Q_{Y^{1:N}|U_1^{1:N}}$, we can write
\begin{eqnarray}
2\|Q_{U_1^{1:N}}-P_{U_1^{1:N}}\| \leq2N\sqrt{4\text{ln}2\cdot2^{-N^\beta}}.
\end{eqnarray}
Clearly, there is a one to one mapping between $U_1^{1:N}$ and $X_1^{1:N}$, then we immediately have $2\|Q_{X_1^{1:N}}-P_{X_1^{1:N}}\| \leq2N\sqrt{4\text{ln}2\cdot2^{-N^\beta}}.$ Therefore, for the second summation,
\begin{eqnarray}
\begin{aligned}
\sum_{u_2^{1:N},x_1^{1:N},y^{1:N}} |Q(x_1^{1:N})-P(x_1^{1:N})|&Q(u_2^{1:N}|x_1^{1:N})P(y^{1:N}|u_2^{1:N},x_1^{1:N})\\
&=\sum_{x_1^{1:N}}|Q(x_1^{1:N})-P(x_1^{1:N})|\leq2N\sqrt{4\text{ln}2\cdot2^{-N^\beta}}.
\end{aligned}
\end{eqnarray}
Then we have $||Q_{U_2^{1:N}, X_1^{1:N} ,Y^{1^N}}-P_{U_2^{1:N}, X_1^{1:N} ,Y^{1^N}}||\leq4N\sqrt{4\text{ln}2\cdot2^{-N^\beta}}$, and
\begin{eqnarray}
\begin{aligned}
E[P_e]&=Q_{U_2^{1:N}, X_1^{1:N} ,Y^{1^N}}(\mathcal{E})\\
&\leq \|Q_{U_2^{1:N}, X_1^{1:N} ,Y^{1^N}}-P_{U_2^{1:N}, X_1^{1:N} ,Y^{1^N}}\|+P_{U_2^{1:N}, X_1^{1:N} ,Y^{1^N}}(\mathcal{E})\\
&\leq \|Q_{U_2^{1:N}, X_1^{1:N} ,Y^{1^N}}-P_{U_2^{1:N}, X_1^{1:N} ,Y^{1^N}}\|+\sum_{i\in \mathcal{I}} P_{U_2^{1:N}, X_1^{1:N} ,Y^{1^N}}(\mathcal{E}_i),
\end{aligned}
\end{eqnarray}
The rest part of the proof follows the same fashion of the proof of Theorem \ref{theorem:codingtheorem}. Finally we have $E[P_e]\leq N2^{-N^{\beta'}}$ for any $\beta'<\beta<0.5$.
\end{proof}


%




\bibliographystyle{IEEEtran}
\bibliography{yanfei}

%








\end{document}